\tikzset{cross/.style={cross out, draw=black, minimum size=2*(#1-\pgflinewidth), inner sep=0pt, outer sep=0pt}, cross/.default={1pt}}
\tikzset{cross/.style={cross out, draw=black, minimum size=2*(#1-\pgflinewidth), inner sep=0pt, outer sep=0pt}, cross/.default={1pt}}
\definecolor{myPurple}{rgb}{0.5,0.1,0.6}
\definecolor{myOrange}{rgb}{1.0,0.5,0.0}
\definecolor{myRed}{rgb}{1.0,0.0,0.0}
\definecolor{myGreen}{rgb}{0.0,0.5,0.0}
\definecolor{LatexBlue}{rgb}{0.211765,0.227451,0.666667}
\definecolor{myBlue}{rgb}{0.0,0.0,1.0}
\definecolor{myBlack}{rgb}{0.0,0.0,0.0}
\definecolor{myGray}{rgb}{0.3,0.3,0.3}
\theoremstyle{plain}
\newtheorem{theorem}{Theorem}[section]
\newtheorem*{theorem*}{Theorem}
\newtheorem{proposition}[theorem]{Proposition}
\newtheorem*{proposition*}{Proposition}
\newtheorem{lemma}[theorem]{Lemma}
\newtheorem{corollary}[theorem]{Corollary}
\theoremstyle{definition}
\newtheorem{definition}[theorem]{Definition}
\newenvironment{example}
  {\pushQED{\qed}\examplex}
  {\popQED\endexamplex}
\newenvironment{remark}
  {\pushQED{\qed}\remarkx}
  {\popQED\endremarkx}
\newcommand{\tensor}[1]{{\mathfrak{#1}}}
\DeclareMathOperator{\Map}{Map}
\newcommand{\longhookrightarrow}{\lhook\joinrel\relbar\joinrel\rightarrow}
\def\g{\mathfrak{g}}
\def\id{\textup{id}}
\def\k{\mathfrak{k}}
\def\L{\mathcal{L}}
\def\T{\mathcal{T}}
\def\CS{\mathrm{CS}}
\newcommand{\lau}[1]{(\kern-.2em( #1 )\kern-.2em)}
\def\CC{\mathbb{C}}
\def\CP{\mathbb{C}P^1}
\def\RR{\mathbb{R}}
\def\ZZ{\mathbb{Z}}
\def\M{\mathcal M}
\def\ii{{\rm i}}
\def\1{\tensor{1}}
\def\2{\tensor{2}}
\def\3{\tensor{3}}
\def\4{\tensor{4}}
\numberwithin{equation}{section}
\begin{document}

\title[Homotopical analysis of 4d CS and integrable field theories]{Homotopical analysis of 4d Chern-Simons theory\\[1mm]
and integrable field theories}

\author{Marco Benini}
\address{Dipartimento di Matematica, Universit\`{a} di Genova, Via Dodecaneso 35, 16146 Genova, Italy, and INFN, Sezione di Genova, Via Dodecaneso 33, 16146 Genova, Italy}
\email{benini@dima.unige.it}
\author{Alexander Schenkel}
\address{School of Mathematical Sciences, University of Nottingham, University Park, Nottingham NG7 2RD, United Kingdom}
\email{alexander.schenkel@nottingham.ac.uk}
\author{Beno\^{\i}t Vicedo}
\address{Department of Mathematics, University of York, York YO10 5DD, United Kingdom}
\email{benoit.vicedo@gmail.com}

\begin{abstract}
This paper provides a detailed study of $4$-dimensional Chern-Simons theory on $\mathbb{R}^2\times \mathbb{C}P^1$ for an arbitrary meromorphic $1$-form $\omega$ on $\mathbb{C}P^1$. Using techniques from homotopy theory, the behaviour under finite gauge transformations of a suitably regularised version of the action proposed by Costello and Yamazaki is investigated. Its gauge invariance is related to boundary conditions on the surface defects located at the poles of $\omega$ that are determined by isotropic Lie subalgebras of a certain defect Lie algebra. The groupoid of fields satisfying such a boundary condition is proved to be equivalent to a groupoid that implements the boundary condition through a homotopy pullback, leading to the appearance of edge modes. The latter perspective is used to clarify how integrable field theories arise from $4$-dimensional Chern-Simons theory.
\end{abstract}

\maketitle

\setcounter{tocdepth}{1}
\tableofcontents

\input{epsf}

\section{Introduction}\label{sec: intro}
Integrable field theories in $2$ dimensions are characterised by the existence of 
an on-shell flat connection that depends meromorphically on an auxiliary Riemann surface, 
typically the Riemann sphere $\CP$. Such a \emph{Lax connection} is often found by some 
clever guesswork, hence its origin is usually rather mysterious.

More recently, new approaches have been developed that provide very interesting
algebraic and/or geometric explanations for the origin of Lax connections.
From an algebraic perspective, $2$-dimensional classical integrable field 
theories can be described in the Hamiltonian formalism as particular 
representations of Gaudin models associated with affine Kac-Moody algebras \cite{Vicedo:2017cge}.
From a geometric perspective, it was realised by Costello and Yamazaki \cite{Costello:2019tri} 
that classical integrable field theories on a $2$-dimensional manifold $\Sigma$ arise as specific 
solutions to a $4$-dimensional generalisation of Chern-Simons theory, see also 
\cite{Costello:2013zra, Costello:2013sla, Witten:2016spx, Costello:2017dso, Costello:2018gyb}
for earlier works on this subject and \cite{Vicedo:2019dej} for a relation to affine Gaudin models.
The Lagrangian of the latter theory is given by $\omega \wedge \CS(A)$, where $\omega$ is a 
(fixed) meromorphic $1$-form on $\CP$ and $\CS(A)$ is the Chern-Simons $3$-form for a 
$\g$-valued $1$-form $A$ living on the product manifold $X = \Sigma \times C$, where $C$ is
the Riemann sphere with the zeroes of $\omega$ removed to allow $A$ to have singularities there.
In this approach, different integrable field theories on $\Sigma$ are obtained from
different choices of meromorphic $1$-forms $\omega$ together with suitable 
boundary conditions on the surface defects $\Sigma \times \{ x \}\subset 
\Sigma \times \CP$  located at the poles $x$ of $\omega$.
In particular, the equations of motion for the $\g$-valued $1$-form $A$ in the bulk, i.e. away from the poles of $\omega$, admit meromorphic solutions with poles at the zeroes of $\omega$, which correspond to the Lax connection of the integrable field theory.

\medskip

The goal of the present paper is twofold. First, we provide a detailed and rigorous
study of the $4$-dimensional Chern-Simons action of \cite{Costello:2019tri}, 
its invariance under \emph{finite} gauge transformations, and the structure of 
boundary conditions on the surface defects. 
For this we consider an arbitrary meromorphic $1$-form $\omega$ on $\CP$, 
with an arbitrary finite set of poles $\bm z \subset \CP$ with each pole 
$x \in \bm z$ having an arbitrary order $n_x \in \ZZ_{\geq 1}$, 
which generalises considerably the cases of simple and double poles 
studied previously, see e.g.\ \cite{Costello:2019tri, Delduc:2019whp}. 
(We would like to emphasise that, in the presence of higher order poles, 
the $4$-dimensional Chern-Simons Lagrangian has to be regularised as in \eqref{reg 4dCS action}
in order to be locally integrable near each surface defect.)
After a series of technical preparations in \S\ref{sec: simple pole case} and 
\S\ref{sec: higher poles}, our main result is Theorem \ref{cor gauge invariance}, where we 
prove that the regularised $4$-dimensional Chern-Simons action
defines a gauge invariant function on the groupoid $\mathfrak{F}_{\rm bc}(X)$ 
of bulk fields $A$ and their gauge transformations $g: A\to\null^g A$,
both subject to certain boundary conditions on the surface defects, cf.\ \eqref{Fbc def}.
The boundary conditions we consider are determined by a choice of Lie subalgebra $\k \subset \g^{\widehat{\bm z}}$
of the Lie algebra $\g^{\widehat{\bm z}}$ of the product of jet groups $G^{\widehat{\bm z}} = \prod_{x\in{\bm z}} J^{n_x-1}G$,
where $n_x\geq 1$ is the order of the pole $x\in{\bm z}$ of $\omega$,
that is isotropic with respect to a non-degenerate symmetric 
invariant bilinear form $\langle\!\langle \cdot, \cdot \rangle\!\rangle_{\omega}^{}$ 
defined in terms of $\omega$. We note in passing that the appearance of jet groups has
also been observed before in examples of conformal field theories, 
see \cite{Babichenko:2012uq} and \cite{Quella:2020uhk}.

\medskip

The second goal of this paper is to clarify the passage from 
$4$-dimensional Chern-Simons theory to $2$-dimensional integrable field theories
that was proposed in \cite{Costello:2019tri}; see also \cite{Delduc:2019whp}
for some previous clarifications. The crucial new ingredient in our approach
is Theorem \ref{thm: equiv Fbc to F}, which proves that the groupoid
$\mathfrak{F}_{\rm bc}(X)$ of bulk fields with boundary conditions 
in \eqref{Fbc def} is equivalent to the groupoid $\mathfrak{F}(X)$ in \eqref{FM def}
whose objects are compatible pairs $(A,h)$ consisting of 
a bulk field $A$ and an \emph{edge mode} $h : \Sigma \to G^{\widehat{\bm z}}$ on
$\Sigma$ with values in the product of jet groups $G^{\widehat{\bm z}} = \prod_{x\in{\bm z}} J^{n_x-1}G$.
The groupoid $\mathfrak{F}(X)$ arises naturally by implementing the boundary conditions
on the surface defects by a homotopy pullback \eqref{homotopy pullback} in the model category
of groupoids, cf.\ \cite{HomEdgeModes}. Using this equivalence of groupoids, we can
transfer the regularised $4$-dimensional Chern-Simons action \eqref{reg 4dCS action}
to a gauge invariant action $S_\omega^{\rm ext}$ 
on the groupoid $\mathfrak{F}(X)$, whose explicit form \eqref{Action edge modes}
justifies the interpretation of the edge mode $h : \Sigma \to G^{\widehat{\bm z}}$
as the field content of a field theory on $\Sigma$. 

The passage to a $2$-dimensional integrable field theory consists
of finding a specific solution $A=\L$ to the bulk equation of motion determined 
by \eqref{Action edge modes} that qualifies as a Lax connection. Specifically, we introduce a
subgroupoid $\mathfrak{F}_{\rm Lax}(X)$ of $\mathfrak{F}(X)$ whose objects
are compatible pairs $(\L, h)$, where the bulk field $\L$ is meromorphic with poles
at the zeroes of $\omega$ on account of the bulk equation of motion 
and is \emph{admissible} in the sense that the defect equation of
motion can be lifted to a flatness condition for $\L$ on the whole of $X$, cf.\ \eqref{F_lax groupoid}.
We also introduce in \eqref{F_2d groupoid} a groupoid $\mathfrak{F}_{\rm 2d}(\Sigma)$ for the integrable field theory itself,
whose objects consist only of an edge mode $h : \Sigma \to G^{\widehat{\bm z}}$.
We prove in Corollary \ref{cor: projection functor equivalence} that the forgetful 
functor $\mathfrak{F}_{\rm Lax}(X) \to \mathfrak{F}_{\rm 2d}(\Sigma)$
is an equivalence of groupoids if and only if, for each $h : \Sigma \to G^{\widehat{\bm z}}$,
there exists a unique connection $\L(h)$ such that the pair $(\L(h), h)$ belongs to $\mathfrak{F}_{\rm Lax}(X)$.
In this case one is able to transfer the action on $\mathfrak{F}(X)$ all the way down
to $\mathfrak{F}_{\rm 2d}(\Sigma)$ to obtain the action for an integrable field
theory on $\Sigma$ whose Lax connection is $\L(h)$.
Unique solutions $\L$ to the compatibility condition on the pair $(\L,h)$ have 
been shown to exist in the case of single and double poles in \cite{Costello:2019tri, Delduc:2019whp}. 
We do not address the issue of solvability of this condition in the general setting of the present work.

\medskip

Let us briefly outline the content of this paper. 
In Section \ref{sec: simple pole case} we study $4$-dimensional 
Chern-Simons theory and its gauge transformations for simple poles in $\omega$. 
This is generalised in Section \ref{sec: higher poles} to the case of general poles. 
In Section \ref{sec: boundary conditions} we link gauge invariance of 
the action to suitable boundary conditions and realise that an equivalent 
description involving also edge modes can be obtained. This equivalent 
perspective is used in Section \ref{sec: passage to IFT} to explain how 
integrable field theories emerge from 4d Chern-Simons theory as particular partial solutions.

\medskip\medskip

\paragraph{\bf Notations and conventions:}

Let $G$ be a simply connected matrix Lie group over $\CC$ and let $\g$ denote its Lie algebra. 
We fix a non-degenerate invariant symmetric bilinear form $\langle \cdot, \cdot \rangle : \g \times \g \to \CC$.

Let $\omega$ be a meromorphic 1-form on $\CP$. We denote by $\bm \zeta \subset \CP$ its finite subset 
of zeroes and by $\bm z \subset \CP$ its finite subset of poles. We shall assume that $\omega$ has at 
least one zero, namely $|\bm \zeta| \geq 1$. This implies that $\omega$ has at least three poles 
(counting multiplicities) and so, in particular, $|\bm z| \geq 1$.

Let $\Sigma \coloneqq \RR^2$ and $C \coloneqq \CP \setminus \bm \zeta$. 
We consider the $4$-dimensional manifold
\begin{equation*}
X \coloneqq \Sigma \times C.
\end{equation*}
We fix a global holomorphic coordinate $z : C \to \CC$ on $C$, 
which exists because it is assumed that $|\bm \zeta| \geq 1$.
We can represent the 1-form $\omega$ in this coordinate as
\begin{equation} \label{omega higher poles pre}
\omega = \sum_{x \in \bm z} \sum_{p=0}^{n_x - 1} \frac{k^x_p \,dz}{(z - x)^{p+1}}, 
\end{equation}
where $k^x_p \in \CC$, for each $p = 0, \ldots, n_x - 1$, 
and $n_x \in \ZZ_{\geq 1}$ is the order of the pole $x \in \bm z$.
By a slight abuse of notation, we shall denote by $\omega$ also the pullback along 
the projection $p_C : X \to C$ of the restriction of $\omega$ to $C$.

Using the Cartesian product structure of $X$ and the complex structure on $C$, 
we obtain a triple grading on the vector space of differential forms 
\begin{equation}\label{triple graded forms}
\Omega^\bullet(X) = \bigoplus_{r=0}^2 \bigoplus_{s,\bar{s}=0}^{1}
\Omega^{r,s,\bar{s}}(X)
\end{equation}
and the corresponding decomposition of the de Rham differential 
as $d_X = d_\Sigma + \partial + \bar \partial$.
To simplify notation, we often denote $d_X$ simply by $d$.

\subsubsection*{Acknowledgements}
B.V.\ would like to thank S.\ Lacroix and M.\ Magro for useful discussions. 
Furthermore, we are grateful to S.\ Bunk for suggesting 
the averaging construction mentioned in Remark \ref{minimal}. 
M.B.\ gratefully acknowledges the financial support of the 
National Group of Mathematical Physics GNFM-INdAM (Italy). 
A.S.\ gratefully acknowledges the financial support of 
the Royal Society (UK) through a Royal Society University 
Research Fellowship (UF150099), a Research Grant (RG160517) 
and two Enhancement Awards (RGF\textbackslash EA\textbackslash 180270 and RGF\textbackslash EA\textbackslash 201051).

\section{Simple poles in $\omega$} \label{sec: simple pole case}
To begin with, we shall assume in this section that all the poles of $\omega$ are simple, 
i.e.\ we take $n_x = 1$ for all $x \in \bm z$. The case with higher order poles in 
$\omega$ will require a regularisation of the action, which we shall return to in \S\ref{sec: higher poles}.

\subsection{Action} \label{sec: action}
Consider the $4$-dimensional Chern-Simons action \cite{Costello:2019tri}
\begin{equation} \label{4dCS action}
S_\omega(A) = \frac{\ii}{4 \pi} \int_X \omega \wedge \CS(A),
\end{equation}
where $A\in\Omega^1(X,\g)$ is a smooth $\g$-valued 1-form on $X$ and $\CS(A) \coloneqq \langle A, dA + 
\mbox{\small $\frac{1}{3}$} [A , A] \rangle \in\Omega^3(X)$ is the Chern-Simons 3-form.

Since $\omega$ is the pullback along $p_C : X \to C$ of a meromorphic 1-form on $C$ 
with poles in $\bm z$, it is singular on the disjoint union of surface defects
\begin{equation} \label{defect simple poles}
D \coloneqq \Sigma \times \bm z = \bigsqcup_{x \in \bm z} \Sigma_x,
\end{equation}
where $\Sigma_x \coloneqq \Sigma \times \{ x \}$ for every pole $x \in \bm z$. Later we shall 
make use of the embeddings of the individual surface defects $\Sigma_x$, for each $x \in \bm z$,
and of the disjoint union $D$, which we denote respectively by
\begin{equation} \label{defect embed}
\iota_x : \Sigma_x \longhookrightarrow X, \qquad
\bm \iota : D \longhookrightarrow X.
\end{equation}

The following lemma shows that the 4-form $\omega \wedge \CS(A) \in \Omega^4(X \setminus D)$ 
is locally integrable near $D$.
\begin{lemma} \label{lem: simple poles}
For any $\eta \in \Omega^3(X)$, the $4$-form $\omega \wedge \eta \in \Omega^4(X \setminus D)$ is locally 
integrable near the surface defect $\Sigma_x$ associated with any simple pole $x \in \bm z$ of $\omega$.
\end{lemma}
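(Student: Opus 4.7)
The statement is local near the defect, so the plan is to pick an arbitrary point $p \in \Sigma_x$ and exhibit a compact coordinate neighborhood $K \subset X$ of $p$ on which $\int_{K \setminus D} |\omega \wedge \eta|$ is finite. I would take $K = K_\Sigma \times K_C$, where $K_\Sigma \subset \Sigma$ is a compact neighborhood of the projection of $p$ to $\Sigma$ and $K_C \subset C$ is a closed disc around $x$ small enough to exclude all other poles of $\omega$. Using the expansion \eqref{omega higher poles pre} with $n_x = 1$, one can decompose
\begin{equation*}
\omega|_K = \frac{k^x_0\, dz}{z-x} + \omega_{\mathrm{reg}},
\end{equation*}
where $\omega_{\mathrm{reg}}$ is holomorphic, hence smooth, on $K_C$.

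The contribution $\omega_{\mathrm{reg}} \wedge \eta$ is a smooth $4$-form on $K$ and is therefore integrable on the compact set $K$. The work is to verify local integrability of $\frac{dz}{z-x} \wedge \eta$. Writing $\eta$ in local coordinates $(\sigma^1,\sigma^2,z,\bar z)$ as a sum of smooth coefficients times three-fold wedge products of $d\sigma^1, d\sigma^2, dz, d\bar z$, and using $dz \wedge dz = 0$, the only component surviving the wedge with $dz$ is the one of the form $\eta_{\sigma^1\sigma^2\bar z}(\sigma,z,\bar z)\, d\sigma^1 \wedge d\sigma^2 \wedge d\bar z$, whose coefficient is smooth and therefore bounded on $K$.

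The remaining factor is the singular $2$-form $\tfrac{1}{z-x}\, dz \wedge d\bar z$ on $K_C \setminus \{x\}$. Passing to polar coordinates $z - x = r e^{\ii\theta}$, a direct computation yields
\begin{equation*}
\frac{1}{z-x}\, dz \wedge d\bar z \;=\; -2\ii\, e^{-\ii\theta}\, dr \wedge d\theta,
\end{equation*}
which is manifestly bounded on $K_C$ (the Jacobian of the polar change of variables absorbs the $1/|z-x|$ singularity). Combining this with boundedness of $\eta_{\sigma^1\sigma^2\bar z}$ on $K$, one obtains an estimate $|\omega \wedge \eta| \leq C\, |dr \wedge d\theta \wedge d\sigma^1 \wedge d\sigma^2|$ on $K \setminus D$ for some constant $C > 0$. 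Since $K_\Sigma \times \{ r \leq \epsilon \}$ has finite Lebesgue measure, the desired local integrability follows.

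This proof is essentially a direct computation; the only real point to take care of is the polar-coordinate calculation that converts the pointwise singularity at $z = x$ into a bounded density, and keeping the expansion of $\eta$ organised enough that only the one relevant coefficient appears after wedging with $dz$.
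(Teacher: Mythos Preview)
Your proof is correct and follows essentially the same approach as the paper's: decompose $\omega$ into its simple-pole part plus a regular remainder, observe that wedging with $dz$ kills all components of $\eta$ except the one proportional to $d\sigma^1 \wedge d\sigma^2 \wedge d\bar z$, and then use polar coordinates to see that $\tfrac{1}{z-x}\, dz \wedge d\bar z = -2\ii\, e^{-\ii\theta}\, dr \wedge d\theta$ is bounded. The paper's write-up is more compressed (it packages the surviving component as $\eta_{\bar z} \in \Omega^{2,0,0}(X)$ with $\omega \wedge \eta = \omega \wedge d\bar z \wedge \eta_{\bar z}$), but the argument is the same.
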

\begin{proof}
We can write $\eta = \eta_{\bar z} \wedge d\bar z + \eta_z \wedge dz$, 
where $\eta_{\bar z} \in \Omega^{2, 0,0}(X)$ and $\eta_z \in \Omega^2(X)$. 
Then $\omega \wedge \eta = \omega \wedge d\bar z \wedge \eta_{\bar z}$.
Since $x$ is a simple pole of $\omega$, we can write $\omega = \frac{k^x_0}{z - x} dz + \widetilde{\omega}$,
where the meromorphic 1-form $\widetilde{\omega}$ on $C$ is regular at $x$. In terms of polar coordinates 
$z = x + r e^{\ii \theta}$ we then have $\omega \wedge d\bar z = 
- 2 \ii k^x_0 e^{-\ii \theta} dr \wedge d\theta + \widetilde{\omega} \wedge d\bar z$, 
which is locally integrable near $x$ and hence so is $\omega \wedge \eta$ near $\Sigma_x \subset X$.
\end{proof}

\subsection{Gauge transformations} \label{sec: gauge inv action}
Consider the left action of the group $C^\infty(X, G)$ on $\Omega^1(X, \g)$ defined by
\begin{align} \label{gauge transf}
C^\infty(X, G) \times \Omega^1(X, \g) &\longrightarrow \Omega^1(X, \g),\\
(g, A) &\longmapsto \null^g A \coloneqq - d g g^{-1} + g A g^{-1}. \notag
\end{align}
Under a gauge transformation $g : A \rightarrow \null^g A$, the action \eqref{4dCS action} transforms as
\begin{equation} \label{S4d gauge transf}
S_\omega(\null^g A) = S_\omega(A) + \frac{\ii}{4 \pi} \int_X \omega \wedge d \langle g^{-1} d g , A \rangle + \frac{\ii}{4 \pi} \int_X \omega \wedge g^\ast \chi_G,
\end{equation}
where $\chi_G \coloneqq \frac{1}{6} \langle \theta_G, [\theta_G , \theta_G] \rangle\in \Omega^3(G)$ is 
the Cartan 3-form on $G$ and $\theta_G\in \Omega^1(G,\g)$ denotes the left Maurer-Cartan form on $G$, so that 
$g^\ast \chi_G = \frac{1}{6} \langle g^{-1} d g, [g^{-1} d g , g^{-1} d g] \rangle$. 

\medskip

Define the \emph{defect group} $G^{\bm z}$ and its Lie algebra $\g^{\bm z}$ as
\begin{equation*}
G^{\bm z} \coloneqq \prod_{x \in \bm z} G, \qquad
\g^{\bm z} \coloneqq \prod_{x \in \bm z} \g.
\end{equation*}
We endow $\g^{\bm z}$ with the non-degenerate invariant symmetric bilinear form
\begin{equation} \label{bilinear form gD}
\langle\!\langle \cdot, \cdot \rangle\!\rangle_{\omega}^{} : \g^{\bm z} \times \g^{\bm z} \longrightarrow \CC, \qquad
\langle\!\langle X, Y \rangle\!\rangle_{\omega}^{} \coloneqq \sum_{x \in \bm z} k^x_0 \, \langle X_x, Y_x \rangle,
\end{equation}
for every $X = (X_x)_{x \in \bm z},~ Y = (Y_x)_{x \in \bm z} \in \g^{\bm z}$,
where $k^x_0\in\CC$ is the residue of $\omega$ at $x \in \bm z$.
For $\g$-valued 1-forms on $D$ and smooth $G$-valued maps on $D$, we have the isomorphisms
\begin{subequations} 
\begin{align}
 \Omega^1(D, \g) &\cong \prod_{x \in \bm z} \Omega^1(\Sigma_x, \g) \cong \Omega^1(\Sigma, \g^{\bm z}),\\
 C^\infty(D, G) &\cong \prod_{x \in \bm z} C^\infty(\Sigma_x, G) \cong C^\infty(\Sigma, G^{\bm z}).
\end{align}
\end{subequations}
The pullbacks by the second embedding in \eqref{defect embed} of $\g$-valued $1$-forms on $X$ 
and of smooth $G$-valued maps on $X$ can therefore be thought of as maps
\begin{equation*}
\bm \iota^\ast : \Omega^1(X, \g) \longrightarrow \Omega^1(\Sigma, \g^{\bm z}), \qquad
\bm \iota^\ast : C^\infty(X, G) \longrightarrow C^\infty(\Sigma, G^{\bm z}).
\end{equation*}

\begin{lemma} \label{lem: CP formula}
For any $\eta \in \Omega^2(X)$, we have
\begin{equation*}
\int_X \omega \wedge d \eta = 2 \pi \ii \sum_{x \in \bm z} k^x_0 \int_{\Sigma_x} \iota_x^\ast \eta.
\end{equation*}
\end{lemma}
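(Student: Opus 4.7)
The plan is to prove this as a distributional Stokes' theorem, realising the simple poles of $\omega$ as delta functions supported on the surface defects $\Sigma_x$. Concretely, since $\omega$ is a meromorphic $(1,0)$-form on the one-complex-dimensional manifold $C$, it satisfies $d\omega = 0$ on $X \setminus D$ (holomorphy of $\omega$ away from its poles, plus the fact that $(2,0)$- and $(0,2)$-forms on $C$ vanish for dimensional reasons). Consequently, on $X \setminus D$ one has the identity
\begin{equation*}
d(\omega \wedge \eta) = -\, \omega \wedge d\eta.
\end{equation*}

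The first step is to cut out a tubular neighbourhood of $D$. For each $x \in \bm z$, let $B_\epsilon(x) = \{ z \in C : |z - x| < \epsilon \}$ (with $\epsilon$ small enough that these discs are disjoint and contained in $C$), and put $N_\epsilon \coloneqq \Sigma \times \bigsqcup_{x \in \bm z} B_\epsilon(x)$. On the manifold with boundary $X \setminus N_\epsilon$, the form $\omega$ is smooth and I can apply Stokes' theorem to $\omega \wedge \eta$ (assuming, as is customary, that $\eta$ has compact support in $\Sigma$-direction so that the integrals converge; alternatively, the statement is to be read as an identity of distributions tested against compactly supported forms). Taking into account the outward-normal orientation convention, this yields
\begin{equation*}
-\int_{X \setminus N_\epsilon} \omega \wedge d\eta \;=\; \int_{X \setminus N_\epsilon} d(\omega \wedge \eta) \;=\; -\sum_{x \in \bm z} \int_{\Sigma \times \partial B_\epsilon(x)} \omega \wedge \eta .
\end{equation*}

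The second step is to take $\epsilon \to 0$. The left-hand side converges to $\int_X \omega \wedge d\eta$ by dominated convergence, using local integrability of $\omega \wedge d\eta$, which is an instance of Lemma \ref{lem: simple poles} applied to the $3$-form $d\eta$. For the boundary term, write $\omega = \frac{k^x_0}{z-x} dz + \widetilde{\omega}_x$ with $\widetilde{\omega}_x$ holomorphic in a neighbourhood of $x$, and parametrise $\partial B_\epsilon(x)$ by $z = x + \epsilon e^{\ii\theta}$. Then $\frac{dz}{z-x}$ restricts to $\ii\, d\theta$ on $\partial B_\epsilon(x)$, while $\widetilde{\omega}_x$ gives a contribution of order $\epsilon$. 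Decomposing $\eta$ according to the triple grading \eqref{triple graded forms}, only the piece $\eta^{2,0,0}$ survives: contributions containing $dz$ in $\eta$ produce $dz \wedge dz = 0$ after wedging with $\omega$, and contributions containing $d\bar z$ in $\eta$ give $dz \wedge d\bar z$, which vanishes upon restriction to $\partial B_\epsilon(x)$ (since $dr = 0$ there). Thus
\begin{equation*}
\int_{\Sigma \times \partial B_\epsilon(x)} \omega \wedge \eta \;=\; \ii\, k^x_0 \int_{\Sigma} \int_0^{2\pi} \eta^{2,0,0}\bigl(\,\cdot\,,\, x + \epsilon e^{\ii\theta}\bigr)\, d\theta \;+\; O(\epsilon),
\end{equation*}
which tends to $2\pi\ii\, k^x_0 \int_{\Sigma_x} \iota_x^\ast \eta$ as $\epsilon \to 0$ (using continuity of $\eta$ and the fact that $\iota_x^\ast \eta = \iota_x^\ast \eta^{2,0,0}$). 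Assembling the two sides gives the claimed formula.

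The main subtlety I anticipate is the careful bookkeeping of the triple grading in the boundary integral to see that only $\eta^{2,0,0}$ contributes (together with a sign check from the induced boundary orientation on $X \setminus N_\epsilon$); every other ingredient is a standard application of Stokes' theorem combined with the residue calculation for a simple pole.
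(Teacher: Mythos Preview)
Your proof is correct and follows a closely related but distinct route from the paper's. The paper first decomposes the de~Rham differential as $d = d_\Sigma + \partial + \bar\partial$, drops the $\partial$-contribution since $\omega$ is of type $(0,1,0)$, disposes of the $d_\Sigma$-contribution via Stokes' theorem on $\Sigma$, and then invokes the Cauchy--Pompeiu formula as a black box for the remaining $\int_X \omega \wedge \bar\partial\eta$. You instead excise tubular neighbourhoods of the defects and apply Stokes' theorem directly on the four-manifold $X \setminus N_\epsilon$, computing the boundary circle integrals by an explicit residue argument. In effect you are reproving the relevant instance of Cauchy--Pompeiu by hand. The paper's approach is shorter because it outsources the residue computation; yours is more self-contained and makes the mechanism (simple poles becoming $\delta$-functions on the defects) completely explicit. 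Your observation that only $\eta^{2,0,0}$ contributes to the boundary integral---because $dz \wedge dz = 0$ kills the $(\ast,1,\ast)$-pieces and $dz \wedge d\bar z$ pulls back to zero on the circle---is the correct bookkeeping, and corresponds in the paper's argument to the fact that only the $\bar\partial$-part of $d\eta$ survives after wedging with $\omega$ and integrating. The orientation/sign check you flag is indeed the only place requiring care; once that is pinned down the two arguments agree.
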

\begin{proof}
Recalling our notations and conventions at the end 
of Section \ref{sec: intro}, we have
\begin{equation*}
\int_X \omega \wedge d \eta = 
\int_X \omega\wedge (d_\Sigma + \bar\partial) \eta =\int_X \omega \wedge \bar\partial \eta  - \int_X d_\Sigma (\omega \wedge \eta) ,
\end{equation*}
where we used the decomposition of the de Rham differential
$d \eta = d_\Sigma \eta + \partial \eta + \bar\partial \eta$ 
and the fact that $\omega$ is the pullback along
$p_C : X \to C$ of a meromorphic 1-form on $C$, hence $\omega \wedge \partial \eta = 0$,
and $d_\Sigma (\omega \wedge \eta) = - \omega \wedge d_\Sigma \eta$.
The second term in the equation displayed above vanishes by Stokes' theorem on $\Sigma$. The result now follows by the Cauchy-Pompeiu integral formula.
\end{proof}

\begin{proposition} \label{prop: kinetic term}
For any $g \in C^\infty(X, G)$ and $A \in \Omega^1(X, \g)$, we have
\begin{equation*}
\int_X \omega \wedge d \langle g^{-1} d g , A \rangle = 2 \pi \ii \int_\Sigma \big\langle{\mkern-5mu}\big\langle (\bm \iota^\ast g)^{-1} d_\Sigma (\bm \iota^\ast g) , \bm \iota^\ast A \big\rangle{\mkern-5mu}\big\rangle_{\omega}^{}.
\end{equation*}
\end{proposition}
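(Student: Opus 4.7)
The plan is to apply Lemma \ref{lem: CP formula} to the smooth 2-form $\eta \coloneqq \langle g^{-1} dg , A \rangle \in \Omega^2(X)$. Since $g$ is smooth on $X$, the Maurer-Cartan pullback $g^{-1} dg$ is a smooth $\g$-valued 1-form, so $\eta$ and hence $d\eta$ are smooth; local integrability of $\omega \wedge d\eta$ near the defects then follows from Lemma \ref{lem: simple poles}. Applying the lemma immediately gives
\begin{equation*}
\int_X \omega \wedge d \langle g^{-1} dg , A \rangle = 2 \pi \ii \sum_{x \in \bm z} k^x_0 \int_{\Sigma_x} \iota_x^\ast \langle g^{-1} dg , A \rangle.
\end{equation*}

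The next step is to simplify the pullback along each $\iota_x$. Since $\iota_x$ is smooth, pullback commutes with the exterior differential and with the algebraic pairing $\langle \cdot, \cdot \rangle$, so
\begin{equation*}
\iota_x^\ast \langle g^{-1} dg , A \rangle = \langle (\iota_x^\ast g)^{-1} d(\iota_x^\ast g) , \iota_x^\ast A \rangle.
\end{equation*}
Because $\Sigma_x = \Sigma \times \{x\}$ is a slice at a fixed point of $C$, the restrictions $\iota_x^\ast dz = 0 = \iota_x^\ast d\bar z$, and hence $d(\iota_x^\ast g) = d_\Sigma (\iota_x^\ast g)$ on $\Sigma_x \cong \Sigma$.

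Finally, I would repackage the sum over $x \in \bm z$ using the defect bilinear form. By definition \eqref{bilinear form gD}, the weights $k^x_0$ are precisely what turns the $\bm z$-indexed sum of $\langle \cdot, \cdot \rangle$ pairings into the single pairing $\langle\!\langle \cdot, \cdot \rangle\!\rangle_{\omega}^{}$ on $\g^{\bm z}$. Combining this with the identifications $\Omega^1(D, \g) \cong \Omega^1(\Sigma, \g^{\bm z})$ and $C^\infty(D, G) \cong C^\infty(\Sigma, G^{\bm z})$, the right-hand side becomes
\begin{equation*}
2 \pi \ii \int_\Sigma \big\langle{\mkern-5mu}\big\langle (\bm \iota^\ast g)^{-1} d_\Sigma (\bm \iota^\ast g) , \bm \iota^\ast A \big\rangle{\mkern-5mu}\big\rangle_{\omega}^{},
\end{equation*}
as desired. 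There is no real obstacle here since the proposition is essentially a corollary of Lemma \ref{lem: CP formula}; the only point requiring mild care is the naturality argument showing that pullback to $\Sigma_x$ converts $g^{-1} dg$ into $(\iota_x^\ast g)^{-1} d_\Sigma (\iota_x^\ast g)$, and that the residue weights $k^x_0$ assemble precisely into $\langle\!\langle \cdot, \cdot \rangle\!\rangle_{\omega}^{}$.
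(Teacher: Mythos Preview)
Your proof is correct and follows essentially the same approach as the paper: apply Lemma \ref{lem: CP formula} to $\eta = \langle g^{-1} dg, A\rangle$, use naturality of pullback to rewrite $\iota_x^\ast \eta$ in terms of $(\iota_x^\ast g)^{-1} d_\Sigma(\iota_x^\ast g)$ and $\iota_x^\ast A$, and then recognise the residue-weighted sum as the bilinear form $\langle\!\langle \cdot,\cdot\rangle\!\rangle_\omega$. The paper's version is simply more terse, omitting the explicit naturality justification and the reference to Lemma \ref{lem: simple poles}.
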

\begin{proof}
Applying Lemma \ref{lem: CP formula}, we obtain
\begin{equation*}
\int_X \omega \wedge d \langle g^{-1} d g , A \rangle 
= 2 \pi \ii \sum_{x \in \bm z} \int_{\Sigma_x} k^x_0 \, \big\langle (\iota_x^\ast g)^{-1} d_{\Sigma_x} (\iota_x^\ast g) , \iota_x^\ast A \big\rangle.
\end{equation*}
The result follows by definition \eqref{bilinear form gD} of the bilinear form on $\g^{\bm z}$.
\end{proof}

By Proposition \ref{prop: kinetic term}, the second term on the right hand side of \eqref{S4d gauge transf} 
now manifestly depends only on the defect fields $\bm \iota^\ast g \in C^\infty(\Sigma, G^{\bm z})\cong C^\infty(D,G)$ 
and $\bm \iota^\ast A \in \Omega^1(\Sigma, \g^{\bm z})\cong \Omega^1(D,\g)$. We will show in Proposition 
\ref{prop: WZ depends on D} below that the same is true for the third term on the right hand side 
of \eqref{S4d gauge transf}. To prove this, we first need to introduce further notations and techniques.

\medskip

For a manifold $M$ and a closed subset $S \subset M$ with embedding $\iota : S \hookrightarrow M$, 
let $C^0_S(M, G)$ (resp.\ $C^\infty_S(M, G)$) denote the set of continuous (resp.\ smooth) 
maps $g : M \to G$ such that $\iota^\ast g = e$, where by abuse of notation $e$ denotes the constant 
map $S \to G$ to the identity element $e \in G$.

Let $I \coloneqq [0, 1] \subset \RR$ denote the closed unit interval and define the maps 
$j_t : M \hookrightarrow M \times I,~ p \mapsto (p, t)$, for every $t \in I$. 
A \emph{relative continuous} (resp.\ \emph{smooth}) \emph{homotopy} between two maps $g, g' \in C^0_S(M, G)$ 
(resp.\ $g, g' \in C^\infty_S(M, G)$) is a map $H \in C^0_{S \times I}(M \times I, G)$ (resp.\ $H \in C^\infty_{S \times I}(M \times I, G)$) 
such that
\begin{equation*}
j_0^\ast H = g, \qquad j_1^\ast H = g'.
\end{equation*}
We write $g \sim_S^{} g'$ (resp.\ $g \sim^\infty_S g'$) and say that $g$ and $g'$ are 
homotopic relative to $S$. This defines equivalence relations $\sim_S^{}$ on 
$C^0_S(M, G)$ and $\sim_S^\infty$ on $C^\infty_S(M, G)$.

\begin{lemma} \label{lem: smooth vs continuous}
The canonical map
\begin{equation*}
C^\infty_D(X, G) \big/\!\! \sim^\infty_D \;\; \longrightarrow \;\; C^0_D(X, G) \big/\!\! \sim_D^{}
\end{equation*}
is a bijection.
\end{lemma}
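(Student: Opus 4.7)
The plan is to apply the relative Whitney approximation theorem after a preliminary modification near $D$ that exploits the exponential map of $G$.

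For surjectivity, start with a continuous $g \in C^0_D(X,G)$. Choose a star-shaped open neighborhood $U' \subset \g$ of $0$ on which $\exp : U' \to U \subset G$ is a diffeomorphism onto an open neighborhood $U$ of $e$. Since $g$ is continuous with $g(D) = \{e\}$, there is an open neighborhood $V$ of $D$ in $X$ with $g(V) \subset U$, and on $V$ we can write $g|_V = \exp(\xi)$ for a unique continuous $\xi : V \to \g$ with $\xi|_D = 0$. Pick open sets $W_1 \subset \overline{W_1} \subset W_2 \subset \overline{W_2} \subset V$ containing $D$ and a smooth cutoff $\phi : X \to [0,1]$ with $\phi \equiv 0$ on $W_1$ and $\phi \equiv 1$ on $X \setminus W_2$. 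Define
\begin{equation*}
\tilde g(x) \coloneqq \begin{cases} \exp\bigl(\phi(x) \xi(x)\bigr) & x \in V, \\ g(x) & x \in X \setminus W_2, \end{cases}
\end{equation*}
which glues to a continuous map on $X$ agreeing with $g$ outside $W_2$, identically $e$ on $W_1$, and homotopic to $g$ relative to $D$ via $H_s(x) \coloneqq \exp\bigl((\phi(x) + s(1-\phi(x))) \xi(x)\bigr)$ on $V$ extended by $g$ elsewhere. Since $\tilde g$ is smooth (constantly $e$) on the neighborhood $W_1$ of $D$, the relative Whitney approximation theorem (e.g.\ Lee, \emph{Introduction to Smooth Manifolds}, Theorem~6.26) yields a smooth $g' : X \to G$ with $g'|_D = e$ and $g' \sim_D \tilde g \sim_D g$, proving surjectivity.

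For injectivity, suppose $g_0, g_1 \in C^\infty_D(X,G)$ and $H \in C^0_{D\times I}(X \times I, G)$ is a continuous relative homotopy with $j_0^\ast H = g_0$ and $j_1^\ast H = g_1$. Regard $H$ as a continuous map on the smooth manifold-with-corners $X \times I$, and set $A \coloneqq (D \times I) \cup (X \times \{0,1\})$, a closed subset. Apply the same exponential-cutoff trick as above to modify $H$ rel $A$ into a map $\tilde H$ that is identically $e$ on a neighborhood of $D \times I$; combine this with a straightforward collar reparametrisation in the $I$-direction to also arrange that $\tilde H$ agrees with the smooth maps $g_0$ and $g_1$ on $X \times [0,\varepsilon)$ and $X \times (1-\varepsilon, 1]$ respectively. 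Then $\tilde H$ is smooth on a neighborhood of the whole closed set $A$, and a second application of relative Whitney approximation produces a smooth $H' \in C^\infty_A(X \times I, G)$ homotopic to $\tilde H$ rel $A$. By construction $H'$ restricts to $e$ on $D \times I$, to $g_0$ at $t=0$, and to $g_1$ at $t=1$, so it is the desired smooth homotopy $g_0 \sim^\infty_D g_1$.

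The main technical point is the injectivity step: one must arrange that the preliminary modification $\tilde H$ simultaneously (i) equals $e$ near $D \times I$ (so that it is smooth there), (ii) equals the prescribed smooth endpoints $g_0, g_1$ near $X \times \{0,1\}$ (so that the relative Whitney step does not alter the boundary values), and (iii) remains homotopic to $H$ relative to $A$. Once these three conditions are reconciled via a nested choice of neighborhoods and cutoffs, the remainder is a routine invocation of the relative approximation theorem; the surjectivity step is the same argument with $A$ replaced by $D \subset X$.
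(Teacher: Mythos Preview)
Your argument is correct in spirit, but it is considerably more elaborate than the paper's proof, which dispatches both directions in two sentences by direct citation of standard results. For surjectivity the paper simply observes that $\bm\iota^\ast g = e$ is smooth on $D$ (a constant map trivially extends smoothly) and invokes \cite[Theorem~6.26]{LeeBook} directly; your preliminary exponential--cutoff modification making $g$ constant on a \emph{neighbourhood} of $D$ is unnecessary, since Lee's relative Whitney theorem only requires smoothness \emph{on} the closed set $A$, not on a neighbourhood of it. For injectivity the paper cites \cite[Theorem~6.29]{LeeBook}, which states precisely that a continuous homotopy rel $A$ between smooth maps can be replaced by a smooth one; your construction with the exponential trick and collar reparametrisation is essentially a re-derivation of that theorem.

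One small point worth tightening in your injectivity step: conditions (i) and (ii) as you phrase them are incompatible near the corners $D\times\{0,1\}$, since $g_0,g_1$ need not be \emph{constantly} $e$ near $D$. What you actually achieve (and what suffices) is that $\tilde H$ is smooth on a neighbourhood of $A$: equal to $g_0,g_1$ on collar strips, and equal to $e$ near $D\times I$ away from those strips. Also, your notation $H'\in C^\infty_A(X\times I,G)$ conflicts with the paper's convention, which would mean $H'|_A=e$; you intend rather that $H'$ agrees with $H$ on $A$. These are cosmetic issues; the underlying argument is sound. The paper's route buys brevity; yours buys self-containedness, at the cost of re-proving a standard lemma.
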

\begin{proof}
Let $g, g' \in C^\infty_D(X, G)$ be such that $g \sim_D^{} g'$. By \cite[Theorem 6.29]{LeeBook},
it follows that $g \sim^\infty_D g'$. Hence, the given map is injective.

Now let $g \in C^0_D(X, G)$. Then $\bm \iota^\ast g = e$ is smooth,
so by \cite[Theorem 6.26]{LeeBook} it follows that $g \sim_D^{} g'$ for some $g' \in C^\infty_D(X, G)$. 
Hence, the given map is surjective.
\end{proof}

Recall the projection $p_C : X \to C$. For any $a \in \Sigma$, we also consider the smooth embedding
$i_a : C \hookrightarrow X,~z \mapsto (a, z)$. We have that $p_C(D) = \bm z$ and $i_a(\bm z) \subset D$.

\begin{lemma} \label{lem: contracting R2}
For any $a \in \Sigma$, the maps
\begin{equation*}
\begin{tikzcd}
C^0_D(X, G) \big/ \!\!\sim_D^{} \arrow[r, shift left = 1, "i_a^\ast"] & C^0_{\bm z}(C, G) \big/ \!\!\sim_{\bm z}^{} \arrow[l, shift left = 1, "p_C^\ast"]
\end{tikzcd}
\end{equation*}
exhibit a bijection.
\end{lemma}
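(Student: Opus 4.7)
The plan is to exhibit $i_a^\ast$ and $p_C^\ast$ as mutually inverse on equivalence classes, using the fact that $\Sigma = \RR^2$ is convex and contracts linearly to the point $a$.

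First I would verify that the maps are well defined on equivalence classes. For $h \in C^0_{\bm z}(C,G)$, the composite $p_C^\ast h$ equals $e$ on $D = \Sigma\times \bm z$ since $p_C(D) = \bm z$ and $h$ equals $e$ there; and a relative homotopy $K : C \times I \to G$ between $h$ and $h'$ pulls back to a relative homotopy $(x,t) \mapsto K(p_C(x),t)$ between $p_C^\ast h$ and $p_C^\ast h'$, which is constantly $e$ on $D \times I$. The analogous check for $i_a^\ast$ uses $i_a(\bm z) \subset D$.

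Next, one direction is immediate: because $p_C \circ i_a = \id_C$, we have $i_a^\ast \circ p_C^\ast = \id$ already at the level of maps, and in particular on equivalence classes. The content of the lemma is therefore that $p_C^\ast \circ i_a^\ast$ is the identity on $C^0_D(X,G)/\!\!\sim_D^{}$. Concretely, for $g \in C^0_D(X, G)$ one must show $g \sim_D (i_a \circ p_C)^\ast g$, where $(i_a \circ p_C)^\ast g$ is the map $(b,z) \mapsto g(a,z)$.

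To produce this relative homotopy, I would define
\begin{equation*}
H : X \times I \longrightarrow G, \qquad H\big((b,z), t\big) \coloneqq g\big((1-t) b + t a,\, z\big),
\end{equation*}
which makes sense because $\Sigma = \RR^2$ is convex. One checks directly that $j_0^\ast H = g$ and $j_1^\ast H = (i_a \circ p_C)^\ast g$. Moreover, when $(b,z) \in D$ one has $z \in \bm z$, so $((1-t)b + t a, z) \in D$ for every $t \in I$; since $g$ vanishes on $D$ (i.e.\ equals $e$), the map $H$ is constantly $e$ on $D \times I$. Hence $H$ is a relative homotopy in the required sense, proving $p_C^\ast \circ i_a^\ast = \id$ on $C^0_D(X,G)/\!\!\sim_D^{}$.

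There is no real obstacle; the argument is simply the standard observation that $C \hookrightarrow X$ is a deformation retract relative to $D$ because $\Sigma = \RR^2$ is contractible. The only point requiring care is that the linear homotopy in $\Sigma$ preserves $D$ setwise, which is what allows the homotopy to be taken relative to $D$ (rather than merely a free homotopy).
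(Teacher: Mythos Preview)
Your proof is correct and follows essentially the same approach as the paper: both use the linear contraction of $\Sigma = \RR^2$ to the point $a$ to build the relative homotopy, and both observe that this contraction preserves $D$ setwise so that the homotopy is relative to $D$. The paper phrases the homotopy as $g \circ H$ with $H : X \times I \to X$, $(p,z,t) \mapsto ((1-t)p + ta, z)$, which is exactly your map; your inclusion of the well-definedness check is handled in the paper by a separate remark preceding the proof.
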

\begin{remark}
The maps $p_C^\ast$ and $\iota_a^\ast$ are well-defined. Indeed, suppose more generally that 
$M$, $M'$ are topological spaces with closed subsets $S \subset M$, $S' \subset M'$ and corresponding embedding 
maps $\iota : S \hookrightarrow M$ and $\iota' : S' \hookrightarrow M'$. Let $f : M \to M'$ be a continuous 
map such that $f(S) \subset S'$. Then the pullback by $f$ induces a map $f^\ast : C^0_{S'}(M', G) \to C^0_{S}(M, G)$. 
Indeed, if $g \in C^0_{S'}(M', G)$ then $f^\ast g \in C^0_S(M, G)$ since
\begin{equation*}
\iota^\ast f^\ast g = (f \circ \iota)^\ast g = (\iota' \circ f|_S)^\ast g = f|_S^\ast \, \iota'^\ast g = e,
\end{equation*}
where $f|_S : S \to S'$ is the restriction of $f$ to $S \subset M$ and in the final step we 
used the fact that $\iota'^\ast g = e$. Moreover, given any relative homotopy 
$H \in C^0_{S' \times I}(M' \times I, G)$, 
we have $(f \times \id)^\ast H \in C^0_{S \times I}(M \times I, G)$ since $(\iota \times \id)^\ast (f \times \id)^\ast H = e$ 
by the same computation as above. We therefore obtain a well-defined map between the relative 
homotopy classes $f^\ast : C^0_{S'}(M', G) \big/\!\!\sim_{S'}^{} \;\to C^0_{S}(M, G)\big /\!\!\sim_S^{}$, as required.
\end{remark}
\begin{proof}[Proof of Lemma \ref{lem: contracting R2}]
Let $a \in \Sigma$. We have to show that $i_a^\ast$ and $p_C^\ast$ are inverses of each other.
Since $p_C \circ i_a = \id_C$, we have $i_a^\ast\, p_C^\ast = (p_C \circ i_a)^\ast = \id$.

Consider now $i_a \circ p_C : X \to X$. We have a continuous homotopy
\begin{equation*}
H : X \times I \longrightarrow X, \qquad
(p, z, t) \longmapsto \big( (1-t)p + t a, z \big)
\end{equation*}
between $\id_X$ and $i_a \circ p_C$.
Note that $H(D \times I) \subset D$, in other words $H \circ (\bm \iota \times \id) = \bm \iota \circ H|_{D \times I}$.
For any $g \in C^0_D(X, G)$, the continuous map
$g \circ H : X \times I \to G$
belongs to $C^0_{D \times I}(X \times I, G)$ since
\begin{equation*}
(\bm \iota \times \id)^\ast (g \circ H) = g \circ H \circ (\bm \iota \times \id) = 
g \circ \bm \iota \circ H|_{D \times I} = (\bm \iota^\ast g) \circ H|_{D \times I} = e.
\end{equation*}
In the final equality we used the fact that $\bm \iota^\ast g = e$ since $g \in C^0_D(X, G)$. 
Moreover, $j_0^\ast (g \circ H) = g$ and $j_1^\ast (g \circ H) = g \circ i_a \circ p_C = p_C^\ast i_a^\ast g$ 
so that $g \circ H$ is a relative continuous homotopy between $g$ and $p_C^\ast i_a^\ast g$, 
i.e.\ $p_C^\ast i_a^\ast g \sim_D^{} g$.
Hence $p_C^\ast i_a^\ast = \id$, as required.
\end{proof}

\begin{lemma} \label{lem: singleton}
$C^0_{\bm z}(C, G) \big/ \!\!\sim_{\bm z}^{}$ is a singleton.
\end{lemma}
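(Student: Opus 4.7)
The plan is to reduce the statement to a computation on a finite $1$-dimensional CW pair and then invoke that $G$ is simply connected.

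First I would show that $(C, \bm z)$ is homotopy equivalent, as a pair, to $(\Gamma, \bm z)$, where $\Gamma$ is a finite connected $1$-dimensional CW complex containing $\bm z$ among its $0$-cells. To do this, pick a smooth triangulation of $\CP$ whose vertex set contains $\bm \zeta \cup \bm z$, remove small open disks around the points of $\bm \zeta$ to obtain a compact surface with non-empty boundary $C' \subset C$ (which is a deformation retract of $C$, the retraction fixing $\bm z$), and finally deformation retract $C'$ rel $\bm z$ onto a $1$-dimensional spine $\Gamma \subset C'$ whose $0$-cells include $\bm z$. Pair homotopy invariance then yields a bijection
\begin{equation*}
C^0_{\bm z}(C, G)\big/\!\!\sim_{\bm z}^{} \;\cong\; C^0_{\bm z}(\Gamma, G)\big/\!\!\sim_{\bm z}^{}.
\end{equation*}

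Next I would pass to the pointed quotient $\Gamma/\bm z$. Since $(\Gamma, \bm z)$ is a CW pair, the universal property of the quotient gives $C^0_{\bm z}(\Gamma, G)\big/\!\!\sim_{\bm z}^{} \cong [\Gamma/\bm z, G]_\ast$, where the right hand side denotes pointed homotopy classes based at $e \in G$. The space $\Gamma/\bm z$ is a finite connected $1$-dimensional CW complex, i.e.\ a finite graph, so collapsing a spanning tree exhibits it as homotopy equivalent to a finite wedge of circles $\bigvee_{i=1}^N S^1$. Consequently,
\begin{equation*}
[\Gamma/\bm z, G]_\ast \;\cong\; \prod_{i=1}^N [S^1, G]_\ast \;\cong\; \pi_1(G)^N \;=\; 0,
\end{equation*}
the last equality holding because $G$ is simply connected.

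The one non-formal step is the first: producing a deformation retraction of $C$ onto $\Gamma$ that fixes $\bm z$ pointwise. This is a standard fact from surface topology---any compact orientable surface with non-empty boundary admits a $1$-dimensional spine through any prescribed finite set of interior points---but it is the only place where the geometric structure of $C$, rather than the simple-connectedness of $G$, enters the argument.
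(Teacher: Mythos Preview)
Your argument is correct. Retracting the pair $(C,\bm z)$ onto a one-dimensional spine $(\Gamma,\bm z)$, collapsing $\bm z$ to a point, and then using that any finite graph is homotopy equivalent to a wedge of circles reduces everything to $\pi_1(G)^N = 0$. The only point requiring care, which you flag yourself, is the existence of a deformation retraction of the compact surface-with-boundary $C'$ onto a spine through $\bm z$; this is indeed standard.

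The paper proceeds differently. Rather than retracting the pair, it fixes a single basepoint $x\in\bm z$ and uses the fibration
\begin{equation*}
\Map_{\bm z}(C,G)\;\longhookrightarrow\;\Map_{\{x\}}(C,G)\;\overset{i^\ast}{\longrightarrow}\;\Map_{\{x\}}(\bm z,G)
\end{equation*}
induced by the cofibration $\bm z\hookrightarrow C$, and reads off $\pi_0\big(\Map_{\bm z}(C,G)\big)$ from the long exact sequence, computing $\pi_1\big(\Map_{\{x\}}(\bm z,G)\big)\cong\pi_1(G)^{|\bm z|-1}$ and $\pi_0\big(\Map_{\{x\}}(C,G)\big)\cong\pi_1(G)^{|\bm\zeta|-1}$ separately. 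Your route is more elementary---no fibration or long exact sequence---at the cost of invoking the spine-through-prescribed-points fact from surface topology. The paper's route avoids that geometric input but needs the homotopical machinery; both ultimately rest on $\pi_1(G)=0$ and the fact that $C$ has the homotopy type of a bouquet of circles.
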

\begin{proof}
A relative continuous homotopy $H \in C^0_{\bm z \times I}(C \times I, G)$ between 
two maps $g, g' \in C^0_{\bm z}(C, G)$ is a continuous path in the mapping space
$\Map_{\bm z}(C, G)$ from $g$ to $g'$. Thus
\begin{equation*}
C^0_{\bm z}(C, G) \big/ \!\!\sim_{\bm z}^{} \; \cong \; \pi_0 \big( \Map_{\bm z}(C, G) \big).
\end{equation*}
Now fix any point $x \in \bm z$. The inclusion $i : \bm z \hookrightarrow C$ induces a continuous map
\begin{equation*}
i^\ast : \Map_{\{ x \}}(C, G) \longrightarrow \Map_{\{ x \}}(\bm z, G)
\end{equation*}
between based mapping spaces,
whose fibre over the constant map $e \in \Map_{\{ x \}}(\bm z, G)$ is $\Map_{\bm z}(C, G)$. Hence, we get a fibre sequence
\begin{equation*}
\Map_{\bm z}(C, G) \longhookrightarrow \Map_{\{ x \}}(C, G) \overset{i^\ast}\longrightarrow \Map_{\{ x \}}(\bm z, G).
\end{equation*}
Since $i : \bm z \hookrightarrow C$ is a cofibration, it follows that $i^\ast$ is a fibration 
and hence we obtain a long exact sequence of homotopy groups
\begin{equation*}
\ldots \longrightarrow \pi_1\big( \Map_{\{ x \}}(\bm z, G) \big) \longrightarrow \pi_0\big( \Map_{\bm z}(C, G) \big) \longrightarrow \pi_0\big( \Map_{\{ x \}}(C, G) \big) \longrightarrow \ldots
\end{equation*}
Observe that $\pi_1(\Map_{\{ x \}}(\bm z, G)) \cong \pi_1(G)^{|\bm z| - 1}\cong \{\ast\}$ is trivial 
since $G$ is assumed to be simply connected. To compute $\pi_0( \Map_{\{ x \}}(C, G))$,
we recall that $C=\CP\setminus {\bm \zeta}$ is topologically a 2-sphere $S^2$ with $|{\bm \zeta}|\geq 1$ punctures. 
Hence, there exists a deformation retract from $C$
to a bouquet of circles $\bigvee^{|{\bm \zeta}|-1} S^1 $, where $\vee$ denotes
the wedge sum (i.e.\ categorical coproduct) of pointed topological spaces.
It then follows that  
\begin{equation*}
\pi_0\big( \Map_{\{ x \}}(C, G)\big) \cong \pi_0 \big(\Map_{\{x\}}(S^1,G)\big)^{|{\bm \zeta}|-1} 
\cong \pi_1(G)^{|\bm\zeta| -1} \cong \{\ast\}
\end{equation*}
is trivial since $G$ is assumed to be simply connected. 
From the long exact sequence we conclude that 
$\pi_0(\Map_{\bm z}(C, G))$ is a singleton, which completes the proof.
\end{proof}

\begin{proposition} \label{prop: WZ depends on D}
The integral $\int_X \omega \wedge g^\ast \chi_G$ depends on $g \in C^\infty(X, G)$ 
only through $\bm \iota^\ast g \in C^\infty(\Sigma, G^{\bm z})$.
\end{proposition}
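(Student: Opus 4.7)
The plan is to reduce the claim to a vanishing statement via the Polyakov--Wiegmann identity, and then to combine the homotopical conclusions of Lemmas~\ref{lem: smooth vs continuous}, \ref{lem: contracting R2} and \ref{lem: singleton} with a Stokes argument tailored to the singularities of $\omega$.

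First, given $g, g' \in C^\infty(X, G)$ with $\bm\iota^\ast g = \bm\iota^\ast g'$, I would set $f \coloneqq g \, (g')^{-1} \in C^\infty_D(X, G)$ and apply the Polyakov--Wiegmann identity to $g = f g'$, yielding
\begin{equation*}
g^\ast \chi_G = f^\ast \chi_G + (g')^\ast \chi_G - d\big\langle f^{-1}df, \, dg' \, (g')^{-1} \big\rangle.
\end{equation*}
Wedging with $\omega$ and integrating over $X$, Lemma~\ref{lem: CP formula} turns the exact term into the sum
\begin{equation*}
2 \pi \ii \sum_{x \in \bm z} k^x_0 \int_{\Sigma_x} \iota_x^\ast \big\langle f^{-1}df, \, dg' \, (g')^{-1} \big\rangle,
\end{equation*}
each summand of which vanishes because $\iota_x^\ast f = e$ forces $\iota_x^\ast(f^{-1}df) = 0$. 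This reduces the proposition to showing that $\int_X \omega \wedge f^\ast \chi_G = 0$ for every $f \in C^\infty_D(X, G)$.

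By Lemmas~\ref{lem: smooth vs continuous}, \ref{lem: contracting R2} and \ref{lem: singleton}, the quotient $C^\infty_D(X, G) \big/\!\! \sim^\infty_D$ is a singleton, so every such $f$ admits a smooth relative homotopy $H \in C^\infty_{D \times I}(X \times I, G)$ with $j_0^\ast H = f$ and $j_1^\ast H = e$. Writing $H_t \coloneqq j_t^\ast H$ and noting that $e^\ast \chi_G = 0$, it is enough to establish that $t \mapsto \int_X \omega \wedge H_t^\ast \chi_G$ is constant on $I$. I would prove this homotopy invariance by a Stokes computation on $X \times I$ that accounts for the singularities of $\omega$: the $4$-form $\omega \wedge H^\ast \chi_G$, with $\omega$ regarded as the pullback from $C$, is locally integrable near $D \times I$ by an extension of Lemma~\ref{lem: simple poles}, and closed on the complement $(X \setminus D) \times I$ since $\chi_G$ is closed on $G$ and $\omega$ is meromorphic on $C$ with poles only along $\bm z$. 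Excising an $\epsilon$-tubular neighbourhood $U_\epsilon$ of $D$ in $X$, applying Stokes on $(X \setminus U_\epsilon) \times I$, and letting $\epsilon \to 0$ then yields
\begin{equation*}
\int_X \omega \wedge H_1^\ast \chi_G - \int_X \omega \wedge H_0^\ast \chi_G = - \lim_{\epsilon \to 0} \int_{\partial U_\epsilon \times I} \omega \wedge H^\ast \chi_G,
\end{equation*}
and the right hand side, analysed fibrewise by Cauchy--Pompeiu in exactly the manner of the proof of Lemma~\ref{lem: CP formula}, converges to $2\pi\ii \sum_{x \in \bm z} k^x_0 \int_{\Sigma_x \times I} (\iota_x \times \id)^\ast H^\ast \chi_G$. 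Each such defect integral vanishes because $H$ equals the constant map $e$ on $D \times I$, which completes the argument.

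The hardest part will be the analytic justification of the $\epsilon \to 0$ limit in the Stokes step, namely checking that the bulk contributions converge to the claimed difference of $H_0$ and $H_1$ integrals and that the tubular boundary term reproduces the residue expression above; the homotopical input, by contrast, is already packaged by the three preceding lemmas and only needs to be invoked.
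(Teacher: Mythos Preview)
Your proof is correct and follows the same overall strategy as the paper: reduce via Polyakov--Wiegmann to the vanishing of $\int_X \omega \wedge f^\ast\chi_G$ for $f \in C^\infty_D(X,G)$, invoke Lemmas~\ref{lem: smooth vs continuous}--\ref{lem: singleton} to obtain a relative smooth homotopy $H$ to $e$, and conclude using closedness of $\chi_G$ together with a residue computation. The differences are in packaging. For the Polyakov--Wiegmann cross term, you arrange for one factor to be $f^{-1}df$ with $\iota_x^\ast f = e$, so the defect contribution vanishes directly; the paper instead keeps the factors $g^{-1}dg$ and $h^{-1}dh$, which coincide on $D$, and kills the result via skew-symmetry of the form pairing $\langle\!\langle\alpha,\alpha\rangle\!\rangle_\omega = 0$ on $\Omega^1(\Sigma,\g^{\bm z})$.

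The more substantive difference is in the final step. Where you run Stokes on $(X\setminus U_\epsilon)\times I$ and take $\epsilon \to 0$, the paper instead performs the fibre integration over $I$ first: since $H^\ast\chi_G$ is closed on $X\times I$, one has $f^\ast\chi_G = d\big(\int_I H^\ast\chi_G\big)$ as an identity of smooth forms on $X$. Lemma~\ref{lem: CP formula} then applies \emph{as a black box} to $\eta \coloneqq \int_I H^\ast\chi_G \in \Omega^2(X)$, giving the residue sum $2\pi\ii \sum_x k^x_0 \int_{\Sigma_x\times I}(\iota_x\times\id)^\ast H^\ast\chi_G$, which vanishes because $(\iota_x\times\id)^\ast H = e$. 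This sidesteps entirely the $\epsilon\to 0$ analysis you flag as the hardest part: that work was already done once and for all inside Lemma~\ref{lem: CP formula}, and the paper simply reuses it. Your tubular-neighbourhood argument amounts to re-proving that lemma in the $5$-dimensional setting $X\times I$; it goes through, but the paper's route is more economical.
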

\begin{remark}
The present situation is to be contrasted with the usual WZ-term in the WZW model action. 
Indeed, to even write the latter down one has to extend a field $g \in C^\infty(S^2, G)$ to a field 
$\widetilde{g} \in C^\infty(B^3, G)$ on the $3$-dimensional ball $B^3$ with $\partial B^3 = S^2$. This is possible 
as $\pi_2(G) = 0$ but the extension $\widetilde{g}$ is not unique. The set of homotopy classes of smooth maps 
$\widetilde{g} \in C^\infty(B^3, G)$ with $\widetilde{g}|_{S^2} = g$ is measured by $\pi_3(G)$, which for a 
simple Lie group $G$ is given by $\pi_3(G) \cong \ZZ$. For the extensions $\widetilde{g}$ in different 
homotopy classes, the integrals $\int_{B^3} \widetilde{g}^\ast \chi_G$ differ by integer multiples of a constant.
\end{remark}
\begin{proof}[Proof of Proposition \ref{prop: WZ depends on D}]
For any $g, h \in C^\infty(X, G)$, we have the Polyakov-Wiegmann identity
\begin{equation*}
(gh^{-1})^\ast \chi_G 
= g^\ast \chi_G - h^\ast \chi_G + d \langle g^{-1} d g, h^{-1} dh \rangle.
\end{equation*}
Using Lemma \ref{lem: CP formula} and the definition of the bilinear form \eqref{bilinear form gD} on $\g^{\bm z}$, we find
\begin{align*}
\int_X \omega \wedge d\langle g^{-1} d g, h^{-1} dh \rangle &= 2 \pi \ii \sum_{x \in \bm z} \int_{\Sigma_x} k^x_0 \,\big\langle (\iota_x^\ast g)^{-1} d_{\Sigma_x}(\iota_x^\ast g), (\iota_x^\ast h)^{-1} d_{\Sigma_x}(\iota_x^\ast h) \big\rangle\\
&= 2 \pi \ii \int_\Sigma \big\langle{\mkern-5mu}\big\langle (\bm \iota^\ast g)^{-1} d_\Sigma(\bm \iota^\ast g), (\bm \iota^\ast h)^{-1} d_\Sigma(\bm \iota^\ast h) \big\rangle{\mkern-5mu}\big\rangle_{\omega}^{}.
\end{align*}
In particular, if $\bm \iota^\ast g = \bm \iota^\ast h$ then this vanishes by the skew-symmetry of the bilinear pairing 
$\langle\!\langle  \cdot, \cdot \rangle\!\rangle_{\omega}^{} : 
\Omega^1(\Sigma, \g^{\bm z}) \times \Omega^1(\Sigma, \g^{\bm z}) \to \Omega^2(\Sigma)$ on 1-forms. 
It follows that
\begin{equation} \label{PW identity}
\int_X \omega \wedge (gh^{-1})^\ast \chi_G = \int_X \omega \wedge g^\ast \chi_G - \int_X \omega \wedge h^\ast \chi_G
\end{equation}
for any $g, h \in C^\infty(X, G)$ such that $\bm \iota^\ast g = \bm \iota^\ast h$.

\medskip

The latter condition can be equivalently stated as $\bm \iota^\ast (g h^{-1}) = e$, or in other words 
$g h^{-1} \in C^\infty_D(X, G)$. By Lemmas \ref{lem: smooth vs continuous}, \ref{lem: contracting R2} and \ref{lem: singleton} 
we deduce that $C^\infty_D(X, G) \big/ \!\! \sim_D^\infty$ is a singleton.
Hence, there exists a relative smooth homotopy $H \in C^\infty_{D \times I}(X \times I, G)$
between $gh^{-1}$ and $e \in C^\infty_D(X, G)$, i.e.\ $j_0^\ast H = g h^{-1}$ and $j_1^\ast H = e$.
Then
\begin{align} \label{WZ g h inv exact}
d\bigg( \int_{I} H^\ast \chi_G \bigg) &= \int_{I} d H^\ast \chi_G = \int_{I} \big( d_{X \times I} - d_{I} \big) H^\ast \chi_G\\
&= - \int_{I} d_{I} H^\ast \chi_G = j_0^\ast H^\ast \chi_G - j_1^\ast H^\ast \chi_G
= (g h^{-1})^\ast \chi_G, \notag
\end{align}
where in the second step $d_{X \times I} = d + d_{I}$ is the differential on $\Omega^\bullet(X \times I)$.
In the third equality we used the fact that $H^\ast \chi_G \in \Omega^3(X \times I)$ is closed, 
i.e.\ $d_{X \times I} H^\ast \chi_G = 0$, and in the second last step we used Stokes' theorem. 
In the final step we used the fact that $e^\ast \chi_G = 0$.

Taking the wedge product of \eqref{WZ g h inv exact} with $\omega$ and integrating over $X$ we obtain, 
using again Lemma \ref{lem: CP formula},
\begin{align} \label{omega WZ gh inv}
\int_X \omega \wedge (gh^{-1})^\ast \chi_G &= \int_X \omega \wedge d\bigg( \int_{I} H^\ast \chi_G \bigg)\\ 
&= 2 \pi \ii \sum_{x \in \bm z} k^x_0 \int_{\Sigma_x \times I} (\iota_x \times \id)^\ast H^\ast \chi_G = 0. \notag
\end{align}
In the last equality we used the fact that $(\iota_x \times \id)^\ast H = e \in C^\infty(\Sigma_x \times I, G)$,
for every $x \in \bm z$, and again that $e^\ast \chi_G = 0$. Finally, by combining \eqref{omega WZ gh inv} 
with \eqref{PW identity}, it follows that $\int_X \omega \wedge g^\ast \chi_G = \int_X \omega \wedge h^\ast \chi_G$
for any $g, h \in C^\infty(X, G)$ such that $\bm \iota^\ast g = \bm \iota^\ast h$. This completes the proof.
\end{proof}

Recall the bilinear form $\langle\!\langle \cdot, \cdot \rangle\!\rangle_{\omega}^{}$ on the Lie algebra $\g^{\bm z}$ 
introduced in \eqref{bilinear form gD}. We let $\chi_{G^{\bm z}} \coloneqq \frac{1}{6} 
\langle\!\langle \theta_{G^{\bm z}}, [ \theta_{G^{\bm z}} , \theta_{G^{\bm z}}] \rangle\!\rangle_{\omega}^{} \in\Omega^3(G^{\bm z})$ 
denote the corresponding Cartan 3-form on $G^{\bm z}$, where $\theta_{G^{\bm z}}\in \Omega^1(G^{\bm z},\g^{\bm z})$ 
is the left Maurer-Cartan form on $G^{\bm z}$. 
Since $\Omega^1(G^{\bm z}, \g^{\bm z}) \cong \prod_{x \in \bm z} \Omega^1(G^{\bm z}, \g)$, 
we can express $\theta_{G^{\bm z}}$ as a tuple $(\theta^x_G)_{x \in \bm z}$ 
of $\g$-valued 1-forms on $G^{\bm z}$. Here, for each $x \in \bm z$, 
$\theta^x_G = \pi_x^\ast \theta_G \in \Omega^1(G^{\bm z}, \g)$ is the pullback 
of the left Maurer-Cartan form $\theta_G$ on $G$ along the canonical 
projection $\pi_x : G^{\bm z} \to G$ onto the $x$-factor of $G^{\bm z}$. 
It then also follows that $\chi_{G^{\bm z}} = \sum_{x \in \bm z} k^x_0 \, \chi^x_G$, where 
$\chi^x_G \coloneqq \frac{1}{6} \langle \theta^x_G, [\theta^x_G, \theta^x_G] \rangle = \pi_x^\ast \chi_G \in \Omega^3(G^{\bm z})$.

\begin{proposition} \label{prop: int WZ compute}
For any $g \in C^\infty(X, G)$, we have
\begin{equation*}
\int_X \omega \wedge g^\ast \chi_G = - 2 \pi \ii \int_{\Sigma \times I} \widehat{g}^\ast \chi_{G^{\bm z}} ,
\end{equation*}
where $\widehat{g} \in C^\infty(\Sigma \times I, G^{\bm z})$ is any lazy homotopy between 
$\bm \iota^\ast g \in C^\infty(\Sigma, G^{\bm z})$ and the constant map $e \in C^\infty(\Sigma, G^{\bm z})$. 
\end{proposition}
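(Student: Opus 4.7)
The plan is to reduce the bulk integral on the left hand side to a boundary contribution by extending the lazy homotopy $\widehat{g}$, viewed as a smooth map $D \times I \to G$ under the isomorphism $C^\infty(\Sigma, G^{\bm z}) \cong C^\infty(D, G)$, to a smooth global homotopy $H \in C^\infty(X \times I, G)$, and then running the same transgression argument as in \eqref{WZ g h inv exact}. Adopting the convention $j_0^\ast \widehat{g} = e$ and $j_1^\ast \widehat{g} = \bm \iota^\ast g$ (the opposite direction case is analogous up to the reversal of $t$), the smooth maps $g$ on $X \times \{1\}$ and $\widehat{g}$ on $D \times I$ agree on $D \times \{1\}$, so they determine a single smooth map on the closed subspace $(X \times \{1\}) \cup (D \times I) \subset X \times I$. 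Since $D$ is a closed embedded submanifold of $X$, this pair enjoys the smooth homotopy extension property (via a tubular neighbourhood of $D$ together with a cut-off function, applied componentwise on each $\Sigma_x$ using that the $\Sigma_x$ are pairwise disjoint), so the map extends to $H \in C^\infty(X \times I, G)$ satisfying $j_1^\ast H = g$, $(\bm \iota \times \id)^\ast H = \widehat{g}$, and $j_0^\ast H \in C^\infty_D(X, G)$.

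With $H$ at hand, the same computation as in \eqref{WZ g h inv exact} yields the pointwise identity $g^\ast \chi_G = (j_0^\ast H)^\ast \chi_G - d \int_I H^\ast \chi_G$ on $X$. Wedging with $\omega$ and integrating, the first term vanishes by Proposition \ref{prop: WZ depends on D}: the integral depends only on $\bm \iota^\ast (j_0^\ast H) = e$, and the analogous integral with the constant map $e \in C^\infty(X, G)$ in place of $j_0^\ast H$ is zero because $e^\ast \chi_G = 0$. For the remaining piece, Lemma \ref{lem: CP formula} applied to $\eta = \int_I H^\ast \chi_G$, together with Fubini, gives $-2\pi \ii \sum_{x \in \bm z} k^x_0 \int_{\Sigma_x \times I} (\iota_x \times \id)^\ast H^\ast \chi_G$. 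Using $(\iota_x \times \id)^\ast H = \pi_x \circ \widehat{g} = \widehat{g}_x$ and the expansion $\chi_{G^{\bm z}} = \sum_{x \in \bm z} k^x_0 \chi^x_G$ with $\chi^x_G = \pi_x^\ast \chi_G$, this sum reassembles under the identification $\Sigma_x \cong \Sigma$ into $-2\pi \ii \int_{\Sigma \times I} \widehat{g}^\ast \chi_{G^{\bm z}}$, which is the desired formula.

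The only non-routine step is the construction of the smooth extension $H$; everything else is a direct combination of the transgression identity, Proposition \ref{prop: WZ depends on D}, Lemma \ref{lem: CP formula}, and the bookkeeping for the bilinear form on $\g^{\bm z}$. The extension is itself a standard smooth homotopy extension argument, but one should verify it carefully in the smooth category and use the laziness of $\widehat{g}$ to guarantee smoothness of $H$ at the endpoints $t = 0, 1$.
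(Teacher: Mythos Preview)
Your proof is correct and takes a genuinely different route from the paper's. Both arguments rest on the same three ingredients --- Proposition \ref{prop: WZ depends on D}, Lemma \ref{lem: CP formula}, and the transgression identity from \eqref{WZ g h inv exact} --- but they assemble them differently. The paper builds a specific \emph{replacement} $\widetilde g \in C^\infty(X,G)$ by pulling $\widehat g$ back along the radial coordinate onto small discs $\Delta_x$ around each pole and extending by $e$; Proposition \ref{prop: WZ depends on D} then lets one trade $g$ for $\widetilde g$, and the remaining integral is computed by localising on the discs and writing down an explicit primitive for $\widetilde g^\ast\chi_G$. You instead invoke the smooth homotopy extension property for the pair $(X,D)$ to lift $\widehat g$ to a homotopy $H$ on all of $X\times I$ with $j_1^\ast H = g$, apply transgression once on $X\times I$, and use Proposition \ref{prop: WZ depends on D} only to kill the leftover $(j_0^\ast H)^\ast\chi_G$ term. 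Your route is slightly more conceptual and avoids the explicit disc construction; the paper's route is more hands-on and, as the text signals, is written precisely so that it ports verbatim to the higher-order-pole setting in \S\ref{sec: higher poles}, where the relevant group is the jet group $\widehat G$ and one wants an explicit $\widetilde g$ with controlled behaviour near the defects. One small caution: your parenthetical about reversal of $t$ glosses over the fact that the sign of $\int_{\Sigma\times I}\widehat g^\ast\chi_{G^{\bm z}}$ does flip under that reversal, so the formula fixes a direction; it is worth checking your convention against the paper's (where $\widehat g(\,\cdot\,,0)=\bm\iota^\ast g$ and $\widehat g(\,\cdot\,,1)=e$) and the orientation conventions implicit in the fibre integration used in \eqref{WZ g h inv exact}.
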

\begin{proof}
First we note that since $\Sigma = \RR^2$ is contractible and $G^{\bm z}$ is connected, as $G$ is, 
there exists a lazy homotopy $\widehat{g} \in C^\infty(\Sigma \times I, G^{\bm z})$ between 
$\bm \iota^\ast g$ and $e$, namely such that $\widehat{g}(\--, t) = \bm \iota^\ast g$ for 
$t$ near $0$ and $\widehat{g}(\--, t) = e$ for $t$ near $1$. 

Let us denote by $\Delta$ the unit disc and by $\varrho : \Delta \to I$ the radial coordinate. 
Let $\Delta_x \subset C$ be disjoint discs around each $x \in \bm z$. We then have the following isomorphism
\begin{equation*}
C^\infty\bigg( \bigsqcup_{x \in \bm z} \Sigma \times \Delta_x, G \bigg) \cong \prod_{x \in \bm z} C^\infty(\Sigma \times \Delta_x, G) \cong C^\infty(\Sigma \times \Delta, G^{\bm z}).
\end{equation*}
Consider $(\id_\Sigma \times \varrho)^\ast \widehat{g} \in C^\infty(\Sigma \times \Delta, G^{\bm z})$, 
regard it as a smooth map $\bigsqcup_{x \in \bm z} \Sigma \times \Delta_x \to G$ 
under the above isomorphism and extend the latter to the whole of $X$ by the identity $e \in G$. 
By construction, this defines a smooth map $\widetilde{g} \in C^\infty(X, G)$ such that 
$\bm \iota^\ast \widetilde{g} = \bm \iota^\ast g \in C^\infty(\Sigma, G^{\bm z})$. 
(Note that $\widetilde{g}$ is smooth because $\widehat{g}$ is lazy.)
By Proposition \ref{prop: WZ depends on D}, we deduce
\begin{equation*}
\int_X \omega \wedge g^\ast \chi_G = \int_X \omega \wedge \widetilde{g}^\ast \chi_G.
\end{equation*}
It remains to compute the integral on the right hand side. This can be done directly as in \cite[\S3.3]{Delduc:2019whp}. 
In view of generalising this computation to the higher order pole case later in \S\ref{sec: higher poles}, 
it is useful to repeat it in the present language. We find 
\begin{align*}
&\int_X \omega \wedge \widetilde{g}^\ast \chi_G = \sum_{x \in \bm z} \int_{\Sigma \times \Delta_x} \omega \wedge \widetilde{g}^\ast \chi_G = \sum_{x \in \bm z} \int_{\Sigma \times \Delta_x} \omega \wedge d \bigg( -\int_{\gamma_z} \widetilde{g}^\ast \chi_G \bigg)\\
&\quad = -2 \pi \ii \sum_{x \in \bm z} k^x_0 \int_{\Sigma_x} \int_{\gamma_x} \widetilde{g}^\ast \chi_G
= - 2 \pi \ii \sum_{x \in \bm z} k^x_0 \int_{\Sigma \times I} \widehat{g}^\ast \chi_G^x
= - 2 \pi \ii \int_{\Sigma \times I} \widehat{g}^\ast \chi_{G^{\bm z}}. 
\end{align*}
The first equality follows from noting that $\widetilde{g}^\ast \chi_G$ vanishes outside of 
$\bigsqcup_{x \in \bm z} \Sigma \times \Delta_x \subset X$. In the second step, we used the fact that 
$\widetilde{g}^\ast \chi_G$ is closed, hence exact on the contractible 
subspaces $\Sigma \times \Delta_x\subset X$. In particular, the value of an explicit primitive 
$-\int_{\gamma_z} \widetilde{g}^\ast \chi_G$ at the point $(p, z) \in \Sigma \times \Delta_x$ 
is given by the integral along a radial path $\gamma_z:I \to \Delta_x$ from $(p, z)$ to a point 
$(p,z_0)$ lying on the boundary of $\Sigma \times \Delta_x$. 
In the third equality we used Lemma \ref{lem: CP formula}. In the second last step we 
used the identification of $\Sigma_x \times \gamma_x(I)$ with $\Sigma \times I$
and that of $\widetilde{g} : \Sigma_x\times \gamma_x(I)\to G$ with $\pi_x \circ \widehat{g} : \Sigma\times I \to G$. 
The last equality follows from the identity $\chi_{G^{\bm z}} = \sum_{x \in \bm z} k^x_0 \,\chi_G^x$.
\end{proof}

\section{Higher order poles in $\omega$} \label{sec: higher poles}
We would like to extend the constructions of \S\ref{sec: simple pole case} to the case 
when the meromorphic $1$-form $\omega$ has higher order poles.
The immediate problem we face is that $\omega \wedge \CS(A)$ is not locally integrable 
around such a higher order pole $x$. We will therefore begin by introducing a regularisation 
of the action \eqref{4dCS action}. 
A closely related approach to the 
regularisation of integrals on Riemann surfaces appeared in \cite{LZ20} 
shortly after the first version of this paper became available.

\subsection{Regularised action}
Let $n \coloneqq \max\; \{ n_x \}_{x \in \bm z}$ denote the maximal order among all the 
poles of $\omega$. Consider the Weil algebra $\mathcal T^n \coloneqq \CC[\varepsilon] / (\varepsilon^n)$ 
of order $n$. (If $n_x = 1$ for all $x \in \bm z$ then $n=1$ and hence $\T^n \cong \CC$.) 

For each $\mathcal{T}^n$-valued $r$-form
$\zeta = \sum_{p=0}^{n-1} \zeta_p \otimes \varepsilon^p \in \Omega^r(X) \otimes_\CC \T^n$, 
we define the \emph{regularised wedge product} with $\omega$ (cf.\ \eqref{omega higher poles pre}) as
\begin{equation} \label{reg def}
(\omega \wedge \zeta)_{\rm reg}^{} \coloneqq \sum_{x \in \bm z} \sum_{p=0}^{n_x - 1} \frac{k^x_p \,dz}{z - x} \wedge \zeta_p \in \Omega^{r+1}(X \setminus D),
\end{equation}
where $D=\bigsqcup_{x\in{\bm z}}\Sigma_x\subset X$ is defined in \eqref{defect simple poles} as the disjoint union of the surface defects 
$\Sigma_x = \Sigma \times \{ x \}$. As a consequence of Lemma \ref{lem: simple poles}, 
we obtain
\begin{corollary}\label{cor: reg integrable}
For any $\zeta \in \Omega^3(X) \otimes_\CC \T^n$, the $4$-form $(\omega \wedge \zeta)_{\rm reg}^{} \in \Omega^4(X \setminus D)$ 
is locally integrable near $D$. 
\end{corollary}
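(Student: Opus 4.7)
The plan is to reduce the claim to the simple-pole result of Lemma \ref{lem: simple poles} by linearity, exploiting the fact that the regularised product \eqref{reg def} replaces each higher-order pole by a finite sum of simple-pole contributions.

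First, I would expand the regularised product term by term. Writing $\zeta = \sum_{p=0}^{n-1} \zeta_p \otimes \varepsilon^p$ with $\zeta_p \in \Omega^3(X)$, definition \eqref{reg def} reads
\[
(\omega \wedge \zeta)_{\rm reg}^{} = \sum_{x \in \bm z} \sum_{p=0}^{n_x - 1} \omega_{x,p} \wedge \zeta_p, \qquad \omega_{x,p} \coloneqq \frac{k^x_p\,dz}{z - x},
\]
where each $\omega_{x,p}$ is a meromorphic 1-form on $C$ whose only pole is a simple pole at $x$.

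Second, I would observe that each summand $\omega_{x,p} \wedge \zeta_p$ is exactly of the form treated by Lemma \ref{lem: simple poles}, with the meromorphic 1-form there replaced by $\omega_{x,p}$ (whose pole set is the singleton $\{x\}$). The proof of that lemma is local around the chosen pole, so it applies verbatim to $\omega_{x,p}$: in polar coordinates $z = x + r e^{\ii \theta}$ the factor $\frac{dz}{z-x} \wedge d\bar z = -2\ii e^{-\ii \theta}\, dr \wedge d\theta$ is manifestly locally integrable, and wedging with the smooth 3-form $\zeta_p$ preserves this. Hence $\omega_{x,p} \wedge \zeta_p \in \Omega^4(X \setminus \Sigma_x)$ is locally integrable near $\Sigma_x$, and is smooth (in particular locally integrable) on all of $X \setminus \Sigma_x$.

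Third, since local integrability is preserved under finite sums and $D = \bigsqcup_{x \in \bm z} \Sigma_x$ is a disjoint union of closed subsets, it suffices to verify local integrability of $(\omega \wedge \zeta)_{\rm reg}^{}$ in a neighbourhood of each $\Sigma_x$ separately. There, the summands with $x' \neq x$ are smooth, while those with $x' = x$ are locally integrable by the previous step, so the total sum is locally integrable, which completes the argument.

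I do not anticipate a genuine obstacle, as the regularisation \eqref{reg def} was tailored precisely so that only simple-pole contributions arise. The only subtlety worth articulating cleanly in the write-up is that Lemma \ref{lem: simple poles} is genuinely a local statement at each simple pole, so it is legitimate to apply it with $\omega$ replaced by each individual $\omega_{x,p}$ even though the original $\omega$ has higher-order poles.
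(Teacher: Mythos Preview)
Your proposal is correct and matches the paper's intent: the corollary is stated there simply as a consequence of Lemma~\ref{lem: simple poles}, and your write-up spells out exactly how that consequence goes, namely by expanding the regularised product as a finite sum of simple-pole terms $\frac{k^x_p\,dz}{z-x}\wedge\zeta_p$ and invoking the lemma locally at each $\Sigma_x$.
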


We have the morphism of vector spaces (or $C^\infty$-rings in the case $r=0$)
\begin{equation} \label{jet prolongation}
j_X^\ast : \Omega^r(X) \longrightarrow \Omega^r(X)\otimes_\CC \mathcal T^n, \qquad
\eta \longmapsto \sum_{p=0}^{n-1} \frac{1}{p!} \partial_z^p \eta \otimes \varepsilon^p,
\end{equation}
given by the holomorphic part of the $(n - 1)$-jet prolongation of smooth $r$-forms on 
$X$, for any $r = 0, \ldots, 4$.
The regularised wedge product \eqref{reg def} can be related 
as follows to the ordinary wedge product.
\begin{lemma} \label{lem: CSA multiple pole}
For any $\eta \in \Omega^3(X)$, we have a decomposition
\begin{equation*}
\omega \wedge \eta = (\omega \wedge j_X^\ast \eta)_{\rm reg}^{} + d\psi,
\end{equation*}
where $\psi \in \Omega^3(X \setminus D)$ is singular on $\Sigma_x$ 
for $x \in \bm z$ if $n_x > 1$ and $\psi = 0$ if $n = 1$.
\end{lemma}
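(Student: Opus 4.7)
The plan is to first observe that since $\omega$ has tridegree $(0,1,0)$ and $\partial_z^p$ preserves tridegree, any component of $\eta$ other than the $(2,0,1)$-component gives a factor $dz \wedge dz = 0$ on both sides of the claimed identity. By linearity it thus suffices to treat $\eta \in \Omega^{2,0,1}(X)$, i.e.\ $\eta = d\bar z \wedge \alpha$ with $\alpha \in \Omega^{2,0,0}(X)$. For such $\eta$ one has $d_\Sigma \eta = 0$ (by dimension of $\Sigma$) and $\bar\partial \eta = 0$ (since $\eta$ already carries a $d\bar z$), so $d\eta = \partial \eta = dz \wedge \partial_z \eta$, where $\partial_z$ denotes component-wise differentiation of coefficients.

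For each $x \in \bm z$ and each $p \geq 1$, I would then apply the Leibniz rule to $\eta/(z-x)^p$ and rearrange, obtaining the recursion
\begin{equation*}
\frac{dz \wedge \eta}{(z-x)^{p+1}} \;=\; \frac{1}{p}\,\frac{dz \wedge \partial_z \eta}{(z-x)^p} \;-\; \frac{1}{p}\, d\!\left(\frac{\eta}{(z-x)^p}\right),
\end{equation*}
which trades a pole of order $p+1$ for one of order $p$ (now acting on $\partial_z \eta$), modulo a $d$-exact term. Iterating the recursion $p$ times collapses the pole to order one and yields
\begin{equation*}
\frac{dz \wedge \eta}{(z-x)^{p+1}} \;=\; \frac{1}{p!}\,\frac{dz \wedge \partial_z^p \eta}{z-x} \;-\; d\!\left(\sum_{k=1}^{p} \frac{(k-1)!}{p!}\,\frac{\partial_z^{p-k}\eta}{(z-x)^k}\right).
\end{equation*}
Multiplying by $k^x_p$ and summing over $(x,p)$ would then produce the desired identity, with the explicit primitive
\begin{equation*}
\psi \;=\; -\sum_{x \in \bm z}\,\sum_{p=1}^{n_x - 1}\,\sum_{k=1}^{p}\frac{(k-1)!}{p!}\,\frac{k^x_p\,\partial_z^{p-k}\eta}{(z-x)^k}.
\end{equation*}

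The two qualitative claims then fall out by inspection of this $\psi$: if $n=1$, the outer sum is empty and $\psi=0$; if $n_x > 1$ for some $x$, the summand with $(p,k) = (n_x-1,1)$ has a non-zero simple pole at $z=x$, so $\psi$ is singular on $\Sigma_x$. The main obstacle is really just the combinatorial bookkeeping of the coefficients along the iteration; everything else comes down to the routine Leibniz identity together with the dimensional/tridegree check $d\eta = dz \wedge \partial_z \eta$ that lets the recursion close within $\Omega^{2,0,1}(X)$. No global or topological input is required, as the whole argument is local around each pole.
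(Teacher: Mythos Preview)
Your proof is correct and follows the same local integration-by-parts strategy as the paper, just organised iteratively (a one-step recursion lowering the pole order) rather than via the paper's single Leibniz expansion starting from $\tfrac{1}{(z-x)^{p+1}}=\tfrac{(-1)^p}{p!}\partial_z^p\tfrac{1}{z-x}$; the resulting primitives $\psi$ in fact coincide. One small refinement: to argue that $\psi$ is singular at $\Sigma_x$ when $n_x>1$, it is safer to point to the highest-order pole term $(p,k)=(n_x-1,n_x-1)$, whose coefficient $k^x_{n_x-1}/(n_x-1)$ is nonzero and cannot be cancelled, since the simple-pole contributions with $k=1$ come from all $1\le p\le n_x-1$ and could in principle combine to zero for special $\eta$.
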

\begin{proof}
We can rewrite \eqref{omega higher poles pre} as
\begin{equation*}
\omega = \sum_{x \in \bm z} \sum_{p=0}^{n_x - 1} \frac{(-1)^p}{p!} \partial_z^p \bigg( \frac{k^x_p}{z-x} \bigg) dz.
\end{equation*}
Taking the wedge product with $\eta$, it then follows from the Leibniz rule that
\begin{equation*}
\omega \wedge \eta = \sum_{x \in \bm z} \sum_{p=0}^{n_x - 1} \sum_{r=0}^p \frac{(-1)^{p-r}}{r! (p-r)!} dz \wedge \partial_z^{p-r} \bigg( \frac{k^x_p}{z-x} \partial_z^r \eta \bigg).
\end{equation*}
The terms with $r=p$ yield $(\omega \wedge j_X^\ast \eta)_{\rm reg}^{}$. All of the remaining terms in the
sum over $r$ can be written as the de Rham differential of
\begin{equation*}
\psi = \sum_{x \in \bm z} \sum_{p=0}^{n_x - 1} \sum_{r=0}^{p-1} \frac{(-1)^{p-r}}{r! (p-r)!} \partial_z^{p-1-r} \bigg( \frac{k^x_p}{z-x} \partial_z^r \eta^{2,0,1} \bigg),
\end{equation*}
where $\eta^{2,0,1} \in \Omega^{2,0,1}(X)$ denotes the $(2,0,1)$-component 
of $\eta \in \Omega^{3}(X)$ with respect to the grading in \eqref{triple graded forms}.
This $\psi$ is singular on $\Sigma_x$ if $n_x > 1$ and vanishes if $n_x = 1$ for all $x \in \bm z$.
\end{proof}

In the case when $\omega$ has higher order poles, Lemma \ref{lem: CSA multiple pole} 
and Corollary \ref{cor: reg integrable} motivate the following definition of the \emph{regularised action}
\begin{equation} \label{reg 4dCS action}
S_\omega(A) \coloneqq \frac{\ii}{4 \pi} \int_X \big( \omega \wedge j_X^\ast \CS(A) \big)_{\rm reg}^{}.
\end{equation}
This reduces to the action \eqref{4dCS action} of \cite{Costello:2019tri} in the case when $\omega$ only has simple poles.

\subsection{Gauge transformations}
Under a gauge transformation $g: A \rightarrow \null^g A$ as in \eqref{gauge transf}, 
the regularised action \eqref{reg 4dCS action} transforms as (cf.\ \eqref{S4d gauge transf})
\begin{align} \label{reg S4d gauge transf}
S_\omega(\null^g A) = S_\omega(A) + \frac{\ii}{4 \pi} \int_X \big( \omega \wedge j_X^\ast d \langle g^{-1} d g , A \rangle \big)_{\rm reg}^{}
+ \frac{\ii}{4 \pi} \int_X \big( \omega \wedge j_X^\ast (g^\ast \chi_G) \big)_{\rm reg}^{}, 
\end{align}
where the Cartan 3-form $\chi_G\in \Omega^3(G)$ on $G$ was defined in \S\ref{sec: gauge inv action}.

Consider the Weil algebra $\mathcal T^{n_x}_x \coloneqq \CC[\varepsilon_x] / (\varepsilon_x^{n_x})$ 
of order $n_x$, the order of the pole $x\in \bm z$ of $\omega$. (Note that for a simple pole $n_x =1$ and thus
$\mathcal T^{n_x}_x \cong \CC$.) We denote by $\ell \mathcal T^{n_x}_x$ the locus of the Weil algebra,
which is a formal manifold in the context of synthetic differential geometry \cite{KockBook}.
Loosely speaking, one should think of $\ell \mathcal T^{n_x}_x$ as an infinitesimal thickening
of the point $x\in {\bm z}$.
In the present setting, the surface defects $\Sigma_x$ of \S\ref{sec: action} are replaced 
by formal manifolds 
\begin{equation*}
\Sigma_x^{n_x} \coloneqq \Sigma \times \ell \mathcal T^{n_x}_x,
\end{equation*}
for each $x \in \bm z$. 
The disjoint union of the surface defects $\Sigma_x$ in \eqref{defect simple poles} is then replaced 
by the disjoint union of their infinitesimal thickenings $\Sigma_x^{n_x}$, namely
\begin{equation*}
\widehat{D} \coloneqq  \bigsqcup_{x \in \bm z} \Sigma_x^{n_x} .
\end{equation*}
For each $x \in \bm z$, we have a morphism of $C^\infty$-rings
\begin{subequations}
\begin{equation} \label{morph formal man}
j_x^\ast : C^\infty(X) \longrightarrow C^\infty(\Sigma_x )\otimes_\CC  \mathcal T^{n_x}_x, \qquad
f\longmapsto \sum_{p=0}^{n_x-1} \frac{1}{p!}  \iota_x^\ast (\partial_z^p f) \otimes \varepsilon_x^p,
\end{equation}
given by pulling back along $\iota_x : \Sigma_x \to X$  
the holomorphic part of the $(n_x - 1)$-jet prolongation of the smooth function $f$. 
It defines a morphism of formal manifolds
\begin{equation} \label{defect embed higher}
j_x : \Sigma^{n_x}_x \longrightarrow X.
\end{equation}
\end{subequations}
The canonical induced morphism 
\begin{subequations}
\begin{equation}
\bm j^\ast : C^\infty(X) \to  \prod_{x \in \bm z} C^\infty(\Sigma_x)\otimes_\CC \mathcal T^{n_x}_x
\end{equation}
to the product of $C^\infty$-rings defines a morphism of formal manifolds
\begin{equation} \label{D embed higher}
\bm j : \widehat{D} \longrightarrow X.
\end{equation}
\end{subequations}
The pair of morphisms \eqref{defect embed higher} and \eqref{D embed higher} play an 
analogous role to the embeddings \eqref{defect embed} in the higher pole case.

We generalise the definition of the defect group $G^{\bm z}$ and its Lie algebra $\g^{\bm z}$ 
from \S\ref{sec: gauge inv action} to the case of 
higher order poles as follows. Recall that, for each $k\geq 1$, the mapping space 
$C^\infty(\ell \mathcal{T}^k, M)$ from $\ell \mathcal{T}^k$ to a manifold $M$ is a manifold, 
namely the total space of the bundle of $(k-1)$-jets of curves in $M$. 
We define the \emph{defect group} $G^{\widehat{\bm z}}$ and its Lie algebra 
$\g^{\widehat{\bm z}}$ as
\begin{equation}\label{defect group higher}
G^{\widehat{\bm z}} \coloneqq \prod_{x \in \bm z} C^\infty(\ell \T^{n_x}_x,G) , \qquad
\g^{\widehat{\bm z}}  \coloneqq \prod_{x \in \bm z} \g \otimes_\CC \T^{n_x}_x.
\end{equation}
Since $G \subseteq \mathrm{GL}_N(\CC)$ is assumed to be a matrix Lie group,
the defect group $G^{\widehat{\bm z}}$ admits a presentation as a subgroup 
of the product  $\prod_{x \in \bm z} \mathrm{GL}_N(\mathcal T^{n_x}_x)$ of general
linear groups with entries in the Weil algebras $\mathcal T^{n_x}_x$.

We endow  $\g\otimes_\CC \mathcal T^{n_x}_x$ with the non-degenerate invariant symmetric bilinear form
\begin{equation} \label{bilinear form Tg}
\langle \cdot, \cdot \rangle : \big(\g\otimes_\CC \mathcal T^{n_x}_x\big) \times \big( \g\otimes_\CC \mathcal T^{n_x}_x) 
\longrightarrow \CC, \qquad
\langle X \otimes \varepsilon_x^p, Y \otimes \varepsilon_x^q \rangle \coloneqq k^x_{p+q}\, \langle X, Y \rangle.
\end{equation}
Non-degeneracy follows from the fact that $k^x_{n_x - 1} \neq 0$, by definition of $n_x$.
This then extends to a non-degenerate invariant symmetric bilinear form on $\g^{\widehat{\bm z}} $, 
which we denote by
\begin{equation} \label{bilinear form gD higher}
\langle\!\langle \cdot, \cdot \rangle\!\rangle_{\omega}^{} : \g^{\widehat{\bm z}}  \times \g^{\widehat{\bm z}} 
\longrightarrow \CC. 
\end{equation}
In the case when $\omega$ has only simple poles this definition reduces to \eqref{bilinear form gD}.

We have the isomorphisms
\begin{subequations} \label{defect isos higher}
\begin{align}
\label{defect isos higher a} \Omega^1(\widehat{D}, \g) &\coloneqq \prod_{x \in \bm z} \Omega^1(\Sigma, \g\otimes_\CC \mathcal T^{n_x}_x) \cong \Omega^1(\Sigma, \g^{\widehat{\bm z}} ),\\
\label{defect isos higher b} C^\infty(\widehat{D}, G) &\coloneqq \prod_{x \in \bm z} C^\infty\big( \Sigma, C^\infty(\ell \mathcal T^{n_x}_x, G) \big) \cong C^\infty(\Sigma, G^{\widehat{\bm z}} ).
\end{align}
\end{subequations}
By virtue of the isomorphism \eqref{defect isos higher a}, the pullback of smooth 
$\g$-valued 1-forms on $X$ by the morphism \eqref{D embed higher} induces a map, cf.\ \eqref{morph formal man},
\begin{equation} \label{jet map forms}
\bm j^\ast : \Omega^r(X, \g) \longrightarrow \Omega^r(\Sigma, \g^{\widehat{\bm z}}), \qquad
\eta \longmapsto \bigg(\sum_{p=0}^{n_x-1} \frac{1}{p!}  \iota_x^\ast (\partial_z^p \eta) \otimes \varepsilon_x^p\bigg)_{x \in {\bm z}},
\end{equation}
for each $r = 0, \ldots, 4$.
Likewise, the pullback of smooth $G$-valued maps on $X$ by \eqref{D embed higher} induces a map
\begin{equation} \label{jet map G functions}
\bm j^\ast  : C^\infty(X, G) \longrightarrow  C^\infty(\Sigma, G^{\widehat{\bm z}} )
, \qquad
g \longmapsto \bigg( \sum_{p=0}^{n_x-1} \frac{1}{p!}  \iota_x^\ast (\partial_z^p g) \otimes \varepsilon_x^p \bigg)_{x \in \bm z},
\end{equation}
where the presentation $G^{\widehat{\bm z}} \subseteq \prod_{x \in \bm z} \mathrm{GL}_N(\mathcal T^{n_x}_x)$ 
as a matrix Lie group is understood. Using the Leibniz rule, one easily proves that 
$\bm j^\ast$ is a group homomorphism, i.e.\
\begin{equation}\label{jast is group homomorphism}
{\bm j}^\ast(g'\,g) = ({\bm j}^\ast g')\, ({\bm j}^\ast g),
\end{equation}
for all $g,g'\in C^\infty(X, G)  $.

The following result extends Lemma \ref{lem: CP formula} to the case of higher order poles.
\begin{lemma} \label{lem: int omega d eta higher}
For any $\zeta= \sum_{p=0}^{n-1} \zeta_p \otimes \varepsilon^p \in \Omega^2(X) \otimes_\CC \T^n$, we have
\begin{equation*}
\int_X ( \omega \wedge d \zeta)_{\rm reg}^{} = 2 \pi \ii \sum_{x \in \bm z} \sum_{p=0}^{n_x - 1} k^x_p \int_{\Sigma_x} \iota_x^\ast \zeta_p.
\end{equation*}
\end{lemma}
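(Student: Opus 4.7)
The plan is to reduce the statement to a simple-pole computation by applying Lemma \ref{lem: CP formula} pole-by-pole. First, I would expand $\zeta = \sum_{p=0}^{n-1} \zeta_p \otimes \varepsilon^p$ and use the definition \eqref{reg def} of the regularised wedge product, together with the fact that $d$ commutes with the scalar multiplication by $\varepsilon^p$, to obtain the decomposition
\begin{equation*}
(\omega \wedge d\zeta)_{\rm reg}^{} \;=\; \sum_{x \in \bm z} \sum_{p=0}^{n_x-1} \omega_{x,p} \wedge d\zeta_p, \qquad \omega_{x,p} \coloneqq \frac{k^x_p \, dz}{z-x}.
\end{equation*}
By construction each $\omega_{x,p}$ is (the pullback along $p_C : X \to C$ of) a meromorphic 1-form on $C$ possessing a unique simple pole at $x \in \bm z$ of residue $k^x_p$. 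Linearity of the integral then reduces the claim to computing $\int_X \omega_{x,p} \wedge d\zeta_p$ for each pair $(x,p)$.

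The second step is to observe that the proof of Lemma \ref{lem: CP formula} nowhere uses the specific shape of $\omega$ from \eqref{omega higher poles pre}: it invokes only (i) that the 1-form in question is of type $(1,0)$ and pulled back from $C$, so that its wedge product with $\partial \zeta_p$ vanishes and $d_\Sigma$ anticommutes with it; (ii) Stokes' theorem on $\Sigma = \RR^2$ to kill the $d_\Sigma$-exact term; and (iii) the Cauchy-Pompeiu integral formula to extract $2\pi\ii$ times the residue at each pole. Applied verbatim with $\omega_{x,p}$ in place of $\omega$, this yields
\begin{equation*}
\int_X \omega_{x,p} \wedge d\zeta_p \;=\; 2\pi\ii \, k^x_p \int_{\Sigma_x} \iota_x^\ast \zeta_p,
\end{equation*}
since $\omega_{x,p}$ contributes residues only at $x$. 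Summing over $x \in \bm z$ and $p = 0, \ldots, n_x - 1$ produces the claimed identity. I do not anticipate any genuine obstacle here: the substance of the lemma is the decomposition of $(\omega \wedge d\zeta)_{\rm reg}^{}$ into its simple-pole constituents, after which Lemma \ref{lem: CP formula} applies directly term by term.
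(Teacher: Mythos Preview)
Your proposal is correct and follows essentially the same approach as the paper: expand $(\omega \wedge d\zeta)_{\rm reg}^{}$ via the definition \eqref{reg def} into a sum of simple-pole terms $\frac{k^x_p\,dz}{z-x}\wedge d\zeta_p$ and apply Lemma~\ref{lem: CP formula} to each. Your additional remark that the proof of Lemma~\ref{lem: CP formula} is insensitive to the specific $\omega$ and works for any pulled-back meromorphic $(1,0)$-form is a helpful clarification, but the underlying argument is the same.
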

\begin{proof}
Since $d\zeta = \sum_{p=0}^{n-1} d\zeta_p \otimes \varepsilon^p$, using the definition \eqref{reg def} we find
\begin{equation*}
\int_X ( \omega \wedge d \zeta)_{\rm reg}^{} =
\sum_{x \in \bm z} \sum_{p=0}^{n_x - 1} \int_X \frac{k^x_p\, dz}{z - x} \wedge d \zeta_p
= 2 \pi \ii \sum_{x \in \bm z} \sum_{p=0}^{n_x - 1} k^x_p \int_{\Sigma_x} \iota_x^\ast \zeta_p,
\end{equation*}
where in the second equality we used Lemma \ref{lem: CP formula}.
\end{proof}

We may now rewrite the second term on the right hand side of \eqref{reg S4d gauge transf} as follows.
\begin{proposition} \label{prop: kinetic higher pole}
For any $g \in C^\infty(X, G)$ and $A \in \Omega^1(X, \g)$, we have
\begin{equation*}
\int_X \big( \omega \wedge j_X^\ast d \langle g^{-1} d g , A \rangle \big)_{\rm reg}^{}
= 2 \pi \ii \int_\Sigma  \big\langle{\mkern-5mu}\big\langle (\bm j^\ast g)^{-1} d_\Sigma (\bm j^\ast g), \bm j^\ast A \big\rangle{\mkern-5mu}\big\rangle_{\omega}.
\end{equation*}
\end{proposition}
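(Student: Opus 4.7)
My plan is to reduce the proposition to Lemma \ref{lem: int omega d eta higher} by commuting $d$ past $j_X^\ast$ and then to match jet coefficients on both sides, using the fact that $\bm j^\ast$ is a morphism of $C^\infty$-rings compatible with $d$, with group inversion (via \eqref{jast is group homomorphism}), and with bilinear pairings through the $\T^{n_x}_x$-valued extension of $\langle\cdot,\cdot\rangle$. Both the regularised wedge product and the bilinear form $\langle\!\langle\cdot,\cdot\rangle\!\rangle_\omega$ are designed so that these expansions must tautologically agree once unpacked.

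Concretely, since in the local coordinates $(x^1,x^2,z,\bar z)$ the coefficient-wise derivation $\partial_z^p$ coincides with the $p$-fold Lie derivative $\mathcal{L}_{\partial_z}^p$ (use that $\mathcal{L}_{\partial_z} dx^I = 0$ for each coordinate $1$-form), it commutes with $d$; hence $d\circ j_X^\ast = j_X^\ast\circ d$. Applying Lemma \ref{lem: int omega d eta higher} to the $\T^n$-valued $2$-form $\zeta = j_X^\ast\langle g^{-1}dg, A\rangle$, whose $p$-th component is $\zeta_p = \frac{1}{p!}\partial_z^p\langle g^{-1}dg, A\rangle$, the left hand side becomes
\[
2\pi\ii \sum_{x\in\bm z}\sum_{m=0}^{n_x-1} \frac{k^x_m}{m!} \int_{\Sigma_x} \iota_x^\ast \partial_z^m\langle g^{-1}dg, A\rangle.
\]

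It remains to identify $\frac{1}{m!}\,\iota_x^\ast\partial_z^m\langle g^{-1}dg, A\rangle$ with the $\varepsilon_x^m$-coefficient of the $\T^{n_x}_x$-valued expression $\langle\!\langle ((\bm j^\ast g)^{-1}\, d_\Sigma (\bm j^\ast g))_x,\, (\bm j^\ast A)_x\rangle\!\rangle^{\T}_x$, where $\langle\!\langle X\otimes\varepsilon_x^p, Y\otimes\varepsilon_x^q\rangle\!\rangle^{\T}_x := \langle X,Y\rangle\,\varepsilon_x^{p+q}$ is the $\T^{n_x}_x$-valued lift of $\langle\cdot,\cdot\rangle$. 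This I would verify by expanding $\partial_z^m\langle g^{-1}dg, A\rangle$ via the Leibniz rule as a triple sum over $p+q+r = m$ of terms $\binom{m}{p,q,r}\,\langle (\partial_z^p g^{-1})\, d(\partial_z^q g),\, \partial_z^r A\rangle$, pulling back to $\Sigma_x$ with $\iota_x^\ast d = d_\Sigma \iota_x^\ast$, and writing everything in terms of the jet components $g_x^{(p)} := \frac{1}{p!}\iota_x^\ast \partial_z^p g$, $(g^{-1})_x^{(p)}$ and $A_x^{(r)}$. The factorials $p!\,q!\,r!$ generated by these normalisations cancel the multinomial $\binom{m}{p,q,r}$ produced by Leibniz, yielding
\[
\tfrac{1}{m!}\,\iota_x^\ast\partial_z^m\langle g^{-1}dg, A\rangle = \sum_{p+q+r=m}\big\langle (g^{-1})_x^{(p)}\, d_\Sigma g_x^{(q)},\; A_x^{(r)}\big\rangle,
\]
which is exactly the claimed $\varepsilon_x^m$-coefficient.

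Summing over $m$ with the weights $k^x_m$ then corresponds to evaluating the $\T^{n_x}_x$-valued pairing via the scalar form \eqref{bilinear form Tg}, and the subsequent sum over $x\in\bm z$ reproduces $\langle\!\langle\cdot,\cdot\rangle\!\rangle_\omega$ as defined in \eqref{bilinear form gD higher}, completing the match. The only genuine obstacle is the multinomial book-keeping; once one observes the elementary identity $\binom{m}{s}\binom{s}{p}\,p!\,q!\,(m-s)! = m!$ that underlies the fact that $j_x^\ast$ is a $C^\infty$-ring morphism, both sides coincide term by term.
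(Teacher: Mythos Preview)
Your proof is correct and follows essentially the same route as the paper: commute $d$ past $j_X^\ast$, apply Lemma~\ref{lem: int omega d eta higher}, expand by Leibniz, and match against the definition of $\langle\!\langle\cdot,\cdot\rangle\!\rangle_\omega$. The only cosmetic difference is that the paper keeps $g^{-1}dg$ as a single factor in a binary Leibniz expansion and then invokes the identity $\bm j^\ast(g^{-1}dg) = (\bm j^\ast g)^{-1} d_\Sigma(\bm j^\ast g)$ at the end, whereas you unroll that identity directly via a ternary multinomial expansion; the content is identical.
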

\begin{proof}
It follows from Lemma \ref{lem: int omega d eta higher} that
\begin{equation*}
\int_X \big( \omega \wedge j_X^\ast d \langle g^{-1} d g , A \rangle \big)_{\rm reg} = 2 \pi \ii \sum_{x \in \bm z} \sum_{p=0}^{n_x - 1} k^x_p \int_{\Sigma_x} \iota_x^\ast \bigg( \frac{1}{p!} \partial_z^p \big\langle g^{-1} d g , A \big\rangle \bigg).
\end{equation*}
Applying the Leibniz rule to the right hand side, we find
\begin{align*}
&2 \pi \ii \sum_{x \in \bm z} \sum_{p=0}^{n_x - 1} \sum_{r=0}^p \int_{\Sigma_x} k^x_p\, \bigg\langle \frac{1}{r!} \iota_x^\ast \big( \partial_z^r (g^{-1} d g) \big), \frac{1}{(p-r)!} \iota_x^\ast \big( \partial_z^{p-r} A \big) \bigg\rangle\\
&\quad = 2 \pi \ii \sum_{x \in \bm z} \sum_{p=0}^{n_x - 1} \sum_{r=0}^p \int_{\Sigma_x} \bigg\langle \frac{1}{r!} \iota_x^\ast \big( \partial_z^r (g^{-1} d g) \big) \otimes \varepsilon_x^r, \frac{1}{(p-r)!} \iota_x^\ast \big( \partial_z^{p-r} A \big) \otimes \varepsilon_x^{p-r} \bigg\rangle\\
&\quad = 2 \pi \ii \int_\Sigma \big\langle{\mkern-5mu}\big\langle {\bm j}^\ast (g^{-1}d g), {\bm j}^\ast A \big\rangle{\mkern-5mu}\big\rangle_{\omega}^{}
= 2 \pi \ii \int_\Sigma \big\langle{\mkern-5mu}\big\langle ({\bm j}^\ast g)^{-1} d_\Sigma ({\bm j}^\ast g), {\bm j}^\ast A \big\rangle{\mkern-5mu}\big\rangle_{\omega}^{},
\end{align*}
where in the first equality we used the definition of the bilinear form \eqref{bilinear form Tg} on $\g\otimes_\CC \T^{n_x}_x$
and in the second equality the definition of the bilinear form \eqref{bilinear form gD higher}.
The last equality follows from the identity
$ {\bm j}^\ast(g^{-1}d g) = ({\bm j}^\ast g)^{-1} d_\Sigma ({\bm j}^\ast g)$,
which is proved similarly to \eqref{jast is group homomorphism} by a simple Leibniz rule argument.
\end{proof}

We now turn to the third term on the right hand side of \eqref{reg S4d gauge transf},
which requires some preparation. We denote by $\widehat{G} \coloneqq C^\infty(\ell \mathcal{T}^n,G)$ 
the mapping space from the locus of the Weil algebra $\mathcal{T}^n=\CC[\varepsilon]/(\varepsilon^n)$ 
to $G$, where we recall that $n=\max\; \{n_x\}_{x\in{\bm z}}$ is the maximal order of all poles. 
Note that $\widehat{G}$ is the Lie group of $(n-1)$-jets of curves in $G$
and that its Lie algebra is $\widehat{\g} = \g\otimes_\CC \mathcal{T}^n$.
Analogously to \eqref{jet prolongation}, we introduce the map 
\begin{equation*}
j_X^\ast : C^\infty(X,G) \longrightarrow C^\infty(X,\widehat{G}),\qquad 
g\longmapsto \sum_{p=0}^{n-1}\frac{1}{p!} \partial_z^p g \otimes \varepsilon^p,
\end{equation*}
which describes the holomorphic part of the $(n - 1)$-jet prolongation 
of smooth $G$-valued maps on $X$.  Using the Leibniz rule, one easily proves that 
$j_X^\ast$ is a group homomorphism, i.e.\
$ j_X^\ast(g'\,g) = (j_X^\ast g')\, (j_X^\ast g)$, for all $g,g'\in C^\infty(X, G) $.
Using again the Leibniz rule, one further shows that the
$\mathcal{T}^n$-valued form $ j_X^\ast (g^\ast \chi_G)\in \Omega^3(X)\otimes_\CC \mathcal{T}^n $ 
in \eqref{reg S4d gauge transf} can be expressed as
\begin{equation}\label{widehat chi}
 j_X^\ast (g^\ast \chi_G)  = (j_X^\ast g)^\ast\overline{\chi}_{\widehat{G}},
\end{equation}
where $\overline{\chi}_{\widehat{G}} \coloneqq\frac{1}{6} \langle \theta_{\widehat{G}}, [\theta_{\widehat{G}},\theta_{\widehat{G}}]\rangle\in 
\Omega^3(\widehat{G})\otimes_\CC\mathcal{T}^n$ is the $\mathcal{T}^n$-valued Cartan $3$-form
defined by the $\mathcal{T}^n$-bilinear extension $ \langle \cdot,\cdot\rangle : \widehat{\g}\times\widehat{\g} \to\mathcal{T}^n$
of the bilinear form $\langle \cdot,\cdot\rangle$ on $\g$.

The generalisation of Proposition \ref{prop: WZ depends on D} to the higher pole case
requires a suitable modification of the Lemmas \ref{lem: smooth vs continuous}, \ref{lem: contracting R2} 
and \ref{lem: singleton} to maps with values in the jet group $\widehat{G} = C^\infty(\ell \mathcal{T}^n,G)$.
Let us start by highlighting the commutative diagram
\begin{equation} \label{bf j vs. j_X}
\begin{tikzcd}
C^\infty(X,G) \arrow[r, "j_X^\ast"] \arrow[d, "{\bm j}^\ast"'] & C^\infty(X,\widehat{G}) \arrow[d, "{\bm \iota}^\ast"]\\
C^\infty(\Sigma,G^{\widehat{\bm z}}) & C^\infty(\Sigma, \widehat{G}^{\bm z} ) \arrow[l, "\mathrm{trunc}"]
\end{tikzcd}
\end{equation}
where the map $\mathrm{trunc}$ is given by post-composition with the map
\begin{equation}\label{jet truncation map}
\widehat{G}^{\bm z}  = \prod_{x\in{\bm z}} C^\infty(\ell \mathcal{T}^n,G) \longrightarrow
\prod_{x\in{\bm z}} C^\infty(\ell \mathcal{T}^{n_x}_x,G) = G^{\widehat{\bm z}}
\end{equation}
that truncates the orders of jets. (Recall that by definition $n_x\leq n$, for all $x\in{\bm z}$.)
We generalise the concepts of relative maps and relative homotopies from \S\ref{sec: gauge inv action} by
\begin{subequations}\label{Cinfty hat D}
\begin{equation} \label{Cinfty hat D a}
C^\infty_{\widehat{D}}(X,\widehat{G}) \coloneqq\big\{g\in C^\infty(X,\widehat{G})\,:\, \mathrm{trunc} \,{\bm \iota}^\ast g = e\big\}
\end{equation}
and
\begin{equation} \label{Cinfty hat D b}
C^\infty_{\widehat{D}\times I}(X\times I,\widehat{G}) \coloneqq\big\{H\in C^\infty(X\times I,\widehat{G})\,:\, \mathrm{trunc} \,({\bm \iota}\times \id)^\ast H = e\big\},
\end{equation}
\end{subequations}
where $I=[0,1]$ is the unit interval. We denote by
$C^\infty_{\widehat{D}}(X,\widehat{G}) \big/\!\!\sim_{\widehat{D}}^\infty$ 
the corresponding set of homotopy classes.
\begin{lemma}\label{lem: hat G homotopy}
$C^\infty_{\widehat{D}}(X,\widehat{G})\big /\!\!\sim_{\widehat{D}}^\infty$ is a singleton.
\end{lemma}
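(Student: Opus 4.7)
The plan is to directly construct a smooth relative homotopy from any $g \in C^\infty_{\widehat D}(X,\widehat G)$ to the constant map $e$, which establishes the claim. The strategy is to reduce to the simple-pole case by exploiting the natural semidirect decomposition of the jet group. Since $G \subseteq \mathrm{GL}_N(\CC)$ is a matrix Lie group, $\widehat G$ is a matrix Lie group over the Weil algebra $\mathcal T^n$, and the augmentation $\mathcal T^n \to \CC,~\varepsilon \mapsto 0$ induces a Lie group projection $\pi : \widehat G \to G$ whose kernel $N := \ker \pi$ has nilpotent Lie algebra $\g \otimes \varepsilon\mathcal T^n/(\varepsilon^n) \subset \widehat \g$. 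In particular, $N$ is simply connected nilpotent and $\exp : \g \otimes \varepsilon\mathcal T^n/(\varepsilon^n) \to N$ is a global diffeomorphism. The splitting $G \hookrightarrow \widehat G$ by constant jets yields $\widehat G \cong G \ltimes N$ and a smooth global factorisation
\[
g = g_0 \cdot \exp(v), \qquad g_0 := \pi \circ g, \qquad v := \log\bigl(g_0^{-1} g\bigr).
\]

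Next, I would translate the boundary condition \eqref{Cinfty hat D a} under this factorisation. Writing $v = \sum_{k=1}^{n-1} v_k \otimes \varepsilon^k$ with $v_k \in C^\infty(X, \g)$, and using that each component of $\mathrm{trunc}$ is a Lie group homomorphism $\widehat G \to C^\infty(\ell\mathcal T^{n_x}_x, G)$ intertwining exponentials, the condition $\mathrm{trunc}\,\bm\iota^* g = e$ decomposes into two independent ones: firstly, $\bm\iota^* g_0 = e$, i.e.\ $g_0 \in C^\infty_D(X, G)$ in the sense of \S\ref{sec: gauge inv action}; and secondly, $\iota_x^* v_k = 0$ for every $x \in \bm z$ and every $1 \leq k \leq n_x - 1$.

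I would then apply Lemmas \ref{lem: smooth vs continuous}--\ref{lem: singleton} to $g_0$ to obtain a smooth relative homotopy $H_0 \in C^\infty_{D \times I}(X \times I, G)$ between $g_0$ and $e$. Since the second condition on $v$ is linear, the straight-line homotopy $t \mapsto (1-t) v$ remains within its solution locus for all $t \in I$. Combining them, set
\[
H(p, t) := H_0(p, t) \cdot \exp\bigl((1-t) v(p)\bigr).
\]
By construction $H$ is smooth and satisfies $j_0^* H = g_0 \exp(v) = g$ and $j_1^* H = e \cdot \exp(0) = e$. For $(p, t) \in \Sigma_x \times I$, the first factor reduces to $e$ by the first condition, while $\mathrm{trunc}_x\bigl((1-t) \iota_x^* v\bigr) = 0$ by the second, so $\mathrm{trunc}_x \exp\bigl((1-t) \iota_x^* v\bigr) = e$. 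Thus $H \in C^\infty_{\widehat D \times I}(X \times I, \widehat G)$ is the desired relative homotopy.

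The only substantial technical point beyond the simple-pole case is the justification of the global semidirect decomposition $\widehat G \cong G \ltimes N$ together with the global exponential diffeomorphism on $N$. Both are standard in the present matrix setting: elements of $N$ have the form $1 + A$ with $A \in \mathrm{Mat}_N(\mathcal T^n)$ satisfying $A^n = 0$ (owing to $\varepsilon^n = 0$), so $\exp$ and $\log$ reduce to mutually inverse truncated power series and no genuine analytic obstruction arises.
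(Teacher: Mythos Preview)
Your proof is correct and follows essentially the same strategy as the paper's: both rely on the diffeomorphism $\widehat G \cong G \times \g^{n-1}$ (cited from \cite{Vizmann} in the paper, derived from the semidirect-product/nilpotent-exponential structure in your argument) to split the problem into a $G$-valued part handled by Lemmas \ref{lem: smooth vs continuous}--\ref{lem: singleton} and contractible linear pieces. The paper phrases this abstractly as a product decomposition of the set of relative homotopy classes \eqref{product homotopy classes}, whereas you write down the homotopy $H(p,t)=H_0(p,t)\exp\bigl((1-t)v(p)\bigr)$ explicitly, but the underlying content is the same.
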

\begin{proof}
We recall from \cite{Vizmann} that there exists, for each $k\geq 1$,
a diffeomorphism $C^\infty(\ell \mathcal{T}^k,G) \cong G\times \g^{k-1}$ between the $(k-1)$-jet group 
and a Cartesian product of $G$ with $k-1$ copies of the Lie algebra $\g$. 
Under these diffeomorphisms, the maps $\widehat{G} =C^\infty(\ell \mathcal{T}^{n},G)\to C^\infty(\ell \mathcal{T}^{n_x}_x,G)$
truncating the jet orders are given by projection maps $G \times\g^{n-1}\to G\times\g^{n_x-1}$
onto the first $n_x$ factors. From the universal property of products and the definition \eqref{Cinfty hat D}, 
one obtains that
\begin{equation}\label{product homotopy classes}
C^\infty_{\widehat{D}}(X,\widehat{G}) \big/\!\!\sim_{\widehat{D}}^\infty ~\,\cong~ C^\infty_D(X,G)\big/\!\!\sim_D^\infty 
\times \prod_{i=1}^{n-1}  C^\infty_{D_i}(X,\g)\big/\!\!\sim_{D_i}^\infty
\end{equation}
is a product of sets of relative homotopy classes of maps as in \S\ref{sec: gauge inv action},
where $D = \bigsqcup_{x\in{\bm z}} \Sigma_x$ is the non-thickened defect and
\begin{equation*}
D_i \coloneqq \bigsqcup_{x\in {\bm z}\,:\, n_x-1\geq i} \Sigma_x
\end{equation*} 
is the disjoint union of the non-thickened
connected components of the defect $\widehat{D}$ 
that support $i$-jet data, for $i=1,\ldots,n-1$. By the same arguments as in the 
proofs of Lemmas \ref{lem: smooth vs continuous}, \ref{lem: contracting R2} 
and \ref{lem: singleton}, one shows that each factor on the right hand side of 
\eqref{product homotopy classes} is a singleton. Hence, their product is a singleton too.
\end{proof}

The following result is the generalisation of Proposition \ref{prop: WZ depends on D} to the case of higher order poles.
\begin{proposition} \label{prop: WZ depends on D higher}
The integral $\int_X \big( \omega \wedge \widetilde{g}^\ast \overline{\chi}_{\widehat{G}} \big)_{\rm reg}^{}$ depends on
$\widetilde{g}\in C^\infty(X,\widehat{G})$ only through $\mathrm{trunc}\, {\bm \iota}^\ast\widetilde{g} 
\in C^\infty(\Sigma, G^{\widehat{\bm z}})$. In particular,
$\int_X \big( \omega \wedge j_X^\ast (g^\ast \chi_G) \big)_{\rm reg}^{}$ depends on
$g\in C^\infty(X,G)$ only through $\bm j^\ast g \in C^\infty(\Sigma, G^{\widehat{\bm z}})$.
\end{proposition}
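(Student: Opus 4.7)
The plan is to adapt the proof of Proposition \ref{prop: WZ depends on D} to the regularised setup and the jet group $\widehat{G}$. First I would observe that the Polyakov--Wiegmann identity lifts to $\widehat{G}$:
$$(\widetilde{g}\widetilde{h}^{-1})^\ast\overline{\chi}_{\widehat{G}} \;=\; \widetilde{g}^\ast\overline{\chi}_{\widehat{G}} - \widetilde{h}^\ast\overline{\chi}_{\widehat{G}} + d\langle \widetilde{g}^{-1}d\widetilde{g},\widetilde{h}^{-1}d\widetilde{h}\rangle$$
in $\Omega^3(X)\otimes_\CC \T^n$, since this identity depends only on the group structure of $\widehat{G}$ and on the invariance of the $\T^n$-bilinear pairing on $\widehat{\g}$.

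Second, by the same computation as in the proof of Proposition \ref{prop: kinetic higher pole}, applied with $\widetilde{h}^{-1}d\widetilde{h}$ in place of $A$, Lemma \ref{lem: int omega d eta higher} would express $\int_X (\omega\wedge d\langle \widetilde{g}^{-1}d\widetilde{g},\widetilde{h}^{-1}d\widetilde{h}\rangle)_{\rm reg}^{}$ as an integral over $\Sigma$ of a pairing under $\langle\!\langle \cdot,\cdot\rangle\!\rangle_\omega^{}$ of $1$-forms built from $\mathrm{trunc}\,\bm\iota^\ast\widetilde{g}$ and $\mathrm{trunc}\,\bm\iota^\ast\widetilde{h}$. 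When $\mathrm{trunc}\,\bm\iota^\ast\widetilde{g} = \mathrm{trunc}\,\bm\iota^\ast\widetilde{h}$, this vanishes by skew-symmetry of the pairing on $1$-forms, as in Proposition \ref{prop: WZ depends on D}. Combining with the lifted Polyakov--Wiegmann identity reduces the problem to showing that $\int_X(\omega\wedge \widetilde{f}^\ast\overline{\chi}_{\widehat{G}})_{\rm reg}^{}=0$ for $\widetilde{f}\coloneqq \widetilde{g}\widetilde{h}^{-1}\in C^\infty_{\widehat{D}}(X,\widehat{G})$.

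Third, Lemma \ref{lem: hat G homotopy} supplies a smooth homotopy $H\in C^\infty_{\widehat{D}\times I}(X\times I,\widehat{G})$ between $\widetilde{f}$ and the constant map $e$. The same chain of identities as in \eqref{WZ g h inv exact} gives $\widetilde{f}^\ast\overline{\chi}_{\widehat{G}} = d\big(\int_I H^\ast\overline{\chi}_{\widehat{G}}\big)$. Applying Lemma \ref{lem: int omega d eta higher} then reduces $\int_X(\omega\wedge\widetilde{f}^\ast\overline{\chi}_{\widehat{G}})_{\rm reg}^{}$ to a sum of defect integrals involving $(\iota_x\times\id)^\ast \partial_z^p(H^\ast\overline{\chi}_{\widehat{G}})$ with $p\leq n_x-1$, each of which vanishes because the condition $\mathrm{trunc}\,(\bm\iota\times\id)^\ast H = e$ encodes precisely the vanishing of the $(n_x-1)$-jet of $H$ along each $\Sigma_x\times I$.

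The main obstacle is the careful jet-order bookkeeping at each pole: one must verify that the derivative orders $p\leq n_x-1$ appearing in the regularised integrals are exactly those retained by the truncation map \eqref{jet truncation map}, so that the matching condition $\mathrm{trunc}\,\bm\iota^\ast\widetilde{g} = \mathrm{trunc}\,\bm\iota^\ast\widetilde{h}$ is indeed enough to kill both the Polyakov--Wiegmann correction and the boundary integral arising from the homotopy. The ``in particular'' assertion then follows by specialising to $\widetilde{g}=j_X^\ast g$: the commutative diagram \eqref{bf j vs. j_X} gives $\mathrm{trunc}\,\bm\iota^\ast(j_X^\ast g) = \bm j^\ast g$, and identity \eqref{widehat chi} gives $j_X^\ast(g^\ast\chi_G) = (j_X^\ast g)^\ast\overline{\chi}_{\widehat{G}}$, so $\int_X(\omega\wedge j_X^\ast(g^\ast\chi_G))_{\rm reg}^{}$ depends on $g$ only through $\bm j^\ast g$.
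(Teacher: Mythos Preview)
Your proposal is correct and follows the same route as the paper: lift the Polyakov--Wiegmann identity to $\widehat{G}$, use the computation behind Proposition~\ref{prop: kinetic higher pole} together with skew-symmetry to kill the exact correction term when the truncated defect values agree, then invoke Lemma~\ref{lem: hat G homotopy} and Lemma~\ref{lem: int omega d eta higher} to show $\int_X(\omega\wedge(\widetilde{g}\widetilde{h}^{-1})^\ast\overline{\chi}_{\widehat{G}})_{\rm reg}^{}=0$; the final clause follows from \eqref{widehat chi} and \eqref{bf j vs. j_X} exactly as you say.

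One slip to correct: in your third step, the defect terms produced by Lemma~\ref{lem: int omega d eta higher} are $(\iota_x\times\id)^\ast(H^\ast\overline{\chi}_{\widehat{G}})_p$, i.e.\ the $\varepsilon^p$-components of the already $\T^n$-valued form $H^\ast\overline{\chi}_{\widehat{G}}$, \emph{not} $\partial_z^p$-derivatives of it. The $\T^n$-grading here comes from the jet-group target $\widehat{G}$, not from $j_X^\ast$. The condition $\mathrm{trunc}\,(\bm\iota\times\id)^\ast H = e$ then says that the components of $(\iota_x\times\id)^\ast H^\ast\theta_{\widehat{G}}$ in $\varepsilon$-degree $\leq n_x-1$ vanish, which is precisely what forces $(\iota_x\times\id)^\ast(H^\ast\overline{\chi}_{\widehat{G}})_p=0$ for $p\leq n_x-1$.
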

\begin{proof}
This is very similar to the proof of Proposition \ref{prop: WZ depends on D}. We refer to the latter for 
certain details, highlighting only the parts of the proof which are different in the present higher order pole setting.

Let $\widetilde{g}, \widetilde{h} \in C^\infty(X, \widehat{G})$ be such that 
$\mathrm{trunc}\, {\bm \iota}^\ast\widetilde{g}  = \mathrm{trunc}\, {\bm \iota}^\ast\widetilde{h} $.
From the Polyakov-Wiegmann identity and 
an argument as in the proof of Proposition \ref{prop: kinetic higher pole}, we obtain 
\begin{equation*}
\int_X \big( \omega \wedge \widetilde{g}^\ast \overline{\chi}_{\widehat{G}}  \big)_{\rm reg}^{} -
\int_X \big( \omega \wedge \widetilde{h}^\ast \overline{\chi}_{\widehat{G}}  \big)_{\rm reg}^{}
=\int_X \big( \omega \wedge (\widetilde{g}\widetilde{h}^{-1})^\ast \overline{\chi}_{\widehat{G}}  \big)_{\rm reg}^{}.
\end{equation*}
It remains to prove that the right hand side of this equation vanishes, provided that
$\mathrm{trunc}\, {\bm \iota}^\ast\widetilde{g}  = \mathrm{trunc}\, {\bm \iota}^\ast\widetilde{h} $,
which by \eqref{Cinfty hat D a} is equivalent to $\widetilde{g}\widetilde{h}^{-1}\in C^\infty_{\widehat{D}}(X,\widehat{G})$.
It follows from Lemma \ref{lem: hat G homotopy} that there exists a homotopy
$H\in  C^\infty_{\widehat{D}\times I}(X\times I,\widehat{G}) $
between $\widetilde{g}\widetilde{h}^{-1}$ and $e \in C^\infty_{\widehat{D}}(X,\widehat{G})$.
We deduce that
\begin{equation*}
(\widetilde{g}\widetilde{h}^{-1})^\ast \overline{\chi}_{\widehat{G}}
= d \bigg( \int_I H^\ast\overline{\chi}_{\widehat{G}} \bigg)
\end{equation*}
by the same line of arguments as in \eqref{WZ g h inv exact}. 
It then follows by using Lemma \ref{lem: int omega d eta higher} that
\begin{equation*}
\int_X \big( \omega \wedge (\widetilde{g}\widetilde{h}^{-1})^\ast \overline{\chi}_{\widehat{G}} \big)_{\rm reg}^{} = 
2 \pi \ii \sum_{x \in \bm z} \sum_{p=0}^{n_x-1} k^x_p \int_{\Sigma_x \times I} (\iota_x \times \id)^\ast( H^\ast\overline{\chi}_{\widehat{G}} )_p^{}
=0,
\end{equation*}
where the last equality follows from
$\mathrm{trunc} \,({\bm \iota}\times\id)^\ast H = e\in C^\infty(\Sigma,G^{\widehat{\bm z}})$
by definition of  $H\in  C^\infty_{\widehat{D}\times I}(X\times I,\widehat{G}) $, cf.\ \eqref{Cinfty hat D b}.

The special case in the statement of this proposition is a consequence of \eqref{widehat chi} and \eqref{bf j vs. j_X}.
\end{proof}

We can now prove the generalisation of Proposition \ref{prop: int WZ compute} to the present setting.
\begin{proposition} \label{prop: int WZ compute higher}
For any $g \in C^\infty(X, G)$, we have
\begin{equation*}
\int_X \big( \omega \wedge j_X^\ast (g^\ast \chi_G) \big)_{\rm reg} = 
- 2 \pi \ii \int_{\Sigma \times I} \widehat{g}^\ast \chi_{G^{\widehat{\bm z}}},
\end{equation*}
where $\chi_{G^{\widehat{\bm z}}}\in\Omega^3(G^{\widehat{\bm z}})$ 
is the Cartan $3$-form on $G^{\widehat{\bm z}}$
and $\widehat{g} \in C^\infty(\Sigma \times I, G^{\widehat{\bm z}})$ 
is any lazy homotopy between $\bm j^\ast g \in C^\infty(\Sigma, G^{\widehat{\bm z}})$ 
and the constant map $e \in C^\infty(\Sigma, G^{\widehat{\bm z}})$.
\end{proposition}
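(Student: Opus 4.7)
The strategy is to mirror the argument of Proposition \ref{prop: int WZ compute}, with the identity \eqref{widehat chi} and Proposition \ref{prop: WZ depends on D higher} playing the roles of their simple-pole counterparts. By \eqref{widehat chi} the integrand equals $(\omega \wedge (j_X^\ast g)^\ast \overline{\chi}_{\widehat{G}})_{\rm reg}^{}$, and by the commutativity of \eqref{bf j vs. j_X} together with Proposition \ref{prop: WZ depends on D higher}, this integral depends on $j_X^\ast g$ only through $\mathrm{trunc}\,{\bm \iota}^\ast (j_X^\ast g) = {\bm j}^\ast g$. Hence it suffices to exhibit a convenient $\widetilde{g} \in C^\infty(X,\widehat{G})$ with $\mathrm{trunc}\,{\bm \iota}^\ast \widetilde{g} = {\bm j}^\ast g$ and to compute $\int_X (\omega \wedge \widetilde{g}^\ast \overline{\chi}_{\widehat{G}})_{\rm reg}^{}$ directly.

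The construction of $\widetilde{g}$ from the lazy homotopy $\widehat{g}$ goes as follows. Fix pairwise disjoint closed disks $\Delta_x \subset C$ centred at each $x \in \bm z$ with radial coordinates $\varrho_x : \Delta_x \to I$ vanishing at $x$. Using the diffeomorphisms $C^\infty(\ell \mathcal{T}^k,G) \cong G \times \g^{k-1}$ of \cite{Vizmann}, the truncation map \eqref{jet truncation map} admits a smooth section $s_x : C^\infty(\ell \mathcal{T}^{n_x}_x,G) \hookrightarrow \widehat{G}$ given by padding with zeros in the higher jet slots. Denoting the $x$-component of $\widehat{g}$ by $\widehat{g}_x$, define
\begin{equation*}
\widetilde{g}(p,z) \coloneqq \begin{cases} s_x\bigl(\widehat{g}_x(p,\varrho_x(z))\bigr) & \text{if } z \in \Delta_x \text{ for some } x \in {\bm z},\\ e \in \widehat{G} & \text{otherwise.} \end{cases}
\end{equation*}
Smoothness is ensured by the laziness of $\widehat{g}$ near $t = 1$, and by construction $\mathrm{trunc}\,{\bm \iota}^\ast \widetilde{g} = \widehat{g}(\cdot,0) = {\bm j}^\ast g$.

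With this choice, the integrand is supported in $\bigsqcup_{x \in \bm z} \Sigma \times \Delta_x$ and splits as a sum over $x \in \bm z$. On each contractible patch $\Sigma \times \Delta_x$ the pullback $\widetilde{g}^\ast \overline{\chi}_{\widehat{G}}$ is closed, hence exact, with a primitive obtained by radial integration from the boundary (where $\widetilde{g} = e$). Applying Lemma \ref{lem: int omega d eta higher} converts each summand into a defect integral over $\Sigma_x$, and the radial path identification $\Sigma_x \times \gamma_x(I) \cong \Sigma \times I$ rewrites it as an integral over $\Sigma \times I$ of $\widehat{g}_x^\ast s_x^\ast \overline{\chi}_{\widehat{G}}$. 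The key observation is that the $\varepsilon_x^p$-components of $s_x^\ast \theta_{\widehat{G}}$ for $p < n_x$ coincide with the corresponding components of the Maurer-Cartan form on $C^\infty(\ell \mathcal{T}^{n_x}_x,G)$; the higher-order components of $s_x^\ast \theta_{\widehat{G}}$, though generically non-trivial, are irrelevant because $\omega$ only involves the coefficients $k^x_p$ with $p < n_x$. Reorganising the resulting weighted sum via the definitions \eqref{bilinear form Tg}--\eqref{bilinear form gD higher} of the pairing on $\g^{\widehat{\bm z}}$ reassembles the answer as $-2\pi\ii\int_{\Sigma \times I}\widehat{g}^\ast \chi_{G^{\widehat{\bm z}}}$.

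The main obstacle is the final bookkeeping step: one must verify explicitly that the Leibniz-rule expansion of $s_x^\ast \overline{\chi}_{\widehat{G}}$, combined with the jet-order weights $k^x_p$ supplied by Lemma \ref{lem: int omega d eta higher}, exactly reproduces the Cartan $3$-form $\chi_{G^{\widehat{\bm z}}}$ determined by \eqref{bilinear form gD higher}. This is parallel to the Leibniz-rule argument already carried out in the proof of Proposition \ref{prop: kinetic higher pole} and should present no essentially new difficulty.
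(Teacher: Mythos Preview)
Your proposal is correct and follows essentially the same route as the paper. The only cosmetic difference is that the paper lifts the entire lazy homotopy $\widehat{g}$ to some $\overline{g}\in C^\infty(\Sigma\times I,\widehat{G}^{\bm z})$ using triviality of the fibre bundle \eqref{jet truncation map}, whereas you choose the specific section $s_x$ given by zero-padding; both constructions yield a $\widetilde{g}\in C^\infty(X,\widehat{G})$ with the correct truncation, and in both cases the final identification rests on the observation that only the components $(\widetilde{g}^\ast\overline{\chi}_{\widehat{G}})_p$ with $p<n_x$ enter, which depend solely on the truncated jet data and reassemble into $\widehat{g}^\ast\chi_{G^{\widehat{\bm z}}}$ via \eqref{bilinear form Tg}--\eqref{bilinear form gD higher}.
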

\begin{proof}
The argument is an adaptation of the proof of Proposition \ref{prop: int WZ compute}
to the case of higher order poles. We thus refer to the latter for certain details and highlight only the
features pertaining to the present case.

Since $G^{\widehat{\bm z}}$ is connected and $\Sigma = \RR^2$ is contractible, 
there exists a lazy homotopy $\widehat{g} \in C^\infty(\Sigma \times I, G^{\widehat{\bm z}})$
between ${\bm j}^{\ast} g$ and $e$. Using the fact that the jet order truncation map 
$\widehat{G}^{\bm z}\to G^{\widehat{\bm z}}$ in \eqref{jet truncation map} is a trivial fibre bundle \cite{Vizmann}, 
we can lift $\widehat{g}$ to a lazy homotopy $\overline{g} \in C^\infty(\Sigma \times I, \widehat{G}^{\bm z})$
between a lift of ${\bm j}^{\ast} g$  and the identity element $e\in C^\infty(\Sigma \times I, \widehat{G}^{\bm z})$.
By construction, $\mathrm{trunc}\,\overline{g} = \widehat{g}$.

As in the proof of Proposition \ref{prop: int WZ compute},
let $\varrho : \Delta \to I$ denote the radial coordinate on the unit disc $\Delta$
and let $\Delta_x \subset C$ be disjoint discs around each pole $x \in \bm z$. 
We define $\widetilde g \in C^\infty(X, \widehat{G})$ as the extension from 
$\bigsqcup_{x \in \bm z} \Sigma \times \Delta_x$ to $X$ by the identity 
of the image of $(\id_\Sigma \times \varrho)^\ast \overline{g} \in C^\infty(\Sigma \times \Delta, \widehat{G}^{\bm z})$ 
under the isomorphism
\begin{equation*}
C^\infty\bigg( \bigsqcup_{x \in \bm z} \Sigma \times \Delta_x, \widehat{G} \bigg)
\cong C^\infty(\Sigma \times \Delta, \widehat{G}^{\bm z}).
\end{equation*}
By construction, we have that $ \mathrm{trunc}\, {\bm \iota}^\ast\widetilde{g} = {\bm j}^{\ast} g = 
\mathrm{trunc}\, {\bm \iota}^\ast j_X^\ast g$, hence Proposition \ref{prop: WZ depends on D higher}
implies that
\begin{equation*}
\int_X \big( \omega \wedge j_X^\ast (g^\ast \chi_G) \big)_{\rm reg}^{} = 
\int_X \big( \omega \wedge (j_X^\ast g)^\ast \overline{\chi}_{\widehat{G}} \big)_{\rm reg}^{} 
= \int_X \big( \omega \wedge \widetilde{g}^\ast \overline{\chi}_{\widehat{G}} \big)_{\rm reg}^{}.
\end{equation*}
The integral on the right hand side can be computed by following the same steps 
as in the end of the proof of Proposition \ref{prop: int WZ compute}. Explicitly, we have
\begin{align*}
\int_X \big( \omega \wedge \widetilde{g}^\ast \overline{\chi}_{\widehat{G}} \big)_{\rm reg}^{} 
&= \sum_{x \in \bm z} \int_{\Sigma \times \Delta_x} \big( \omega \wedge \widetilde{g}^\ast \overline{\chi}_{\widehat{G}} \big)_{\rm reg}^{}
= \sum_{x \in \bm z} \int_{\Sigma \times \Delta_x} \Big( \omega \wedge d \Big( - \int_{\gamma_z} \widetilde{g}^\ast \overline{\chi}_{\widehat{G}}\Big) \Big)_{\rm reg}^{}\\
&= 
- 2 \pi \ii \sum_{x \in \bm z} \sum_{p=0}^{n_x-1} k^x_p \int_{\Sigma_x} \int_{\gamma_x} (\widetilde{g}^\ast \overline{\chi}_{\widehat{G}})_p^{} 
= - 2 \pi \ii \int_{\Sigma \times I} \widehat{g}^\ast \chi_{G^{\widehat{\bm z}}}.
\end{align*}
In the third equality we used Lemma \ref{lem: int omega d eta higher}
and in the last step we used the definition of the bilinear form $\langle\!\langle \cdot,\cdot\rangle\!\rangle_{\omega}^{}$ 
on $\g^{\widehat{\bm z}}$ from \eqref{bilinear form Tg} and \eqref{bilinear form gD higher}.
\end{proof}

\section{Boundary conditions on surface defects} \label{sec: boundary conditions}
The results in this section are stated and proved for poles of arbitrary orders $n_x\geq 1$.
We use our notational conventions from the higher order pole case in \S\ref{sec: higher poles}. 
The definitions and results in \S\ref{sec: higher poles} reduce to the ones 
in \S\ref{sec: simple pole case} in the case when all poles are simple,
i.e.\ $n_x =1$, for all $x\in {\bm z}$, and consequently $n=1$.

\subsection{Bulk fields with boundary conditions} \label{sec: bulk with bc}
We introduce a groupoid of bulk fields with boundary conditions 
at the (thickened) surface defect $\widehat{D}$. Imposing these boundary conditions will have the 
effect of making the action \eqref{reg 4dCS action} gauge invariant.

To define the relevant groupoid, let us first observe that the action
\eqref{reg 4dCS action} is invariant under translations by $\g$-valued $(0,1,0)$-forms, i.e.\
\begin{equation*}
S_{\omega}(A+\lambda) = S_{\omega}(A),
\end{equation*}
for all $A\in\Omega^1(X,\g)$ and $\lambda\in \Omega^{0,1,0}(X,\g)$,
which is due to the fact that $\omega \in \Omega^{0,1,0}(X)$. Hence, the action descends 
to the quotient
\begin{equation}\label{Omega quotient iso}
\overline{\Omega}^1(X,\g) \coloneqq \frac{\Omega^1(X,\g)}{\Omega^{0,1,0}(X,\g)} \cong \Omega^{1,0,0}(X,\g) \oplus \Omega^{0,0,1}(X,\g),
\end{equation}
where the last isomorphism is due to the direct sum decomposition \eqref{triple graded forms} of forms on $X$.
The gauge transformations in \eqref{gauge transf} also descend to the quotient,
because for every $g\in C^\infty(X,G)$ and $\lambda\in \Omega^{0,1,0}(X,\g)$ we have
\begin{equation*}
\null^g(A+\lambda) = -dg g^{-1} + gAg^{-1} + g \lambda g^{-1} =\null^g A + g\lambda g^{-1}
\end{equation*}
and $g \lambda g^{-1} \in \Omega^{0,1,0}(X,\g)$.
Abusing notation slightly, we will denote also by $\null^g A$ the action of a gauge transformation
$g\in C^\infty(X,G)$ on a 1-form $A \in \overline{\Omega}^1(X,\g)$ under the isomorphism in \eqref{Omega quotient iso}, which explicitly reads
\begin{equation*}
\null^g A =- \overline{d}g g^{-1}  + g A g^{-1}, 
\end{equation*}
where $\overline{d} \coloneqq d_\Sigma + \bar \partial$.

We define the groupoid of \emph{bulk fields} on $X$ by
\begin{equation*}
\bm{B}G_{\rm con}(X) \coloneqq \left\{
\begin{array}{ll}
\textup{Obj}: & A \in \overline{\Omega}^1(X, \g),\\
\textup{Mor}: & g: A \to \null^g A, \;\; \textup{with} \; g \in C^\infty(X, G),
\end{array}
\right.
\end{equation*}
and the groupoid of \emph{defect fields} on $\widehat{D}$ by
\begin{equation*}
\bm{B}G^{\widehat{\bm z}}_{\rm con}(\Sigma) \coloneqq \left\{
\begin{array}{ll}
\textup{Obj}: & a \in \Omega^1(\Sigma, \g^{\widehat{\bm z}}),\\
\textup{Mor}: & k: a \to \null^k a, \;\; \textup{with} \; k \in C^\infty(\Sigma, G^{\widehat{\bm z}}),
\end{array}
\right.
\end{equation*}
where $G^{\widehat{\bm z}}$ is the defect group and $\g^{\widehat{\bm z}}$ its Lie algebra,
cf.\ \eqref{defect group higher}. We would like to emphasise that there is no need to introduce different bundles
in these groupoids, because every principal $G$-bundle on $X$ and every principal 
$G^{\widehat{\bm z}}$-bundle on $\Sigma$ is trivialisable. The latter follows from 
$\Sigma=\RR^2$ being homotopic to a point, while the former follows from the existence 
of a deformation retract from $X$ to a bouquet of circles $\bigvee^{\vert{\bm \zeta}\vert-1} S^1$
and the short calculation
\begin{equation*}
\pi_0\big(\Map_{\{a\}}(X,{\bm B}G)\big) \cong \pi_0\big(\Map_{\{a\}}(S^1,{\bm B}G)\big)^{\vert{\bm \zeta}\vert-1} \cong
\pi_0(G)^{\vert{\bm \zeta}\vert-1} \cong \{\ast\},
\end{equation*}
where $a\in X$ is any choice of base point and ${\bm B} G$ denotes the classifying space of principal $G$-bundles.
The last isomorphism follows since $G$ is connected.

Using \eqref{jet map forms} for $r=1$ and \eqref{jet map G functions}, 
we introduce the functor 
\begin{equation*}
\bm j^\ast : \bm{B}G_{\rm con}(X) \longrightarrow \bm{B}G^{\widehat{\bm z}}_{\rm con}(\Sigma)
\end{equation*}
that sends an object $A$ to $\bm j^\ast A$ (note that this is well-defined 
on the quotients in \eqref{Omega quotient iso}) and a morphism 
$g : A \to \null^g A$ to $\bm j^\ast g : \bm j^\ast A \to \bm j^\ast(\null^g A) 
= \null^{\bm j^\ast g} (\bm j^\ast A)$.

\medskip

In order to impose boundary conditions for the field $A \in \overline{\Omega}^1(X, \g)$ 
on the surface defect $\widehat{D}$, we introduce a subgroupoid of $\bm BG^{\widehat{\bm z}}_{\rm con}(\Sigma)$ as follows.
Fix a Lie subalgebra $\k \subset \g^{\widehat{\bm z}}$, which
is isotropic with respect to the bilinear form $\langle\!\langle \cdot, \cdot \rangle\!\rangle_{\omega}^{}$ 
in \eqref{bilinear form gD higher}, and let $K \subset G^{\widehat{\bm z}}$ 
denote the corresponding connected Lie subgroup. We define 
\begin{equation*}
\bm{B}K_{\rm con}(\Sigma) \coloneqq \left\{
\begin{array}{ll}
\textup{Obj}: & a \in \Omega^1(\Sigma, \k) \subset \Omega^1(\Sigma, \g^{\widehat{\bm z}}),\\
\textup{Mor}: & k: a \to \null^k a, \;\; \textup{with} \; k \in C^\infty(\Sigma, K) \subset C^\infty(\Sigma, G^{\widehat{\bm z}}),
\end{array}
\right.
\end{equation*}
and observe that, by definition, there is an inclusion functor 
\begin{equation*}
\bm{B}K_{\rm con}(\Sigma) \longhookrightarrow \bm{B}G^{\widehat{\bm z}}_{\rm con}(\Sigma).
\end{equation*}
Given such a choice of an isotropic Lie subalgebra $\k \subset \g^{\widehat{\bm z}}$, we define
the groupoid of \emph{bulk fields with boundary conditions} by
\begin{equation} \label{Fbc def}
\mathfrak{F}_{\rm bc}(X) \coloneqq \left\{
\begin{array}{ll}
\textup{Obj}: & A \in \overline{\Omega}^1(X, \g), \;\; \textup{s.t.} \; \bm j^\ast A \in \Omega^1(\Sigma, \k),\\
\textup{Mor}: & g: A \to \null^g A, \;\; \textup{with} \; g \in C^\infty(X, G)\; \textup{s.t.} \; \bm j^\ast g \in C^\infty(\Sigma, K).
\end{array}
\right.
\end{equation}
Given any morphism $g: A \to \null^g A$ in $\mathfrak{F}_{\rm bc}(X)$, we have 
$(\bm j^\ast g)^{-1} d_\Sigma (\bm j^\ast g) \in \Omega^1(\Sigma, \k)$ and $\bm j^\ast A \in \Omega^1(\Sigma, \k)$. 
Hence, the second term on the right hand side of \eqref{reg S4d gauge transf} vanishes 
on account of Proposition \ref{prop: kinetic higher pole} and the isotropy of $\k \subset \g^{\widehat{\bm z}}$. 
The proposition below shows that the last term on the right hand side of 
\eqref{reg S4d gauge transf} also vanishes.

\begin{proposition}
$\int_X \big( \omega \wedge j^\ast_X (g^\ast \chi_G) \big)_{\rm reg} = 0$,
for every morphism $g: A \to \null^g A$ in $\mathfrak{F}_{\rm bc}(X)$.
\end{proposition}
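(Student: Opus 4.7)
The plan is to apply Proposition \ref{prop: int WZ compute higher} and then exploit the isotropy of $\k$ with respect to $\langle\!\langle \cdot, \cdot \rangle\!\rangle_{\omega}^{}$.

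First, I would invoke Proposition \ref{prop: int WZ compute higher} to rewrite
\begin{equation*}
\int_X \big( \omega \wedge j^\ast_X (g^\ast \chi_G) \big)_{\rm reg}^{} = - 2\pi \ii \int_{\Sigma \times I} \widehat{g}^\ast \chi_{G^{\widehat{\bm z}}}
\end{equation*}
for any lazy homotopy $\widehat{g} \in C^\infty(\Sigma \times I, G^{\widehat{\bm z}})$ between $\bm j^\ast g$ and the constant map $e$. The crucial freedom here is that the equality holds for \emph{any} such lazy homotopy, so I am free to choose one that is adapted to the boundary condition.

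Since $g : A \to \null^g A$ is a morphism in $\mathfrak{F}_{\rm bc}(X)$, by definition \eqref{Fbc def} we have $\bm j^\ast g \in C^\infty(\Sigma, K)$, where $K \subset G^{\widehat{\bm z}}$ is the connected Lie subgroup with Lie algebra $\k$. Because $\Sigma = \RR^2$ is contractible and $K$ is connected, there exists a smooth homotopy $\Sigma \times I \to K$ between $\bm j^\ast g$ and the constant map $e$, and by a standard reparametrisation in the $I$-direction this can be made lazy. Composing with the inclusion $K \hookrightarrow G^{\widehat{\bm z}}$ yields a lazy homotopy $\widehat{g} \in C^\infty(\Sigma \times I, G^{\widehat{\bm z}})$ that factors through $K$.

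For such a $\widehat{g}$, the pullback $\widehat{g}^\ast \chi_{G^{\widehat{\bm z}}}$ equals the pullback along the $K$-valued map of the restriction $\chi_{G^{\widehat{\bm z}}}|_K$. Now $\chi_{G^{\widehat{\bm z}}} = \tfrac{1}{6}\langle\!\langle \theta_{G^{\widehat{\bm z}}}, [\theta_{G^{\widehat{\bm z}}}, \theta_{G^{\widehat{\bm z}}}]\rangle\!\rangle_{\omega}^{}$, and along $K$ the Maurer-Cartan form $\theta_{G^{\widehat{\bm z}}}$ takes values in $\k \subset \g^{\widehat{\bm z}}$. Since $\k$ is isotropic with respect to $\langle\!\langle \cdot, \cdot \rangle\!\rangle_{\omega}^{}$, the restriction $\chi_{G^{\widehat{\bm z}}}|_K$ vanishes identically, hence so does $\widehat{g}^\ast \chi_{G^{\widehat{\bm z}}}$, and the integral is zero.

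The only mildly non-trivial point is ensuring that a lazy homotopy can be chosen to take values in $K$ rather than in the full group $G^{\widehat{\bm z}}$; this reduces to the fact that $\Sigma$ is contractible and $K$ is connected, so no obstruction arises. Everything else is a direct consequence of the isotropy condition built into the definition of $\mathfrak{F}_{\rm bc}(X)$.
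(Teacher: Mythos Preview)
Your proposal is correct and follows essentially the same route as the paper's proof: both invoke Proposition \ref{prop: int WZ compute higher}, use the connectedness of $K$ (together with contractibility of $\Sigma$) to choose the lazy homotopy $\widehat{g}$ with values in $K$, and then conclude that $\widehat{g}^\ast \chi_{G^{\widehat{\bm z}}}=0$ from the isotropy of $\k$.
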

\begin{proof}
By Proposition \ref{prop: int WZ compute higher}, we have
\begin{equation*}
\int_X \big( \omega \wedge j_X^\ast (g^\ast \chi_G) \big)_{\rm reg}^{} = - 2 \pi \ii \int_{\Sigma \times I} \widehat{g}^\ast \chi_{G^{\widehat{\bm z}}},
\end{equation*}
where $\widehat{g} \in C^\infty(\Sigma \times I, G^{\widehat{\bm z}})$ is any lazy homotopy between 
$\bm j^\ast g \in C^\infty(\Sigma, K)$ and $e \in C^\infty(\Sigma, K)$. 
Since $K$ is connected, we can choose a lazy homotopy 
$\widehat{g}$ with values in the Lie subgroup $K \subset G^{\widehat{\bm z}}$, 
i.e.\ $\widehat{g} \in C^\infty(\Sigma \times I, K)$.
It then follows that $\widehat{g}^{-1} d_{\Sigma \times I} \widehat{g} \in \Omega^1(\Sigma \times I, \k)$ 
and therefore $\widehat{g}^\ast \chi_{G^{\widehat{\bm z}}} = 0$ since $\k \subset \g^{\widehat{\bm z}}$ 
is an isotropic Lie subalgebra.
\end{proof}

Summing up, we obtain
\begin{theorem}\label{cor gauge invariance}
The regularised $4$-dimensional Chern-Simons action $S_\omega$ given in \eqref{reg 4dCS action} 
defines a gauge invariant action on the groupoid 
$\mathfrak{F}_{\rm bc}(X)$. 
\end{theorem}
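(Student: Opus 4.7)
The plan is to combine the gauge variation identity \eqref{reg S4d gauge transf} with the boundary condition constraints defining $\mathfrak{F}_{\rm bc}(X)$ and show that both anomalous terms on the right hand side drop out. Before addressing gauge invariance, I would first check that $S_\omega$ actually descends to the quotient $\overline{\Omega}^1(X,\g)$ from \eqref{Omega quotient iso}, so that it is well-defined on the objects of $\mathfrak{F}_{\rm bc}(X)$. Since $\omega$ is of type $(0,1,0)$ and $dz \wedge dz = 0$, the regularised wedge product $(\omega \wedge \zeta)_{\rm reg}$ annihilates $\zeta$ whenever the latter carries an extra $dz$-factor. Expanding $\CS(A+\lambda) - \CS(A)$ for $\lambda \in \Omega^{0,1,0}(X,\g)$ shows that every extra term contains at least one $\lambda$-factor, so $S_\omega(A+\lambda) = S_\omega(A)$.

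With well-definedness in place, I would then fix a morphism $g : A \to \null^g A$ in $\mathfrak{F}_{\rm bc}(X)$ and use the variation formula \eqref{reg S4d gauge transf} to express the difference $S_\omega(\null^g A) - S_\omega(A)$ as a sum of two boundary-type contributions. For the second term on the right hand side of \eqref{reg S4d gauge transf}, Proposition \ref{prop: kinetic higher pole} rewrites it, up to the prefactor, as
\[
\int_\Sigma \langle\!\langle (\bm j^\ast g)^{-1} d_\Sigma(\bm j^\ast g), \bm j^\ast A \rangle\!\rangle_{\omega}.
\]
By the definition \eqref{Fbc def} of $\mathfrak{F}_{\rm bc}(X)$, the map $\bm j^\ast g$ takes values in the connected Lie subgroup $K \subset G^{\widehat{\bm z}}$ integrating $\k$, so the left Maurer-Cartan form $(\bm j^\ast g)^{-1} d_\Sigma(\bm j^\ast g)$ is a $\k$-valued $1$-form on $\Sigma$, and by the same definition $\bm j^\ast A \in \Omega^1(\Sigma,\k)$. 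The isotropy of $\k \subset \g^{\widehat{\bm z}}$ with respect to $\langle\!\langle \cdot, \cdot \rangle\!\rangle_\omega$ then makes the integrand vanish pointwise. For the third term on the right hand side of \eqref{reg S4d gauge transf}, I would simply invoke the proposition stated immediately before the theorem, which was set up precisely to handle this contribution.

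Together these two vanishings reduce \eqref{reg S4d gauge transf} to $S_\omega(\null^g A) = S_\omega(A)$, which is the required gauge invariance. There is no real obstacle remaining at this stage: all the heavy lifting has already been done in Propositions \ref{prop: kinetic higher pole}, \ref{prop: WZ depends on D higher} and \ref{prop: int WZ compute higher}, together with the homotopy-theoretic analysis of $C^\infty_{\widehat{D}}(X,\widehat{G})$ in Lemma \ref{lem: hat G homotopy} that underpins the proposition preceding the theorem. The genuinely non-trivial ingredient hidden behind the theorem is the ability, exploited in that preceding step, to lift the relative homotopy witnessing the vanishing into the connected subgroup $K$, which is what ensures that the defect-localised Wess-Zumino anomaly is $\k$-valued and is therefore killed by the isotropy of $\k$.
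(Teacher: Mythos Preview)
Your proposal is correct and follows essentially the same route as the paper: the theorem is stated there as an immediate consequence (``Summing up, we obtain'') of the preceding discussion, which disposes of the second term in \eqref{reg S4d gauge transf} via Proposition~\ref{prop: kinetic higher pole} and isotropy of $\k$, and of the third term via the unnamed proposition immediately before the theorem. Your additional remark on well-definedness over the quotient $\overline{\Omega}^1(X,\g)$ is also handled in the paper, just slightly earlier in \S\ref{sec: bulk with bc}.
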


To conclude, we would like to note that the groupoid $\mathfrak{F}_{\rm bc}(X)$ 
in \eqref{Fbc def} is a model for the pullback
\begin{equation} \label{strict pullback}
\begin{tikzcd}
\mathfrak{F}_{\rm bc}(X) \arrow[r, dashed] \arrow[d, dashed] & \bm{B}G_{\rm con}(X) \arrow[d, "{\bm j^\ast}"]\\
\bm{B}K_{\rm con}(\Sigma) \arrow[r, hookrightarrow] & \bm{B}G^{\widehat{\bm z}}_{\rm con}(\Sigma)
\end{tikzcd}
\end{equation}
in the category of groupoids. This fact motivates our construction in the next subsection.

\subsection{Bulk fields with edge modes}
The category of groupoids is a category with weak equivalences, 
where the latter are given by equivalences of groupoids, 
i.e.\ fully faithful and essentially surjective functors. In general, pullbacks fail to 
preserve weak equivalences. This means that if we were to replace the pullback diagram in 
\eqref{strict pullback} by a weakly equivalent one, 
in general its pullback will not be weakly equivalent 
to $\mathfrak{F}_{\rm bc}(X)$. To solve this issue 
one considers homotopy pullbacks, instead of ordinary categorical pullbacks, 
which do preserve weak equivalences.
We refer to \cite{Hovey,Riehl} for an introduction to the frameworks
of model and homotopical category theory that underlies the
study of homotopy pullbacks.

Motivated by the above discussion, we define the \emph{field groupoid} $\mathfrak{F}(X)$ 
as the homotopy pullback 
\begin{equation} \label{homotopy pullback}
\begin{tikzcd}
\mathfrak{F}(X) \arrow[r, dashed] \arrow[d, dashed] \arrow[dr, phantom, "\scalebox{0.8}{$h$}" , pos=0.04] & \bm{B}G_{\rm con}(X) \arrow[d, "{\bm j^\ast}"]\\
\bm{B}K_{\rm con}(\Sigma) \arrow[r, hookrightarrow] & \bm{B}G^{\widehat{\bm z}}_{\rm con}(\Sigma)
\end{tikzcd}
\end{equation}
in the model category of groupoids. Computing this homotopy pullback
by a standard construction (see e.g.\ \cite[Appendix A]{HomEdgeModes} for a review),
we obtain
\begin{equation} \label{FM def}
\mathfrak{F}(X) \coloneqq \left\{
\begin{array}{ll}
\textup{Obj}: & (A, h) \in \overline{\Omega}^1(X, \g) \times C^\infty(\Sigma, G^{\widehat{\bm z}}), \;\; \textup{s.t.} \; \null^{h^{-1}} (\bm j^\ast A) \in \Omega^1(\Sigma,\k),\\
\textup{Mor}: & (g,k) :(A, h) \to ( \null^g A, ({\bm j}^\ast g) h k^{-1}), \;\;   \\
&\textup{with } \;g \in C^\infty(X, G)\; \textup{ and }\; k\in C^\infty(\Sigma, K).
\end{array}
\right.
\end{equation}
This is to be compared with the (strict) pullback $\mathfrak{F}_{\rm bc}(X)$ in \eqref{Fbc def}.
\begin{theorem} \label{thm: equiv Fbc to F}
The functor
\begin{align*}
\Phi: \mathfrak{F}_{\rm bc}(X) \longrightarrow \mathfrak{F}(X)
\end{align*}
that sends an object $A$ to $(A,e)$ and a morphism $g: A \to \null^g A$ 
to $(g,{\bm j}^\ast g): (A,e) \to (\null^g A,e)$ is an equivalence of groupoids.
\end{theorem}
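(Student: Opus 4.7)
The plan is to verify the three conditions—well-definedness, full faithfulness, and essential surjectivity—with the last one resting on a jet-lifting lemma that is the only substantive step.

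\textbf{Well-definedness and full faithfulness.} First I would check that $(A,e)$ lies in $\mathfrak{F}(X)$: the object condition reads $\null^{e^{-1}}(\bm j^\ast A) = \bm j^\ast A \in \Omega^1(\Sigma,\k)$, which is exactly the defining property of $A\in\mathfrak{F}_{\rm bc}(X)$. Similarly, for a morphism $g:A\to\null^g A$ in $\mathfrak{F}_{\rm bc}(X)$, the pair $(g,\bm j^\ast g)$ has target $(\null^g A,(\bm j^\ast g) e (\bm j^\ast g)^{-1}) = (\null^g A,e) = \Phi(\null^g A)$, and $\bm j^\ast g \in C^\infty(\Sigma,K)$ by hypothesis. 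Full faithfulness is then immediate: a morphism $(g,k):(A,e)\to(A',e)$ in $\mathfrak{F}(X)$ forces $(\bm j^\ast g)\, e\, k^{-1} = e$, i.e.\ $k = \bm j^\ast g$, so the condition $k\in C^\infty(\Sigma,K)$ translates to $\bm j^\ast g \in C^\infty(\Sigma,K)$. Thus the assignment $g\mapsto(g,\bm j^\ast g)$ is a bijection between $\mathrm{Hom}_{\mathfrak{F}_{\rm bc}(X)}(A,A')$ and $\mathrm{Hom}_{\mathfrak{F}(X)}(\Phi(A),\Phi(A'))$.

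\textbf{Essential surjectivity, reduced to a lift.} Given $(A,h)\in \mathfrak{F}(X)$, I would look for $g\in C^\infty(X,G)$ with $\bm j^\ast g = h^{-1}$. With such a $g$ in hand, the pair $(g,e)$ is a morphism $(A,h)\to(\null^g A,(\bm j^\ast g) h e^{-1}) = (\null^g A, e)$ in $\mathfrak{F}(X)$, and $\null^g A$ belongs to $\mathfrak{F}_{\rm bc}(X)$ because
\begin{equation*}
\bm j^\ast(\null^g A) \;=\; \null^{\bm j^\ast g}(\bm j^\ast A) \;=\; \null^{h^{-1}}(\bm j^\ast A) \;\in\; \Omega^1(\Sigma,\k),
\end{equation*}
the last membership being the defining condition for $(A,h)$. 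Hence $(A,h)\cong \Phi(\null^g A)$.

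\textbf{The jet lift: main obstacle.} Everything therefore hinges on the following: given $f\in C^\infty(\Sigma,G^{\widehat{\bm z}})$, produce $g\in C^\infty(X,G)$ with $\bm j^\ast g = f$. I would mimic the construction used in the proof of Proposition \ref{prop: int WZ compute higher}. Pick disjoint holomorphic discs $\Delta_x\subset C$ around each pole $x\in\bm z$ and smooth cutoff functions $\chi_x:\Delta_x\to[0,1]$ equal to $1$ on a neighbourhood of $x$ and vanishing near $\partial\Delta_x$. Use the diffeomorphism $C^\infty(\ell\mathcal{T}^{n_x}_x,G)\cong G\times\g^{n_x-1}$ recalled in Lemma \ref{lem: hat G homotopy} to write the $x$-component $f_x$ as $(a^x_0; a^x_1,\dots,a^x_{n_x-1})$ with $a^x_0:\Sigma\to G$ and $a^x_p:\Sigma\to\g$. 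On $\Sigma\times\Delta_x$ I would then set
\begin{equation*}
g(p,z) \;=\; a^x_0(p)\,\exp\!\left(\chi_x(z)\sum_{p'=1}^{n_x-1} (z-x)^{p'}\, b^x_{p'}(p)\right),
\end{equation*}
where the $\g$-valued coefficients $b^x_{p'}(p)$ are determined recursively from $a^x_1(p),\dots,a^x_{n_x-1}(p)$ so that the holomorphic Taylor expansion of $g$ at $z=x$ reproduces $f_x$ to order $n_x-1$. Outside $\bigsqcup_{x\in\bm z}\Sigma\times\Delta_x$ I would glue by the constant map $e$. Smoothness of $g$ follows because the cutoffs $\chi_x$ make the transition smooth, the discs are disjoint, and by construction $\bm j^\ast g = f$.

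The main obstacle is precisely this jet lift: one must produce a globally smooth $G$-valued map with prescribed holomorphic jet at every pole. The cutoff-plus-exponential ansatz above handles it because $C^\infty(\ell\mathcal{T}^k,G)$ admits a smooth section over jet evaluation, reducing the problem to independently extending each jet component. Once the lift is established, both essential surjectivity and hence the equivalence of groupoids follow.
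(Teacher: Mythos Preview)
Your treatment of well-definedness and full faithfulness is correct and matches the paper. The reduction of essential surjectivity to the jet-lifting statement is also right, and indeed that lift is the only substantive step.

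However, there is a genuine gap in your jet lift. Your ansatz
\begin{equation*}
g(p,z) = a^x_0(p)\,\exp\!\left(\chi_x(z)\sum_{p'=1}^{n_x-1} (z-x)^{p'}\, b^x_{p'}(p)\right)
\end{equation*}
does \emph{not} glue smoothly to $e$ outside $\Sigma\times\Delta_x$: near $\partial\Delta_x$, where $\chi_x=0$, the expression collapses to $a^x_0(p)$, which is in general not the identity. The scalar cutoff $\chi_x$ only tames the $\g$-valued higher-order jet data sitting inside the exponential; the $G$-valued zeroth-order part $a^x_0$ cannot be switched off by multiplying by a bump function.

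The paper handles this by first producing an extension $\widetilde{h}\in C^\infty(X,\widehat G)$ of (a lift to $\widehat G^{\bm z}$ of) $h$ via the radial lazy-homotopy construction of Proposition \ref{prop: int WZ compute}: since $\Sigma=\RR^2$ is contractible and $\widehat G$ is connected, there is a lazy homotopy from the defect data to $e$, and pulling it back along the radial coordinate yields a map on each $\Sigma\times\Delta_x'$ that is constant along $C$ near the centre and equal to $e$ near the boundary. Only \emph{after} this step does one unpack the components under $\widehat G\cong G\times\g^{n-1}$ and set $g = \exp\big(\sum_i (z-x)^i\xi_i/i!\big)\,\widetilde h_0$ locally; the resulting $g$ then genuinely equals $e$ outside the discs because $\widetilde h_0$ already does. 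Your construction can be repaired along exactly these lines---insert a path in $G$ from $a^x_0(p)$ to $e$ parametrised by the radial coordinate---but as written the gluing fails.
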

\begin{proof}
$\Phi$ is obviously faithful. To show that it is also full, 
consider objects $A, A' \in \mathfrak{F}_{\rm bc}(X)$ and 
let $(g,k): \Phi(A) =(A,e) \to \Phi(A^\prime)=(A',e)$ be a morphism in $\mathfrak{F}(X)$. 
By definition, $A^\prime = \null^g A$ and $(\bm j^\ast g) k^{-1} =e$, i.e.\ $\bm j^\ast g = k \in C^\infty(\Sigma,K)$. 
This shows that $g: A \to A^\prime$ is a morphism 
in $\mathfrak{F}_{\rm bc}(X)$ and, indeed, $\Phi(g) = (g,k)$. 

To conclude the proof, we have to show that $\Phi$ is essentially surjective. 
Let $(A,h) \in \mathfrak{F}(X)$. 
Recall that the jet order truncation map 
$\widehat{G}^{\bm z}\to G^{\widehat{\bm z}}$ in \eqref{jet truncation map} is a trivial fibre bundle \cite{Vizmann} 
and consider a lift $\widehat h \in C^\infty(\Sigma,\widehat G^{\bm z})$ of 
$h \in C^\infty(\Sigma, K) \subset C^\infty(\Sigma, G^{\widehat{\bm z}})$. 
By the construction in the proof of Proposition \ref{prop: int WZ compute} 
(just consider the Lie group $\widehat G$ instead of $G$, noting that $\widehat{G}$ is connected since $G$ is), we obtain 
an extension $\widetilde{h} \in C^\infty(X,\widehat G)$ of $\widehat h$, i.e.\ such that $\bm \iota^\ast \widetilde{h} = \widehat{h}$,
with the following properties:
\begin{itemize}
  \item[(a)] the restriction of $\widetilde{h}$ to 
$\bigsqcup_{x \in \bm z} \Sigma \times \Delta_x \subset X$ 
is constant along $C$, where each $\Delta_x \subset C$ is a 
sufficiently small open disc centred at $x \in \bm z$, and
  \item[(b)] $\widetilde{h}$ takes the constant value $e \in \widehat G$ on an open
neighbourhood of $X \setminus \bigsqcup_{x \in \bm z} \Sigma \times \Delta^\prime_x$, 
where each $\Delta^\prime_x \supsetneq \Delta_x$ is a strictly larger open disc centred at $x \in \bm z$.
\end{itemize}
Using the diffeomorphism $\widehat G \cong G \times \g^{n-1}$ from \cite{Vizmann}, 
$\widetilde{h} \in C^\infty(X,\widehat G)$ can also be regarded as a tuple of maps on $X$, 
where $\widetilde h_0 \in C^\infty(X,G)$ is $G$-valued and 
$\widetilde h_i \in C^\infty(X,\g)$ is $\g$-valued, for $i=1,\ldots,n-1$. 
Below we use these data to construct $g \in C^\infty(X,G)$ 
such that $\bm{\iota}^\ast j_X^\ast g = \widehat h$. 
For each $x \in \bm z$, consider the local coordinate $z-x$ 
on $\Delta^\prime_x$ centred at $x$ 
and define $g$ on $\Sigma \times \Delta^\prime_x$ by 
\begin{equation*}
g \coloneqq \exp\bigg( \sum_{i=1}^{n-1}\frac{ (z-x)^i}{i!} \,\xi_i \bigg) \widetilde h_0,
\end{equation*}
where each $\xi_i \in C^\infty(\Sigma\times \Delta^\prime_x,\g)$, 
for $i=1,\ldots,n-1$, is a linear combination of the $\widetilde h_i$'s
and of their Lie brackets. 
Arguing by induction on $i$, the explicit expression of the $\xi_i$'s 
is obtained by imposing the condition 
$\bm{\iota}^\ast j_X^\ast g = \widehat{h}$. 
(Explicitly, using (a) one finds $\xi_1 \coloneqq \widetilde h_1$, $\xi_2 \coloneqq \widetilde h_2$, 
$\xi_3 \coloneqq \widetilde h_3 + \frac{1}{2} [\xi_1,\xi_2]$, \ldots, see \cite{Vizmann}.) 
So far, we defined $g$ only on 
$\bigsqcup_{x \in \bm z} \Sigma \times \Delta^\prime_x \subset X$. 
Recalling (b), $g$ can be extended smoothly by $e \in G$ outside of 
$\bigsqcup_{x \in \bm z} \Sigma \times \Delta^\prime_x \subset X$. 
This extension provides the desired $g \in C^\infty(X,G)$ such that 
$\bm{\iota}^\ast j_X^\ast g = \bm{\iota}^\ast \widetilde h = \widehat h$. 
In particular, by \eqref{bf j vs. j_X} we find $\bm j^\ast g = h$, from which it follows that 
$\bm j^\ast (\null^{g^{-1}}A) = \null^{h^{-1}} (\bm j^\ast A) \in \Omega^1(\Sigma,\k)$,
i.e.\ $\null^{g^{-1}} A$ is an object of $\mathfrak{F}_{\rm bc}(X)$,
and that $(g,e) : \Phi(\null^{g^{-1}} A) \to (A,h)$ is a morphism 
in $\mathfrak{F}(X)$. This completes the proof. 
\end{proof}

In other words, Theorem \ref{thm: equiv Fbc to F} expresses the fact that the gauge field theories
described by the two groupoids $\mathfrak{F}_{\rm bc}(X)$ and $\mathfrak{F}(X)$ are equivalent. 
That is, one may either use fields $A \in \overline{\Omega}^1(X, \g)$ satisfying the strict boundary condition 
$\bm j^\ast A\in \Omega^1(\Sigma, \k)$, or alternatively one may use pairs of fields 
$(A, h) \in \overline{\Omega}^1(X, \g) \times C^\infty(\Sigma, G^{\widehat{\bm z}})$ such that $\bm j^\ast A$ lies in 
$ \Omega^1(\Sigma, \k)$ only up to a gauge transformation determined by the given additional field $h$ on the surface defect $\widehat{D}$. 
The additional field $h \in C^\infty(\Sigma, G^{\widehat{\bm z}})$ living on the surface defect $\widehat{D}$ is called the \emph{edge mode}.

\medskip

Using the equivalence $\Phi$ from Theorem \ref{thm: equiv Fbc to F}, we extend the gauge invariant action 
$S_\omega$ on the groupoid $\mathfrak{F}_{\rm bc}(X)$ to the field groupoid $\mathfrak{F}(X)$
including the edge modes. The extended action 
$S_\omega^{\rm ext}$ on $\mathfrak{F}(X)$ is uniquely determined by 
\begin{subequations} \label{4dCS map em}
\begin{equation} 
S_\omega^{\rm ext} \circ \Phi = S_\omega. 
\end{equation}
Explicitly, for each $(A,h) \in \mathfrak{F}(X)$, we use 
that $\Phi$ is essentially surjective to choose an object 
$\widetilde A \in \mathfrak{F}_{\rm bc}(X)$ and a morphism 
$(g,k) : \Phi(\widetilde A) \to (A,h)$ in $\mathfrak{F}(X)$ and set 
\begin{equation}
S_\omega^{\rm ext}(A,h) \coloneqq S_\omega(\widetilde{A}). 
\end{equation}
\end{subequations}
Using that $\Phi$ is also full, one checks that the above definition 
actually gives a gauge invariant action $S_\omega^{\rm ext}$ on the field groupoid $\mathfrak{F}(X)$. 
In particular, we can choose $k=e$ and $g \in C^\infty(X, G)$ such that $\bm j^\ast g = h$ 
as in the proof of Theorem \ref{thm: equiv Fbc to F} and,
using also \eqref{reg S4d gauge transf} and Propositions \ref{prop: kinetic higher pole} 
and \ref{prop: int WZ compute higher}, we compute $S_\omega^{\rm ext}$ explicitly as
\begin{align} \label{Action edge modes}
S_\omega^{\rm ext}(A, h) = S_\omega(\null^{g^{-1}\!} A)
= S_\omega(A) + \frac{1}{2} \int_\Sigma \big\langle{\mkern-5mu}\big\langle d_\Sigma h h^{-1}, \bm j^\ast A \big\rangle{\mkern-5mu}\big\rangle_{\omega} - \frac{1}{2} \int_{\Sigma \times I} \widehat{h}^\ast \chi_{G^{\widehat{\bm z}}},
\end{align}
where $\widehat{h} \in C^\infty(\Sigma \times I, G^{\widehat{\bm z}})$ 
is any lazy homotopy between $h \in C^\infty(\Sigma, G^{\widehat{\bm z}})$ 
and the constant map $e \in C^\infty(\Sigma, G^{\widehat{\bm z}})$.
The action \eqref{Action edge modes} is to be compared with the action of ordinary $3$-dimensional 
(abelian) Chern-Simons theory given in \cite[(5.1)]{HomEdgeModes}.

\section{Passage to integrable field theories} \label{sec: passage to IFT}
In order to link 4-dimensional Chern-Simons theory to integrable field theories,
we introduce the full subgroupoid $\mathfrak{F}^{1,0,0}(X) \subset \mathfrak{F}(X)$ 
whose objects $(\mathcal L, h) \in \mathfrak{F}^{1,0,0}(X)$ 
are those objects of $\mathfrak{F}(X)$ (cf.\ \eqref{FM def}) which
satisfy the additional condition that $\mathcal L \in \Omega^{1,0,0}(X,\g) \subset \overline{\Omega}^1(X, \g)$ 
is a $(1,0,0)$-form on $X$, i.e.\ $\mathcal{L}$ has no $dz$ and $d\bar z$ components.
Let us mention that morphisms $(g,k): (\mathcal L, h) \to (\null^g \mathcal L, ({\bm j}^\ast g) h k^{-1})$ 
in $\mathfrak{F}^{1,0,0}(X)$ are then given by pairs of maps $(g,k) \in C^\infty(X,G)\times C^\infty(\Sigma, K)$ 
satisfying $\bar \partial g g^{-1} = 0$, which follows from the fact that, by definition, 
also $(\null^g \mathcal L, ({\bm j}^\ast g) h k^{-1})$ lies in $\mathfrak{F}^{1,0,0}(X)$. 
Explicitly, the groupoid introduced above reads as
\begin{equation}\label{F_Sigma groupoid}
\mathfrak{F}^{1,0,0}(X) \coloneqq \left\{
\begin{array}{ll}
\textup{Obj}: & (\mathcal L, h) \in \Omega^{1,0,0}(X,\g)\times C^\infty(\Sigma,G^{\widehat{\bm z}}),
 \;\; \textup{s.t.} \; \null^{h^{-1}} (\bm j^\ast \mathcal{L}) \in \Omega^1(\Sigma,\k),\\
\textup{Mor}: & (g,k): (\mathcal L, h) \to (\null^g \mathcal L, ({\bm j}^\ast g) h k^{-1}), \;\; \\
&\textup{with } \;g \in C^\infty(X, G)\; \text{s.t.}\; \bar\partial g g^{-1}=0 \;\textup{ and }\; k\in C^\infty(\Sigma, K).
\end{array}
\right.
\end{equation}

\begin{remark}
The inclusion functor $ \mathfrak{F}^{1,0,0}(X) \hookrightarrow  \mathfrak{F}(X)$ 
is by definition fully faithful. One might ask if it is also 
essentially surjective, hence an equivalence. 
By direct inspection, it is easy to realise that the answer is positive 
provided that, for each $(A,h) \in \mathfrak{F}(X)$, 
there exists $g \in C^\infty(X,G)$ such that 
$g^{-1} \bar \partial g = A^{0,0,1}$, where 
$A^{0,0,1} \in \Omega^{0,0,1}(X,\g)$ denotes the $(0,0,1)$-component of $A \in \overline{\Omega}^1(X,\g)$. 
In order to simplify the problem, suppose $G = \mathrm{GL}_N(\CC)$,
let us fix a point $a \in \Sigma$ and consider the problem of finding 
such a $g$ on $C_a \coloneqq \{a\} \times C \subset X$. 
Then an argument based on the inverse function theorem for Banach manifolds 
and elliptic regularity, cf.\ \cite[Section 5]{AtiyahBott}, 
shows that the above equation admits local solutions $\{ g_\alpha \}$ 
subordinate to a cover $\{U_\alpha\subseteq C_a\}$ by sufficiently small open subsets of $C_a$. 
As a consequence, $\{ g_{\alpha\beta}\} \coloneqq \{ g_\alpha g_\beta^{-1} \}$ is a \v{C}ech 
1-cocycle on $C_a$ taking values in the sheaf of holomorphic $G$-valued functions. 
The latter is always trivial by \cite[Theorem 30.5]{Foster} 
because $C_a$ is a non-compact Riemann surface. 
This allows us to find a \v{C}ech 0-cochain $\{ h_\alpha \}$ trivialising 
$\{ g_{\alpha\beta} \}$. It follows that setting $g \coloneqq h_\alpha^{-1}g_\alpha$ 
on each $U_\alpha$ defines $g \in C^\infty(C_a,G)$
such that $g^{-1} \bar \partial g = A^{0,0,1}$, as required. 
(Note that, in contrast to $\{g_\alpha\}$, $\{h_\alpha\}$ is holomorphic, 
which is crucial to check that $g$ indeed solves the above equation.) 
Extending this argument to the whole of $X$ 
and for arbitrary $G$ requires to establish smoothly $\Sigma$-parametrised 
analogues with target an arbitrary Lie group $G$ of the arguments 
in \cite[Section 5]{AtiyahBott} and \cite[Theorem 30.5]{Foster}. 
Since essential surjectivity of $\mathfrak{F}^{1,0,0}(X) \hookrightarrow  \mathfrak{F}(X)$ 
is not needed for our constructions below, we shall not further address this issue.
\end{remark}

Since $\mathfrak{F}^{1,0,0}(X) \subset \mathfrak{F}(X)$ is 
a subgroupoid, we can restrict the action on $\mathfrak{F}(X)$ defined in \eqref{4dCS map em} 
to $\mathfrak{F}^{1,0,0}(X)$. From the explicit expression \eqref{Action edge modes}, we obtain
\begin{equation} \label{Action sigma model}
S_\omega^{\rm ext}(\L, h) = \frac{\ii}{4 \pi} \int_X \big( \omega \wedge j^\ast_X \langle \L, \bar\partial \L \rangle \big)_{\rm reg}^{} + \frac{1}{2} \int_\Sigma \big\langle{\mkern-5mu}\big\langle d_\Sigma h h^{-1}, \bm j^\ast \L \big\rangle{\mkern-5mu}\big\rangle_{\omega}^{} - \frac{1}{2} \int_{\Sigma \times I} \widehat{h}^\ast \chi_{G^{\widehat{\bm z}}},
\end{equation}
where the simplification in the first term follows from $\L \in \Omega^{1,0,0}(X, \g)$
by definition of the subgroupoid $\mathfrak{F}^{1,0,0}(X) \subset \mathfrak{F}(X)$, cf.\ \eqref{F_Sigma groupoid}.

\medskip

Let us now derive the Euler-Lagrange equations corresponding to the action \eqref{Action sigma model}.
For this we have to consider variations of objects $(\mathcal{L},h)\in \mathfrak{F}^{1,0,0}(X)$,
i.e.\ variations $(\mathcal{L}',h') = (\mathcal{L} +\epsilon \ell , e^{\epsilon \chi}\, h)$, with
$\ell \in \Omega^{1,0,0}_c(X,\g)$ and $\chi\in C^\infty_c(\Sigma, \g^{\widehat{\bm z}})$,
satisfying the condition $\null^{h'^{-1}} (\bm j^\ast \mathcal{L}') \in \Omega^1(\Sigma,\k)$. 
Expanding this condition to first order in $\epsilon$, one finds that the variations are constrained by
\begin{equation}\label{Variation constraint}
h^{-1} \big( d_\Sigma \chi + [{\bm j}^\ast \mathcal{L},\chi] + {\bm j}^\ast \ell \big) h \in \Omega^1(\Sigma,\k).
\end{equation}
Varying the action \eqref{Action sigma model} and using \eqref{Variation constraint},
one obtains
\begin{align*}
\delta_{(\ell,\chi)}S_\omega^{\rm ext}(\mathcal{L},h) = \frac{\ii}{2\pi} \int_X  \big(\omega \wedge j_X^\ast \langle \ell, \bar\partial \mathcal{L} \rangle \big)_{\rm reg}^{} - \int_\Sigma \big\langle{\mkern-5mu}\big\langle \chi, d_\Sigma( {\bm j}^\ast \mathcal{L}) + \tfrac{1}{2} \big[{\bm j}^\ast \mathcal{L},{\bm j}^\ast \mathcal{L}\big] \big\rangle{\mkern-5mu}\big\rangle_{\omega}^{}.
\end{align*}
From bulk variations, i.e.\ $(\ell,\chi) = (\ell,0)$ with $\mathrm{supp}\, \ell \subset X\setminus D$ 
(note that the constraint \eqref{Variation constraint} is trivially satisfied),
we obtain the equation of motion
\begin{equation*}
\bar\partial \mathcal{L} = 0 \qquad\text{on } X\setminus D.
\end{equation*}
Because $\mathcal{L}\in \Omega^{1,0,0}(X,\g)$ is a smooth $1$-form on $X$,
this equation implies that $\mathcal{L}$ is holomorphic on all of $C$, i.e.\ 
\begin{equation}\label{Lax holomorphic}
\bar\partial \mathcal{L} = 0 \qquad\text{on } X.
\end{equation}
(Recall that $X = \Sigma\times C$, where $C=\CP\setminus {\bm \zeta}$ is the Riemann sphere
with the zeroes of $\omega$ removed. In particular, solutions to \eqref{Lax holomorphic} on $X$
may have poles at ${\bm \zeta} \subset \CP$ with coefficients in $\Omega^{1}(\Sigma, \g)$, 
as required for Lax connections in integrable field theories.)

To study variations with support on the defect, we first observe that, given any
$\chi\in C^\infty_c(\Sigma,\g^{\widehat{\bm z}})$, there exists $\ell \in \Omega^1_c(X,\g)$ such that
the pair $(\ell,\chi)$ satisfies \eqref{Variation constraint}.
Indeed, the equation ${\bm j}^\ast \ell = - d_\Sigma \chi - [{\bm j}^\ast \mathcal{L},\chi ]$
on the jets of $\ell$ can be solved for an arbitrary right hand side by the same method as in the proof of 
Theorem \ref{thm: equiv Fbc to F}. Hence, we obtain the equation of motion
\begin{equation}\label{Bdy equation of motion}
d_\Sigma( {\bm j}^\ast \L) + \tfrac{1}{2}\big[ {\bm j}^\ast \L, {\bm j}^\ast \L\big]=0
\qquad\text{on } \Sigma,
\end{equation}
which means that ${\bm j}^\ast \mathcal{L}\in \Omega^1(\Sigma,\g^{\widehat{\bm z}})$ defines
a flat $G^{\widehat{\bm z}}$-connection on $\Sigma$.

\medskip

To perform the passage to integrable field theories on $\Sigma$, we shall consider
suitable solutions to the bulk equation of motion \eqref{Lax holomorphic} with properties that resemble
those of Lax connections. We will do this in two steps. First, we restrict attention to 
solutions that are meromorphic on $\CP$. Subsequently, 
we will further restrict attention to those solutions for which the defect equation of motion 
\eqref{Bdy equation of motion} can be lifted to a flatness condition for $\L$ on all of $X$. 
(Note that we do \emph{not} solve the defect equation 
of motion \eqref{Bdy equation of motion} on $\Sigma$.)

More precisely, we introduce the following
\begin{definition}\label{OmegaM subspace}
Denoting by $m_y \in \ZZ_{\geq 1}$ the order of the zero $y\in {\bm \zeta}$ 
of $\omega$, we let $\Omega^{r,0,0}_\M(X, \g) \subset \Omega^{r,0,0}(X, \g)$ be
the subspace of those $\g$-valued $(r,0,0)$-forms on $X$ that are 
meromorphic on $\CP$ with poles at each $y \in \bm \zeta$ of order at most $m_y$. 
\end{definition}

Note that, by definition, every $\mathcal{L}\in\Omega^{1,0,0}_\M(X, \g) $
is a solution to the bulk equation of motion \eqref{Lax holomorphic}.
Furthermore, every $\mathcal{L}\in\Omega^{1,0,0}_\M(X, \g) $
can be written explicitly as
\begin{equation}\label{L Ansatz}
\L = \mathcal{L}_{\rm c} + \sum_{y \in \bm \zeta \setminus \{ \infty \}} \sum_{q=0}^{m_y - 1} \frac{\L^y_q}{(z - y)^{q+1}}
+ \sum_{q=0}^{m_\infty-1} \L^\infty_q z^{q+1},
\end{equation}
where $\mathcal{L}_{\rm c}\in \Omega^1(\Sigma,\g)$ and $\mathcal{L}^y_q\in\Omega^1(\Sigma,\g)$, for every
$y\in{\bm \zeta}$ and $q=0,\ldots,m_y-1$, are $\g$-valued 1-forms on $\Sigma$.
Note that the first term of \eqref{L Ansatz} is constant on $\CP$, while the second and third
terms describe the poles at $y\in{\bm \zeta}\setminus\{\infty\}$ and 
at the zero $y=\infty$ of $\omega$, respectively. 

The $1$-form $\L\in \Omega^{1,0,0}_\M(X, \g)$ is still too general to serve as a Lax connection
for integrable field theories. The reason is that the flatness condition,
which is encoded by the defect equation of motion \eqref{Bdy equation of motion}, 
is a priori imposed only for the restriction via ${\bm j}^\ast$ to $\Sigma$ of
(the jets of) the curvature 
$F_\Sigma(\mathcal{L}) \coloneqq d_\Sigma \mathcal{L} +\frac{1}{2}[\mathcal{L},\mathcal{L}] \in\Omega^{2,0,0}(X,\g)$. 
(Note that ${\bm j}^\ast F_\Sigma(\mathcal{L}) = 
d_\Sigma ({\bm j}^\ast\mathcal{L}) +\frac{1}{2}[{\bm j}^\ast \mathcal{L},{\bm j}^\ast \mathcal{L}]$
because ${\bm j}^\ast$ given in \eqref{jet map forms} preserves both the differential $d_\Sigma$
and the Lie bracket $[\cdot,\cdot]$.) In order to upgrade the flatness condition 
from ${\bm j}^\ast F_\Sigma(\mathcal{L}) = 0$ on $\Sigma$ 
(cf.\ \eqref{Bdy equation of motion}) to $F_\Sigma(\mathcal{L})=0$
on $X$, i.e.\ prior to applying ${\bm j}^\ast$, we require the following 
\begin{definition}\label{def: admissible connection}
A form $\mathcal{L}\in \Omega_\M^{1,0,0}(X,\g)$ is called {\em admissible}
if $F_\Sigma(\mathcal{L}) \in\Omega^{2,0,0}_\M(X,\g)$. 
We denote by $\Omega^{1,0,0}_{\rm adm}(X,\g) \subset \Omega_\M^{1,0,0}(X,\g)$ the subspace of admissible forms.
\end{definition}

\begin{example} \label{admissible example}
Note that not every $\mathcal{L}\in \Omega_\M^{1,0,0}(X,\g)$ is admissible,
because the term $[\L,\L]$ in the curvature may have poles at $y\in{\bm \zeta}$
of order greater than $m_y$. A simple algebraic condition which ensures that
$\mathcal{L}$, written in the form \eqref{L Ansatz}, is admissible is given by
\begin{equation*}
\big[\mathcal{L}^y_q, \mathcal{L}^y_{q'}\big] =0,
\end{equation*}
for all $y\in{\bm \zeta}$ and $q,q'$ with $q+q'+2>m_y$. One way to 
achieve this is the following: for each $y \in \bm \zeta$, we introduce a 
coordinate $\sigma_y : \Sigma \to \RR$ on $\Sigma$ and take the $1$-forms 
$\mathcal{L}^y_q \in \Omega^1(\Sigma,\g)$, for $q = 0, \ldots, m_y-1$, 
to be proportional to $d \sigma_y$. 
For example, to produce a Lorentzian integrable field theory, we fix 
a Minkowski metric on $\Sigma$, let $\sigma^\pm$ denote a corresponding pair 
of null coordinates,  choose a subset $\bm \zeta^+ \subset \bm \zeta$ 
and then set $\sigma_y = \sigma^+$ for $y \in \bm \zeta^+$ and 
$\sigma_y = \sigma^-$ for $y \in \bm \zeta \setminus \bm \zeta^+$
in the complement, cf.\ \cite{Delduc:2019bcl}.
\end{example}

\begin{lemma} \label{injectivity jast}
For every $r=0,1,2$, the restriction $\bm j^\ast : \Omega^{r,0,0}_\M(X, \g) \to \Omega^r(\Sigma, \g^{\widehat{\bm z}})$ 
of the morphism \eqref{jet map forms} to the subspace $\Omega^{r,0,0}_\M(X, \g)\subset \Omega^r(X,\g)$ 
introduced in Definition \ref{OmegaM subspace} is injective.
\end{lemma}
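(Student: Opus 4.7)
The plan is to reduce injectivity to the elementary fact that a non-zero meromorphic function on $\CP$ has total zero order equal to its total pole order. First I would unpack the definitions. An element $\eta \in \Omega^{r,0,0}_\M(X,\g)$ has no $dz$ or $d\bar z$ legs, and by Definition \ref{OmegaM subspace} its coefficients depend meromorphically on $z \in \CP$ with poles of order at most $m_y$ at each $y \in \bm \zeta$. Unpacking the jet map \eqref{jet map forms}, the kernel condition $\bm j^\ast \eta = 0$ is equivalent to $\iota_x^\ast(\partial_z^p \eta) = 0$ for all $x \in \bm z$ and $0 \leq p \leq n_x - 1$, which says precisely that $\eta$, regarded as a meromorphic family (in $z$) of $\g$-valued $r$-forms on $\Sigma$, vanishes to order at least $n_x$ at each pole $x \in \bm z$ of $\omega$.

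Next I would reduce to scalars. Fix $(t, \bar t) \in \Sigma$, an $r$-vector $v$ at this point and a linear functional $\lambda \in \g^\ast$, and consider
\begin{equation*}
f_{v,\lambda}(z) \coloneqq \lambda\big(\eta(t,\bar t,z)(v)\big),
\end{equation*}
which is a scalar meromorphic function on $\CP$ with pole orders at most $m_y$ at $y \in \bm \zeta$ and zero orders at least $n_x$ at $x \in \bm z$. The key arithmetic input is that the divisor $\sum_{y \in \bm \zeta} m_y \cdot y - \sum_{x \in \bm z} n_x \cdot x$ of the meromorphic $1$-form $\omega$ on $\CP$ has degree equal to that of the canonical divisor, namely $-2$; hence $\sum_{x \in \bm z} n_x = \sum_{y \in \bm \zeta} m_y + 2$.

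Finally I would combine this with the principal divisor relation on $\CP$: for any non-zero meromorphic function, total zero order equals total pole order. Applied to $f_{v,\lambda}$ this yields $\sum_{x \in \bm z} n_x \leq \sum_{y \in \bm \zeta} m_y$, contradicting the identity above, so necessarily $f_{v,\lambda} \equiv 0$. By the arbitrariness of $(t, \bar t)$, $v$, and $\lambda$, I would then conclude $\eta = 0$. I do not anticipate any serious obstacle, since the argument proceeds by pointwise contraction and the only nontrivial ingredient is the degree identity for divisors on $\CP$, which is just a bookkeeping consequence of $\omega$ being a meromorphic $1$-form; the hypothesis $|\bm \zeta| \geq 1$ from the conventions at the end of Section \ref{sec: intro} ensures that there is no degenerate case to worry about.
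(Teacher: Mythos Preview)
Your proof is correct and follows essentially the same strategy as the paper's: both hinge on the identity $\sum_{x \in \bm z} n_x = \sum_{y \in \bm \zeta} m_y + 2$ (from $\omega$ being a meromorphic $1$-form on $\CP$) and conclude by a zero-versus-pole count. The only cosmetic difference is that the paper clears poles by multiplying with $P(z) = \prod_{y \in \bm\zeta \setminus \{\infty\}} (z-y)^{m_y}$ and then observes that the resulting polynomial has more vanishing conditions than its degree permits, whereas you phrase the same count via the degree-zero property of principal divisors on $\CP$ after reducing to scalar functions.
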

\begin{proof}
By definition, any $\eta \in \Omega^{r,0,0}_\M(X, \g)$ is meromorphic 
on $\CP$ with poles at all $y \in \bm \zeta$ of order at most $m_y$ 
and with coefficients in $\Omega^r(\Sigma, \g)$. We need to show that 
if $\iota_x^\ast (\partial_z^p \eta) = 0$, for all $x \in \bm z$
and $p = 0, \ldots, n_x - 1$, then $\eta = 0$.

Consider the polynomial $P(z) \coloneqq \prod_{y \in \bm \zeta \setminus \{ \infty \}} (z - y)^{m_y}$. 
Then $P \eta$ is a polynomial in $z$ of order at most $\sum_{y \in \bm \zeta} m_y$ with coefficients in $\Omega^r(\Sigma, \g)$. 
Since by assumption $\iota_x^\ast (\partial_z^p \eta) = 0$,
for all $x \in \bm z$ and $p = 0, \ldots, n_x - 1$, it follows 
by the Leibniz rule that $\iota_x^\ast (\partial_z^p (P\eta)) = 0$,
for every $x \in \bm z$ and $p = 0, \ldots, n_x - 1$. 
Since $\omega$ is a meromorphic $1$-form on $\CP$,
we have $\sum_{x \in \bm z} n_x = \sum_{y \in \bm \zeta} m_y + 2$, 
which is greater than the degree of the polynomial $P\eta$. 
It follows that $P \eta = 0$ and hence $\eta = 0$.
\end{proof}

\begin{proposition}\label{propo lift flatness}
For any admissible $\mathcal{L} \in \Omega_{\rm adm}^{1,0,0}(X,\g)$, 
the defect equation of motion \eqref{Bdy equation of motion}, 
i.e.\ $\bm j^\ast F_\Sigma(\L) = 0$ on $\Sigma$,  
is equivalent to $F_\Sigma(\L) = 0$ on $X$.
\end{proposition}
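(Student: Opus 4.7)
The plan is to reduce the statement to the injectivity result in Lemma \ref{injectivity jast}. One direction is immediate: if $F_\Sigma(\L)=0$ on $X$, then applying the jet-evaluation morphism $\bm j^\ast$ gives $\bm j^\ast F_\Sigma(\L) = 0$ on $\Sigma$, using that $\bm j^\ast$ is a morphism of differential graded Lie algebras (it preserves $d_\Sigma$ and the bracket, as already observed in the paragraph preceding Definition \ref{def: admissible connection}).

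For the nontrivial direction, I would start from the admissibility hypothesis: by definition, $F_\Sigma(\L) \in \Omega^{2,0,0}_\M(X,\g)$, i.e.\ the curvature is itself a meromorphic $(2,0,0)$-form on $\CP$ with poles at each $y\in\bm\zeta$ of order at most $m_y$. This is precisely the class of forms covered by Lemma \ref{injectivity jast} in the case $r=2$. Assuming the defect equation of motion $\bm j^\ast F_\Sigma(\L) = 0$ on $\Sigma$, the injectivity of $\bm j^\ast$ restricted to $\Omega^{2,0,0}_\M(X,\g)$ then yields $F_\Sigma(\L) = 0$ on all of $X$.

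The only genuine content is thus the verification that admissibility is exactly the condition needed to apply Lemma \ref{injectivity jast}. This is why Definition \ref{def: admissible connection} was imposed: without it, the bracket $[\L,\L]$ appearing in $F_\Sigma(\L)$ could produce poles at $y\in\bm\zeta$ of order strictly larger than $m_y$ (cf.\ Example \ref{admissible example}), and the curvature would fall outside the subspace $\Omega^{2,0,0}_\M(X,\g)$ on which the counting argument in Lemma \ref{injectivity jast} — comparing $\deg(P\,F_\Sigma(\L)) \leq \sum_{y\in\bm\zeta} m_y$ with $\sum_{x\in\bm z} n_x = \sum_{y\in\bm\zeta} m_y + 2$ — works. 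There is no real obstacle here; the whole proof amounts to one line invoking Lemma \ref{injectivity jast} applied to $\eta = F_\Sigma(\L)$.
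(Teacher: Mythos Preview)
Your proof is correct and matches the paper's approach essentially line for line: both directions are handled exactly as in the paper, with the nontrivial implication reduced to Lemma \ref{injectivity jast} applied to $F_\Sigma(\L)\in\Omega^{2,0,0}_\M(X,\g)$ via the admissibility hypothesis. The additional commentary you include on why admissibility is needed is accurate but already covered by Example \ref{admissible example} and the surrounding discussion.
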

\begin{proof}
Suppose $\bm j^\ast F_\Sigma(\L) = 0$. Since $\L$ is admissible, $F_\Sigma(\L) \in \Omega^{2,0,0}_\M(X, \g)$ 
and hence $F_\Sigma(\L) = 0$ by Lemma \ref{injectivity jast}. The converse is obvious.
\end{proof}

The above results motivate us to introduce a suitable subgroupoid of $\mathfrak{F}^{1,0,0}(X)$ 
whose objects $(\L,h)$ are such that $\L\in \Omega_{\rm adm}^{1,0,0}(X,\g)$ is 
admissible in the sense of Definition \ref{def: admissible connection}. In particular, 
such $\L$'s satisfy the bulk equation
of motion \eqref{Lax holomorphic}, are meromorphic on $\CP$ with poles
of the form \eqref{L Ansatz} and, by Proposition \ref{propo lift flatness}, 
the defect equation of motion \eqref{Bdy equation of motion} is equivalent to
flatness $F_{\Sigma}(\L)=0$ on $X$. In other words, such $\L$'s satisfy all
the necessary properties of Lax connections for integrable field theories.
Concerning morphisms $(g,k) : (\L,h)\to (\null^{g}\L,({\bm j}^\ast g)hk^{-1})$
between such objects, by definition of the groupoid $\mathfrak{F}^{1,0,0}(X)$ in \eqref{F_Sigma groupoid}
we have that $g\in C^\infty(X,G)$ is holomorphic on $C$. In order to preserve
the pole structure \eqref{L Ansatz} of admissible $\L$'s under gauge transformations,
we further restrict our attention to those $g$ that are holomorphic on all of $\CP$, and hence constant along $\CP$.
Summing up this discussion, we introduce the following (not necessarily full)
subgroupoid of \eqref{F_Sigma groupoid}
\begin{equation}\label{F_lax groupoid}
\mathfrak{F}_{\rm Lax}(X) \coloneqq \left\{
\begin{array}{ll}
\textup{Obj}: & (\mathcal L, h) \in \Omega_{\rm adm}^{1,0,0}(X,\g)\times C^\infty(\Sigma,G^{\widehat{\bm z}}),
 \;\; \textup{s.t.} \; \null^{h^{-1}} (\bm j^\ast \mathcal{L}) \in \Omega^1(\Sigma,\k),\\
\textup{Mor}: & (g,k): (\mathcal L, h) \to (\null^g \mathcal L, ({\bm j}^\ast g) h k^{-1}), \;\; \\
& \textup{with } \;g \in C^\infty(\Sigma,G) \;\textup{ and }\; k\in C^\infty(\Sigma, K),
\end{array}
\right.
\end{equation}
where we are implicitly identifying a map $g \in C^\infty(\Sigma,G)$ with its pullback along 
the projection $p_\Sigma : X \to \Sigma$. Under this identification, we have that ${\bm j}^\ast g = \Delta(g)$, where  
$\Delta : G\to G^{\widehat{\bm z}}\,,~g\mapsto (g)_{x\in{\bm z}}$ is the diagonal map
to the defect group \eqref{defect group higher}.

With these preparations, we are now ready to describe how $2$-dimensional
integrable field theories arise from $4$-dimensional Chern-Simons theory.
Consider the groupoid
\begin{equation}\label{F_2d groupoid}
\mathfrak{F}_{\rm 2d}(\Sigma) \coloneqq \left\{
\begin{array}{ll}
\textup{Obj}: & h \in C^\infty(\Sigma,G^{\widehat{\bm z}}),\\
\textup{Mor}: & (g,k): h \to  \Delta(g) h k^{-1},  \;\; \\
& \textup{with } \;g \in C^\infty(\Sigma,G) \;\textup{ and }\; k\in C^\infty(\Sigma, K),
\end{array}
\right.
\end{equation}
of $G^{\widehat{\bm z}}$-valued fields on $\Sigma$ and note that there exists a forgetful functor
\begin{equation}\label{projection functor}
\pi : \mathfrak{F}_{\rm Lax}(X) \longrightarrow \mathfrak{F}_{\rm 2d}(\Sigma)
\end{equation}
that sends an object $(\L,h)$ to $h$ and a
morphism $(g,k): (\mathcal L, h) \to (\null^g \mathcal L, ({\bm j}^\ast g) h k^{-1})$
to $(g,k) :  h \to  \Delta(g) h k^{-1}$. 
If this functor was fully faithful and essentially surjective, i.e.\ 
an equivalence, then we could transfer the action
\eqref{Action sigma model} to an action
\begin{equation}\label{eqn: 2d action 1}
S_\omega^{\rm 2d} \coloneqq S_\omega^{\rm ext} \circ \pi^{-1}
\end{equation}
defined on the groupoid $\mathfrak{F}_{\rm 2d}(\Sigma)$ in \eqref{F_2d groupoid}, 
where $\pi^{-1} : \mathfrak{F}_{\rm 2d}(\Sigma) \to \mathfrak{F}_{\rm Lax}(X)$
denotes a quasi-inverse of $\pi$. 
Gauge invariance of $S_\omega^{\rm ext}$ entails that $S_\omega^{\rm 2d}$ does not depend
on the choice of quasi-inverse.
While \eqref{projection functor} is clearly a faithful functor, fullness and essential surjectivity
do not appear to be automatic. These properties of the functor $\pi$ can be related to
existence and uniqueness of solutions $\L\in\Omega^{1,0,0}_{\rm adm}(X,\g)$ 
for a fixed  $h\in C^\infty(\Sigma,G^{\widehat{\bm z}})$ to the condition $\null^{h^{-1}}({\bm j}^\ast \L) \in \Omega^1(\Sigma,\k)$
on objects $(\L,h)$ of $\mathfrak{F}_{\rm Lax}(X)$, cf.\ \eqref{F_lax groupoid}.
\begin{proposition}
The functor $\pi$ in \eqref{projection functor} is essentially surjective 
if and only if it is surjective on objects, i.e.\ for each $h\in C^\infty(\Sigma,G^{\widehat{\bm z}})$ 
there exists $\L\in\Omega^{1,0,0}_{\rm adm}(X,\g)$ such that $(\L,h)\in \mathfrak{F}_{\rm Lax}(X)$.
It is full if and only if for each $h\in C^\infty(\Sigma,G^{\widehat{\bm z}})$ 
there exists at most one object of the form $(\L,h)$ in $\mathfrak{F}_{\rm Lax}(X)$.
\end{proposition}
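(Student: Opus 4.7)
The plan is to handle the two equivalences separately, exploiting that $\pi$ is manifestly faithful and that a morphism $(\L_1,h_1) \to (\L_2,h_2)$ in $\mathfrak{F}_{\rm Lax}(X)$ is precisely a pair $(g,k) \in C^\infty(\Sigma,G) \times C^\infty(\Sigma,K)$ satisfying $\L_2 = \null^g \L_1$ and $h_2 = \Delta(g)\, h_1\, k^{-1}$, the latter using the identity $\bm j^\ast g = \Delta(g)$ valid for $g \in C^\infty(\Sigma, G)$ viewed as a map on $X$ via pullback along $p_\Sigma$.

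For the assertion about essential surjectivity, the ``if'' direction is tautological. For the ``only if'', given $h \in C^\infty(\Sigma, G^{\widehat{\bm z}})$, essential surjectivity produces an object $(\L,h') \in \mathfrak{F}_{\rm Lax}(X)$ together with an isomorphism $(g,k): h' \to h$ in $\mathfrak{F}_{\rm 2d}(\Sigma)$, so that $h = \Delta(g)\, h'\, k^{-1}$. It then suffices to verify that $(\null^g \L, h)$ is itself an object of $\mathfrak{F}_{\rm Lax}(X)$, as this immediately exhibits $h$ in the image of $\pi$ on objects. For the assertion about fullness, the ``only if'' direction applies fullness to the identity morphism $(e,e): h \to h$ between two lifts $(\L_1, h)$ and $(\L_2, h)$ of the same $h$; faithfulness forces the lift to be $(e,e): (\L_1,h) \to (\L_2, h)$, whence $\L_2 = \null^e \L_1 = \L_1$. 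For the ``if'' direction, given a morphism $(g,k): h_1 \to h_2$ in $\mathfrak{F}_{\rm 2d}(\Sigma)$ and lifts $(\L_i, h_i) \in \mathfrak{F}_{\rm Lax}(X)$ for $i=1,2$, I would construct the candidate object $(\null^g \L_1, h_2)$ and use the uniqueness hypothesis to identify it with $(\L_2, h_2)$, thereby lifting $(g,k)$ to a morphism $(\L_1,h_1) \to (\L_2,h_2)$ in $\mathfrak{F}_{\rm Lax}(X)$.

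The main obstacle, shared by both parts, is verifying the closure statement that $(\null^g \L,\, \Delta(g)\, h\, k^{-1})$ remains an object of $\mathfrak{F}_{\rm Lax}(X)$ whenever $(\L,h)$ is and $g \in C^\infty(\Sigma,G)$, $k \in C^\infty(\Sigma,K)$. Since $g$ is constant along $C$, the term $-dg\, g^{-1} = -d_\Sigma g\, g^{-1}$ lies in $\Omega^{1,0,0}(X,\g)$ and is meromorphic on $\CP$ (indeed, constant in the $C$-direction, hence pole-free), so $\null^g \L \in \Omega^{1,0,0}_\M(X,\g)$ with the same pole bounds as $\L$; admissibility is then inherited through the transformation law $F_\Sigma(\null^g \L) = g\, F_\Sigma(\L)\, g^{-1}$, valid precisely because $g$ does not depend on the $C$-coordinate. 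Finally, the boundary condition reduces, via $\bm j^\ast(\null^g \L) = \null^{\Delta(g)}(\bm j^\ast \L)$ and a short cancellation, to $\null^k\bigl(\null^{h^{-1}}(\bm j^\ast \L)\bigr)$, which lies in $\Omega^1(\Sigma,\k)$ because $\null^{h^{-1}}(\bm j^\ast \L) \in \Omega^1(\Sigma, \k)$ by assumption and the action of $k \in C^\infty(\Sigma,K)$ preserves $\k$-valued forms (the inhomogeneous term $-dk\, k^{-1}$ is the pullback of the right Maurer-Cartan form of $K$, hence $\k$-valued, and $\mathrm{Ad}_K$ preserves $\k$).
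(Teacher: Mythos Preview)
Your proof is correct and follows essentially the same approach as the paper's: the same case split, the same use of the gauge-transformed object $\null^g\L$ to exhibit surjectivity on objects, and the same uniqueness argument to establish fullness. The only cosmetic differences are the direction of the chosen isomorphism in the essential surjectivity argument (you go $h' \to h$, the paper goes $h \to h'$ and inverts) and your use of a direct argument for the ``only if'' in fullness where the paper argues by contraposition; you are also more explicit than the paper about the closure verification that $(\null^g\L, \Delta(g)hk^{-1})$ remains in $\mathfrak{F}_{\rm Lax}(X)$, which is helpful since the paper leaves this implicit in the definition of the groupoid.
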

\begin{proof}
For the first statement, the implication ``$\Leftarrow$'' is obvious. To prove
the implication ``$\Rightarrow$'', let us assume that $\pi$ is essentially surjective.
Then there exists, for each $h\in C^\infty(\Sigma,G^{\widehat{\bm z}})$,
an object $(\L',h')$ in  $ \mathfrak{F}_{\rm Lax}(X)$ and a morphism
$(g,k) : h\to h' = \pi(\L',h')$ in $\mathfrak{F}_{\rm 2d}(\Sigma)$. 
Setting $\L \coloneqq \null^{g^{-1}}\L' \in\Omega^{1,0,0}_{\rm adm}(X,\g)$,
we obtain
\begin{equation*}
\null^{h^{-1}}({\bm j}^\ast \L) = \null^{h^{-1}\Delta(g^{-1})}({\bm j}^\ast \L') = 
\null^{k^{-1}h'^{-1}}({\bm j}^\ast \L') \in \Omega^1(\Sigma,\k),
\end{equation*}
where in the second step we used $h' = \Delta(g) h k^{-1}$. The
last step then follows from $\null^{h'^{-1}}({\bm j}^\ast \L') \in \Omega^1(\Sigma,\k)$,
as $(\L',h')$ is by hypothesis an object in  $ \mathfrak{F}_{\rm Lax}(X)$,
and the fact that $k\in C^\infty(\Sigma,K)$ is a map to the subgroup $K\subset G^{\widehat{\bm z}}$.

Let us consider now the second statement.
We prove the implication ``$\Rightarrow$'' by contraposition.
Suppose that there exist objects $(\L,h),(\L',h)$ in $ \mathfrak{F}_{\rm Lax}(X)$
such that $\L'\neq \L$. Then there does not exist a morphism
$(\L,h)\to (\L',h)$ in $ \mathfrak{F}_{\rm Lax}(X)$ that
maps under $\pi$ to the identity $\id : h\to h$ in $ \mathfrak{F}_{\rm 2d}(\Sigma)$, 
hence $\pi$ is not full. To prove the implication ``$\Leftarrow$'',
let $(\L,h),(\L',h')$ be arbitrary objects in $ \mathfrak{F}_{\rm Lax}(X)$
and consider any morphism $(g,k) : h\to h'$ in $ \mathfrak{F}_{\rm 2d}(\Sigma)$.
We define the morphism $(g,k) : (\L,h)\to (\null^g \L,h')$ in
$ \mathfrak{F}_{\rm Lax}(X)$ and observe that by hypothesis $\null^{g}\L =\L'$.
Hence, we obtain a morphism $(g,k) : (\L,h)\to (\L',h')$ in
$ \mathfrak{F}_{\rm Lax}(X)$ and thereby prove that $\pi$ is full.
\end{proof}

\begin{corollary} \label{cor: projection functor equivalence}
The functor $\pi$ in \eqref{projection functor} is an equivalence of groupoids if
and only if for each $h\in C^\infty(\Sigma,G^{\widehat{\bm z}})$ 
there exists a unique $\L\in\Omega^{1,0,0}_{\rm adm}(X,\g)$ such that $(\L,h)\in \mathfrak{F}_{\rm Lax}(X)$,
i.e.\ such that $\null^{h^{-1}}({\bm j}^\ast \L) \in \Omega^1(\Sigma,\k)$.
\end{corollary}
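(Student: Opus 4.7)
The plan is to observe that this corollary is a direct consequence of the preceding Proposition combined with the (already noted) faithfulness of $\pi$. Recall that a functor between groupoids is an equivalence if and only if it is fully faithful and essentially surjective. Since the text already remarks that $\pi$ in \eqref{projection functor} is clearly faithful, the remaining task is to encode fullness and essential surjectivity in terms of existence and uniqueness of an admissible $\L$ for each $h$.

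First I would unpack the ``if'' direction. Assuming that for each $h \in C^\infty(\Sigma, G^{\widehat{\bm z}})$ there exists a unique $\L \in \Omega^{1,0,0}_{\rm adm}(X, \g)$ with $(\L, h) \in \mathfrak{F}_{\rm Lax}(X)$, existence gives surjectivity of $\pi$ on objects, and uniqueness gives the ``at most one'' condition. The previous Proposition then yields essential surjectivity from the former and fullness from the latter. Combined with faithfulness, this shows $\pi$ is an equivalence.

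For the ``only if'' direction, I would argue contrapositively on each half. If existence fails, i.e.\ there exists some $h$ not of the form $\pi(\L, h)$ for any admissible $\L$, then by the previous Proposition $\pi$ fails to be essentially surjective (since surjectivity on objects and essential surjectivity were shown equivalent there), hence $\pi$ cannot be an equivalence. Similarly, if uniqueness fails, then by the Proposition $\pi$ is not full, and again $\pi$ cannot be an equivalence. Therefore equivalence of $\pi$ forces both existence and uniqueness.

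Since all logical steps are direct applications of the preceding Proposition and the remark about faithfulness, there is no substantial obstacle; the proof is essentially a two-line bookkeeping exercise combining the two clauses of the Proposition with the definition of an equivalence of groupoids. The only point worth spelling out carefully is that ``fully faithful and essentially surjective'' is precisely the characterisation being invoked, so I would state this explicitly at the start to make the logical structure transparent.
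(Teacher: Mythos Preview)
Your proposal is correct and matches the paper's approach: the corollary is stated without proof in the paper precisely because it follows immediately from the preceding Proposition together with the already-noted faithfulness of $\pi$, exactly as you outline.
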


\begin{remark}
Let us note that whether or not the functor $\pi$ in \eqref{projection functor} is an equivalence will depend on the choice of
isotropic subalgebra $\k \subset \g^{\widehat{\bm z}}$ used to impose boundary conditions at the surface 
defects in \S\ref{sec: boundary conditions}. Examples of suitable choices when $n_x \leq 2$ for all $x \in \bm z$ 
can be found in \cite{Costello:2019tri, Delduc:2019whp}.
In light of the present work, the problem of classifying isotropic subalgebras $\k \subset \g^{\widehat{\bm z}}$ 
for which the condition $\null^{h^{-1}}({\bm j}^\ast \L) \in \Omega^1(\Sigma,\k)$ admits a unique solution for 
$\L$ in terms of $h$ is an important one in view of the broader open problem of classifying $2$-dimensional 
integrable field theories.
\end{remark}

Suppose now that the functor $\pi$ in \eqref{projection functor} is an equivalence.
Using Corollary \ref{cor: projection functor equivalence}, we can then construct a \emph{strict} inverse
\begin{equation*}
\pi^{-1} :   \mathfrak{F}_{\rm 2d}(\Sigma) \longrightarrow \mathfrak{F}_{\rm Lax}(X).
\end{equation*}
This functor sends an object $h$ to $(\L(h),h)$, where $\L(h)\in\Omega^{1,0,0}_{\rm adm}(X,\g)$
is the unique element such that $(\L(h),h)$ is an object in $\mathfrak{F}_{\rm Lax}(X)$.
To a morphism $(g,k) : h\to h'= \Delta(g)hk^{-1}$ in $\mathfrak{F}_{\rm 2d}(\Sigma)$, this functor assigns the 
morphism $(g,k) : (\L(h),h)\to (\L(h'),h')$ in $\mathfrak{F}_{\rm Lax}(X)$,
where $\L(h') = \L(\Delta(g)hk^{-1}) = \null^{g} \L(h)$ by the uniqueness of 
Corollary \ref{cor: projection functor equivalence}. 
Using this description of $\pi^{-1}$, we obtain an explicit expression for the
action in \eqref{eqn: 2d action 1} 
\begin{equation}\label{eqn: 2d action}
S_{\omega}^{\rm 2d} (h) = S_\omega^{\rm ext}(\L(h), h) =
\frac{1}{2} \int_\Sigma \big\langle{\mkern-5mu}\big\langle d_\Sigma h h^{-1}, \bm j^\ast \L(h) \big\rangle{\mkern-5mu}\big\rangle_{\omega}^{} - \frac{1}{2} \int_{\Sigma \times I} \widehat{h}^\ast \chi_{G^{\widehat{\bm z}}},
\end{equation}
where the first term in \eqref{Action sigma model} vanishes because $\bar\partial \L(h) =0$
by definition of the groupoid $\mathfrak{F}_{\rm Lax}(X) $ in \eqref{F_lax groupoid}.
We would like to emphasise that the action \eqref{eqn: 2d action} is for a 
$G^{\widehat{\bm z}}$-valued field $h$ living on the $2$-dimensional manifold $\Sigma$ 
and that it describes an integrable field theory with Lax connection $\L(h)$. 
Furthermore, the action $S_{\omega}^{\rm 2d}$ is by construction
gauge invariant under the morphisms of the groupoid $ \mathfrak{F}_{\rm 2d}(\Sigma)$ 
introduced in \eqref{F_2d groupoid}.

\begin{remark}\label{minimal}
There is a more minimalistic procedure for transferring the action
$S_\omega^{\rm ext}$ (cf.\ \eqref{4dCS map em}) on the subgroupoid $\mathfrak{F}_{\rm Lax}(X) \subset \mathfrak{F}(X)$ 
to an action $S_{\omega}^{\rm 2d}$ on $\mathfrak{F}_{\rm 2d}(\Sigma)$ 
along the functor $\pi : \mathfrak{F}_{\rm Lax}(X) \to \mathfrak{F}_{\rm 2d}(\Sigma)$ 
in \eqref{projection functor}, which only requires the latter 
to be essentially surjective and not necessarily full. 
This is based on the following observation. 
The datum of a gauge invariant action $S_\omega^{\rm 2d}$ 
on the groupoid $\mathfrak{F}_{\rm 2d}(\Sigma)$ 
is equivalent to the datum of a function $S_\omega^{\rm 2d}$ 
on the set $\pi_0(\mathfrak{F}_{\rm 2d}(\Sigma))$ 
of isomorphism classes of objects. 
Furthermore, essential surjectivity of the functor 
$\pi : \mathfrak{F}_{\rm Lax}(X) \to \mathfrak{F}_{\rm 2d}(\Sigma)$ 
is equivalent to surjectivity of the induced map 
$\pi: \pi_0(\mathfrak{F}_{\rm Lax}(X)) \to \pi_0(\mathfrak{F}_{\rm 2d}(\Sigma))$ 
between sets of isomorphism classes. 
Therefore, in order to transfer $S_\omega^{\rm ext}$ 
to $\mathfrak{F}_{\rm 2d}(\Sigma)$, 
we can choose a section $\sigma$ of the surjective map 
$\pi: \pi_0(\mathfrak{F}_{\rm Lax}(X)) \to \pi_0(\mathfrak{F}_{\rm 2d}(\Sigma))$ 
and define $S_\omega^{\rm 2d} \coloneqq S_\omega^{\rm ext} \circ \sigma$. 
More generally, we can choose a suitable measure $w$ 
on the set of sections $\sigma$ and define $S_\omega^{\rm 2d}$ as the 
$w$-average over all sections $\sigma$ of $S_\omega^{\rm ext} \circ \sigma$. 
(For a fixed section $\sigma$, the Dirac measure $w = \delta_\sigma$ 
recovers the construction considered previously in this remark.)
We stress, however, that this alternative construction of $S_\omega^{\rm 2d}$ 
in general depends on the choice of measure $w$ on the set of sections $\sigma$. 
Whenever $\pi$ is both essentially surjective and full, 
$\pi: \pi_0(\mathfrak{F}_{\rm Lax}(X)) \to \pi_0(\mathfrak{F}_{\rm 2d}(\Sigma))$ 
is actually bijective and hence $S_\omega^{\rm 2d} \coloneqq S_\omega^{\rm ext} \circ \pi^{-1}$ 
is uniquely determined (there is exactly one section $\sigma = \pi^{-1}$). 
In particular, the construction of 
$S_\omega^{\rm 2d}$ presented before this remark agrees with the one considered here.

When $\pi$ is essentially surjective but not full, however, it becomes more 
difficult to interpret the output of our construction as an integrable field 
theory since the candidate Lax connection $\L$ in general fails to be 
\emph{uniquely} determined by the field $h$ living on $\Sigma$. 
\end{remark}

\end{document}